\newcommand{\changed}[1]{\textcolor{blue}{#1}}
\newcommand{\power}[1]{\ensuremath{2^{#1}}}
\newcommand{\sema}[1]{\ensuremath{\llbracket}#1\ensuremath{\rrbracket}}
\newcommand{\Nat}{\ensuremath{\mathbb{N}}}
\newcommand{\Bool}{\ensuremath{\mathbb{B}}}
\newcommand{\Int}{\ensuremath{\mathbb{Z}}}
\newcommand{\Real}{\ensuremath{\mathbb{R}}}
\newcommand{\dom}{\ensuremath{\mathit{dom}}}
\newcommand{\correspond}{\ensuremath{~\widehat{=}~}}
\newcommand{\vars}{\ensuremath{\mathit{Vars}}}
\newcommand{\funcSymbols}{\ensuremath{\Sigma_F}}
\newcommand{\funcTerms}{\ensuremath{\mathcal{T}_F}}
\newcommand{\values}{\ensuremath{\mathcal{V}}}
\newcommand{\functions}{\ensuremath{\mathcal{F}}}
\newcommand{\assignment}{\ensuremath{\nu}}
\newcommand{\assignments}[1]{\ensuremath{\mathit{Assignments}}(#1)}
\newcommand{\interpretation}{\ensuremath{\mathcal I}}
\newcommand{\interpretations}[1]{\ensuremath{\mathit{Interpretations}}(#1)}
\newcommand{\assmt}[1]{\mathbf{#1}}
\newcommand{\eval}[1]{\ensuremath{\chi_{#1}}}
\newcommand{\FOL}[1]{\ensuremath{\mathit{FOL}}(#1)}
\newcommand{\FOLX}{\ensuremath{\mathit{FOL}}}
\newcommand{\QF}[1]{\ensuremath{\mathit{QF}}(#1)}
\newcommand{\QFX}{\ensuremath{\mathit{QF}}}
\newcommand{\FOLentailsT}[1]{\ensuremath{\models_{#1}}}
\newcommand{\FOLentails}{\ensuremath{\models}}
\newcommand{\plays}{\mathit{Plays}}
\newcommand{\last}{\mathit{last}}
\newcommand{\str}[2]{\mathit{Strat}_{#1}(#2)}
\newcommand{\reach}{\mathit{Reach}}
\newcommand{\safe}{\mathit{Safe}}
\newcommand{\buchi}{\mathit{Buchi}}
\newcommand{\cobuchi}{\mathit{coBuchi}}
\newcommand{\parity}{\mathit{Parity}}
\newcommand{\col}{\mathit{col}}
\newcommand{\Omegafin}{\Omega_{fin}}
\newcommand{\Inf}{\mathit{Inf}}
\newcommand{\Inv}{\ensuremath{\mathit{Inv}}}
\newcommand{\cells}{\ensuremath{\mathbb{X}}}
\newcommand{\inputs}{\ensuremath{\mathbb{I}}}
\newcommand{\states}{\ensuremath{\mathcal{S}}}
\newcommand{\symstates}{\ensuremath{\mathcal{D}}}
\newcommand{\guards}{\ensuremath{\mathit{Guards}}}
\newcommand{\sys}{\ensuremath{\mathit{Sys}}}
\newcommand{\env}{\ensuremath{\mathit{Env}}}
\newcommand{\rpg}{\ensuremath{\mathcal{G}}}
\newcommand{\loc}{\ensuremath{\mathit{loc}}}
\newcommand{\Labels}{\ensuremath{\mathit{Labels}}}
\newcommand{\cpre}[2]{\ensuremath{\mathit{CPre}_{#1}(#2)}}
\newcommand{\rpgs}{\ensuremath{\mathsf{RPGS}}}
\newcommand{\predss}{\ensuremath{\mathcal P_{\cells}}}
\newcommand{\predsi}{\ensuremath{\mathcal P}_{\cells \cup \inputs}}
\newcommand{\concretize}{\gamma}
\newcommand{\abstracts}{\alpha}
\newcommand{\overapproxp}{\ensuremath{\mathit{OverapproxP}}}
\newcommand{\livegroup}{\mathcal{H}}
\newcommand{\livegroupSingleN}{H}
\newcommand{\colivegroup}{D}
\newcommand{\safegroup}{U}
\newcommand{\edgeso}{E_p}
\newcommand{\strat}{\sigma}
\newcommand{\play}{\rho}
\newcommand{\src}{\textsc{src}}
\newcommand{\prune}{\ensuremath{\mathtt{prune}}}
\newcommand{\sink}{\ensuremath{\mathsf{sink}}}
\newcommand{\subgame}{\ensuremath{\mathsf{SubGame}}}
\newcommand{\spaths}{\ensuremath{\mathsf{SimplePaths}}}
\newcommand{\extend}{\ensuremath{\mathsf{extend}}}
\newcommand{\independent}{\ensuremath{\mathit{ind}}}
\newcommand{\indvars}{\ensuremath{\mathsf{IndependentVars}}}
\newcommand{\helpful}{\ensuremath{\mathsf{Helpful}}}
\newcommand{\pre}{\ensuremath{\mathsf{Pre}}}
\newcommand{\post}{\ensuremath{\mathsf{Post}}}
\newcommand{\targ}{\ensuremath{\mathit{targ}}}
\newcommand{\sorc}{\ensuremath{\mathit{src}}}
\newcommand{\toolname}{\mbox{\textsf{rpg-STeLA}}\xspace}
\newcommand{\rpgsolve}{\mbox{\textsf{rpgsolve}}\xspace}
\newcommand{\pestel}{\mbox{\textsf{PeSTel}}\xspace}
\newcommand{\muval}{\mbox{\textsf{MuVal}}\xspace}
\begin{document}

\title{Localized Attractor Computations for Infinite-State Games (Full Version)\thanks{Authors are ordered randomly, denoted by
\textcircled{r}. The publicly verifiable record of the randomization is available at \href{https://www.aeaweb.org/journals/policies/random-author-order/search?RandomAuthorsSearch\%5Bsearch\%5D=fKy1kA2NiEmL}{www.aeaweb.org}.
}}

\titlerunning{Localized Attractor Computations for Infinite-State Games (Full Version)}

\author{Anne-Kathrin Schmuck\inst{1}\orcidID{0000-0003-2801-639X} \textcircled{r}
Philippe Heim\inst{2}\orcidID{0000-0002-5433-8133} \textcircled{r}
Rayna Dimitrova\inst{2}\orcidID{0009-0006-2494-8690} \textcircled{r}
Satya Prakash Nayak\inst{1}\orcidID{0000-0002-4407-8681}
}

\authorrunning{A-K. Schmuck \textcircled{r} P. Heim \textcircled{r} R. Dimitrova \textcircled{r} S. P. Nayak}

\institute{Max Planck Institute for Software Systems (MPI-SWS), Kaiserslautern, Germany 
\email{\{akschmuck, sanayak\}@mpi-sws.org}
\and CISPA Helmholtz Center for Information Security, Saarbr\"ucken, Germany
\email{\{philippe.heim, dimitrova\}@cispa.de}
}

\maketitle

\begin{abstract}
Infinite-state games are a commonly used model for the synthesis of reactive systems with unbounded data domains. 
Symbolic methods for solving such games need to be able to construct intricate arguments to establish the existence of winning strategies.
Often,  large problem instances require prohibitively complex arguments.
Therefore, techniques that identify smaller and simpler sub-problems and exploit the respective results for the given game-solving task are highly desirable.

In this paper, we propose the first such technique for infinite-state games.  The main idea is to enhance symbolic game-solving with the results of localized attractor computations performed in sub-games.  The crux of our approach lies in identifying useful sub-games by computing permissive winning strategy templates in finite abstractions of the infinite-state game.
The experimental evaluation of our method demonstrates that it outperforms existing techniques and is applicable to infinite-state games beyond the state of the art.
\end{abstract}

\section{Introduction}\label{sec:intro}
\emph{Games on graphs} provide an effective way to formalize the automatic synthesis of \emph{correct-by-design} software in cyber-physical systems. 
The prime examples are algorithms that synthesize \emph{control software} to ensure high-level logical specifications in response to external environmental behavior. 
These systems typically operate over unbounded data domains. 
For instance, in smart-home applications \cite{sylla2018modular}, they need to regulate real-valued quantities like room temperature and lighting in response to natural conditions, day-time, or energy costs. 
Also, unbounded data domains are valuable for over-approximating large countable numbers of products in a smart manufacturing line \cite{gueye2018discrete}. 
The tight integration of many specialized machines
makes their efficient control 
challenging. 
Similar control synthesis problems occur in robotic warehouse systems \cite{amazonRobotics}, underwater robots for oil-pipe inspections \cite{UnderwaterRobots}, and electric smart-grid regulation \cite{masselot2016towards}.

Algorithmically, the outlined synthesis problems can be formalized via \emph{infinite-state games} that model the ongoing interaction between the system (with its to-be-designed control software) and its environment over their \emph{infinite} data domains.
Due to their practical relevance and their challenging complexity, there has been an increasing interest in \emph{automated techniques} for solving infinite-state games to obtain correct-by-design control implementations. 
As the game-solving problem is in general undecidable in the presence of infinite data domains, this problem is substantially more challenging than its finite-state counterpart.

\begin{wrapfigure}[19]{r}{0.44\textwidth}
\centering
\scalebox{0.67}{%
\begin{tikzpicture}[>=stealth']
 \draw[thick, line width=2pt,blue!30] (0.2,-0.2) rectangle (8,7);
    \node[anchor=south east,blue!50] at (8,-0.3) {\textsc{RPGCacheSolve} (Alg.~\ref{alg:rpg-enhanced}, Sec.~\ref{sec:full-solving})} ;

    \draw[thick, line width=2pt,blue!50] (0.4,0.3) rectangle (7.8,2);
    \node[anchor=south east,blue!80] at (7.8,0.3) {\textsc{RPGSolveWithCache}} ;

    \node[text width=5.5cm, text centered] at (3.6,1.2) {integrated within \textsc{RPGSolve} \cite{HeimD24} };
    \node[blue] at (3.6,1.6) {\textsc{AttractorAccCache} (Alg.~\ref{alg:use-cache}, Sec.~\ref{sec:using-cache})};

    \node[align=center,text width=5cm, text centered] (n1) at (3,7.5) {Reactive Program Game\\[-0.1cm] (Def.~\ref{def:reactive-program-games}, Sec.~\ref{sec:prelim})};
    \node[align=center,text width=5.5cm, text centered] (n2) at (3,5.6) {Finite-State Abstract Game\\[-0.1cm] (Def.~\ref{def:abstractgame}, Sec.~\ref{sec:template-subgame})};
    \node[align=center,text width=5cm, text centered] (n3) at (3,4.2) {Strategy Templates\\[-0.1cm] (Def.~\ref{def:winningStrategyTemplate}, Sec.~\ref{sec:prelim})};
    \node[align=center,text width=5cm, text centered] (n4) at (3,2.8) {Infinite-State Sub-Games\\[-0.1cm] (Def.~\ref{def:subgame}, Sec.~\ref{sec:subgame-cache})};

    \draw ($(n1.south)+(0,0.1)$) edge[-latex, line width=2pt] node[right,blue,yshift=-0.1cm] {\textsc{AbstractRPG} (Sec.~\ref{sec:template-subgame})} (n2);
    \draw (n2) edge[-latex, line width=2pt] node[right] {\textsc{SolveAbstract} \cite{AnandNS23}} (n3);
    \draw (n3) edge[-latex, line width=2pt] node[right,blue] {(Sec.~\ref{sec:template-subgame})} (n4);
    
    \draw (n4) edge[-latex, line width=2pt] node[right,blue,yshift=0.2cm] {\textsc{GenerateCache} (Sec.~\ref{sec:subgame-cache})} (3,1.8) ;
\draw[-latex,line width=2pt] ($(n1.south)-(0,0.3)$) -| ($(n3.south)-(2.5,0.3)$) -- ($(n3.south)-(0.2,0.3)$);
 \draw[-latex,line width=2pt] ($(n3.south)-(2.5,0.3)$) -- ($(n3.south)-(2.5,1.5)$) -- (2.5,1.9);      
\end{tikzpicture}}
\caption{Schematic paper outline; contributions highlighted in blue.}\label{fig:overview}
\end{wrapfigure}
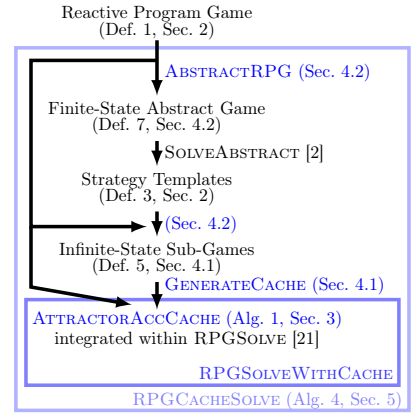

Within the literature\footnote{See \Cref{sec:related} for a detailed discussion of related work.}, there are two prominent directions to attack this problem.
One comprises \emph{abstraction-based approaches}, where either the overall synthesis problem (e.g.~\cite{HenzingerJM03,WalkerR14}) or the specification (e.g.\ \cite{FinkbeinerKPS19,ChoiFPS22,MaderbacherB22}) are abstracted, resulting in a finite-state game, to which classical techniques apply. 
The other one are \emph{constraint-based techniques}~\cite{FarzanK18,SamuelDK21, FaellaP23,SamuelDK23}, that work directly on a symbolic representation of the infinite-state game. 
Due to the undecidability of the overall synthesis problem, both categories are inherently constrained. 
While abstraction-based approaches are limited by the abstraction domain they employ, constraint-based techniques typically diverge due to non-terminating fixpoint computations. 

To address these limitations, a recent constraint-based technique called \emph{attractor acceleration}~\cite{HeimD24} employs \emph{ranking arguments} to improve the convergence of symbolic game-solving algorithms. While this technique has shown superior performance over the state-of-the art, 
the utilized ranking arguments become complex,  and thus difficult to find,  as the size of the games increases. 
This makes the approach from~\cite{HeimD24} infeasible in such cases, often resulting in divergence in larger and more complex games.

In this paper, we propose an approach to overcoming  the above limitation and thus extending the applicability of synthesis over infinite state games towards realistic applications.
The key idea is to utilize efficient abstraction-based pre-computations that  \emph{localize} attractor computations to \emph{small and useful sub-games}.
In that way, acceleration can be applied locally to small sub-games,  and the results utilized by the procedure for solving the global game.
This often avoids the computationally inefficient attractor acceleration over the complete game.
To \emph{guide} the identification of useful sub-games, our approach computes \emph{strategy templates} \cite{AnandNS23} -- a concise representation of a possibly infinite number of winning strategies -- in finite abstractions of the infinite-state game.
\Cref{fig:overview} shows an overview of our method which also serves as an outline of the paper.

Our experimental evaluation demonstrates the superior performance of our approach compared to the state of the art.
Existing tools fail on almost all benchmarks, while our implementation terminates within minutes.

To build up more intuition, 
we illustrate the main idea of our approach with the following example, which will also serve as our running example.

\begin{figure}[b!]
\begin{center}
\caption{%
A reactive program game for a sample-collecting robot with locations ${\mathit{base}, \mathit{move}, \mathit{mine}}$, integer-type program variables ${\mathit{pos}, \mathit{done}, \mathit{req}, \mathit{samp}}$ and input variable ${\mathit{inpReq}}$. 
We use the following abbreviations: $\mathit{enterBase} \correspond (\mathit{pos} = 12 \land \mathit{done} = 1)$,
$\mathit{atMine} \correspond (\mathit{pos} = 23)$,
$\mathit{haveSamples} \correspond (a > 0 \lor b > 0)$,
$\mathit{enough} \correspond \mathit{samp} \geq \mathit{req}$,
$\mathit{sampleA} \correspond (\mathit{samp} := \mathit{samp} + a)$,
$\mathit{sampleB} \correspond (\mathit{samp} := \mathit{samp} + b)$, and
$\mathit{sampleS} \correspond (\mathit{samp} := \mathit{samp} + 1)$.
In each round of the game,  the environment chooses a value for the input ${\mathit{inpReq}}$. Based on guards over program variables and inputs, the game transitions to a black square. The system then chooses one of the possible updates to the program variables,  thus determining the next location. 
}\label{fig:running}
\vspace{3mm}
\begin{tikzpicture}[->,>=stealth',shorten >=1pt,auto,node distance=2.5cm]
  \tikzstyle{every state}=[fill=none,draw=black,text=black,inner sep=1.5pt, minimum size=16pt,thick,scale=0.6]

    \node[state] (mov) at (4,0) {$\mathit{move}$};
    \node[state,double] (bas) at (0,0) {$\mathit{base}$};
    \node[state] (col) at (8,0) {$\mathit{mine}$};

    \node[fill=black,draw=black,minimum size=0.5pt] (eBM) at (+2,-1) {};
    \node[fill=black,draw=black,minimum size=0.5pt] (eMB) at (1, 0) {};
    \draw[rounded corners] (bas) -- ++(0, 1) -- node[above] {\scriptsize $\mathit{inpReq} \leq 0$} ++ (1,0) -- (eMB);
    \draw[rounded corners] (bas) -- ++(0,-1) -- node[below] {\scriptsize $\mathit{inpReq} > 0$} (eBM);
    
    \draw[rounded corners]
       (eBM) -- node[below,align=left] {\scriptsize $\mathit{req} := \mathit{inpReq}$\\\scriptsize $\mathit{done} := 0$} ++ (1.8,0) -- (mov);

    \node[fill=black,draw=black,minimum size=0.5pt] (eCC0) at (11,0.5) {};
    \node[fill=black,draw=black,minimum size=0.5pt] (eCC1) at (11,-0.5) {};
    \node[fill=black,draw=black,minimum size=0.5pt] (eCM) at ( 6,-1) {};

    \draw[rounded corners] (col) -- ++(-0.2,-1) -- node[below] {\scriptsize $\mathit{enough}$} (eCM);
    \draw[rounded corners] (eCM) -- node[below] {\scriptsize $\mathit{done} := 1$} ++ (-1.8,0) -- (mov);

    \draw[rounded corners] (col) -- ++(0.2,-1) -- node[below, align=left] {\scriptsize $\lnot \mathit{enough} \land \lnot \mathit{haveSamples}$} ++ (2.8,0) -- (eCC1);
    \draw[rounded corners] (eCC1) -- node[above] {\scriptsize $\mathit{sampleS}$} ++(-2.6,0) -- (col);

    \draw[rounded corners] (col) -- ++(0,1) -- node[above, align=left] {\scriptsize $\lnot \mathit{enough} \land \mathit{haveSamples}$} ++ (3,0) -- (eCC0);
    \draw[rounded corners] (eCC0) -- node[above] {\scriptsize $\mathit{sampleA}$} ++(-2.6,+0.0) -- (col);
    \draw[rounded corners] (eCC0) -- ++(0,-0.5) -- node[above] {\scriptsize $\mathit{sampleB}$} (col);

    \node[fill=black,draw=black,minimum size=0.5pt] (eMC) at (7, 0) {};
    \node[fill=black,draw=black,minimum size=0.5pt] (eMM) at (4, 1) {};

    \draw (mov) -- node[above] {\scriptsize $\mathit{atMine}$} (eMC);
    \draw (mov) -- node[above] {\scriptsize $\mathit{enterBase}$} (eMB);

    \draw[rounded corners] (eMB) -- node[above] {} (bas);
    \draw[rounded corners] (eMC) -- node[above] {} (col);

    \draw (mov) -- node[left,align=right] {\scriptsize $\top$} (eMM);
    \draw[rounded corners] (eMM) -- node[above] {\scriptsize $\mathit{pos} := \mathit{pos} + 1$} ++(+2,0) -- ++(-1.6,-0.5) -- (mov);
    \draw[rounded corners] (eMM) -- node[above] {\scriptsize $\mathit{pos} := \mathit{pos} - 1$} ++(-2,0) -- ++(+1.6,-0.5) -- (mov);
\end{tikzpicture}
\end{center}
\end{figure}

\begin{example}\label{ex:running}
    \Cref{fig:running} shows a reactive program game for a sample-collecting robot. The robot moves along tracks, and its position is determined by the integer program variable $\mathit{pos}$. 
    The robot remains in location $\mathit{base}$ until prompted by the environment  to collect $\mathit{inpReq}$ many samples. 
    It cannot return to $\mathit{base}$ until the required samples are collected, as enforced by the variable $\mathit{done}$. From the right position, it can enter the $\mathit{mine}$, where it must stay and collect samples from two sites, $a$ and $b$. However, it has to choose the correct site in each iteration, as they might not have samples all the time (if both do not have samples,  it can get one sample itself). Once enough samples are collected,  the robot can return to  $\mathit{base}$.
 The requirement on the robot's strategy is to be at \emph{base} infinitely often.

\emph{Attractor acceleration}~\cite{HeimD24} uses ranking arguments to establish that by iterating some strategy an unbounded number of times through some location, a player in the game can  enforce reaching a set of target states.
In this example, to reach $\mathit{samp} \geq \mathit{req}$ in location $\mathit{mine}$ (the target) the robot can iteratively increase the value of $\mathit{samp}$ by choosing the right updates (the iterated strategy).
This works, since if $\mathit{samp}$ is increased repeatedly, eventually $\mathit{samp} \geq \mathit{req}$ will hold (the ranking argument).
Establishing the existence of the iterated strategy (i.e. the robot can increment $\mathit{samp}$) is a game-solving problem,  since the behavior of the robot is influenced by the environment.
This game-solving problem potentially considers the whole game,  since the iterated strategy is not known a priori.
In addition, identifying locations where acceleration can be applied and finding the right ranking arguments is challenging.
This impacts the scalability and applicability of acceleration, making it infeasible for large games.

Consequently, our method aims to identify \emph{small and useful sub-games} and \emph{cache the results obtained by solving these sub-games}. 
In \Cref{ex:running}, a useful sub-game would be the game restricted to the $\mathit{mine}$ location with the target state $\mathit{samp} \geq \mathit{req}$.  Applying the acceleration technique to this sub-game,  provides the ranking argument  described earlier.
These cached results are then utilized to enhance the symbolic game-solving procedure for the entire game.

To identify these small and useful sub-games, we use \emph{permissive strategy templates}~\cite{AnandNS23} in finite-state abstracted games.
They describe a potentially infinite set of winning strategies using local conditions on the transitions of the game.  
These local conditions (in the abstract game) provide guidance about local behavior in the solution of the infinite-state game without solving it.
This local behavior (e.g. incrementing $\mathit{samp}$ in $\mathit{mine}$) induces our sub-games.

\end{example}

\section{Preliminaries}\label{sec:prelim}
\subsubsection{Sequences and First-Order Logic.}
For a set $V$, $V^*$ and $V^\omega$ denote the sets of finite, respectively infinite, sequences of elements of $V$, and let $V^\infty = V^* \cup V^\omega$.
For $\pi \in V^\infty$, we denote with $|\pi| \in \Nat \cup \{ \infty \}$ the length of $\pi$, and define $\dom(\pi):=\{0,\ldots, |\pi| - 1\}$. 
For $\pi = v_0v_1\ldots \in V^\infty$ and $i, j\in \dom(\pi)$ with $i<j$, we define $\pi[i]:=v_i$ and $\pi[i,j]:=v_i\ldots v_j$.
$\last(\pi)$ is the last element of a finite sequence $\pi$.\looseness=-1

Let $\values$ be the set of all values of arbitrary types, 
$\vars$ be the set of all variables,
$\functions$ be the set of all functions, and
$\funcSymbols$ be the set of all function symbols.
Let $\funcTerms$ be the set of all function terms defined by the grammar 
$\funcTerms \ni \tau_f ::= x \:|\: f(\tau_f^1, \dots \tau_f^n)$ for $f \in \funcSymbols$ and $x \in \vars$.
%
A function $\assignment: \vars \to \values$ is called an \emph{assignment}.
The set of all assignments over variables $X \subseteq \vars$ is denoted as $\assignments{X}$.
We denote the combination of two assignments $\assignment', \assignment''$ over disjoint sets of variables by $\assignment' \uplus \assignment''$.
%
A function $\interpretation: \funcSymbols  \to \functions$ is called an \emph{interpretation}.
The set of all interpretations is denoted as $\interpretations{\funcSymbols}$.
%
The evaluation of function terms $\eval{\assignment,\interpretation}: \funcTerms \to \values$ is defined by $\eval{\assignment,\interpretation}(x) := \assignment(x)$ for $x \in \vars$, $\eval{\assignment,\interpretation}(f(\tau_0, \dots \tau_n)) := \interpretation(f)(\eval{\assignment,\interpretation}(\tau_0), \dots \eval{\assignment,\interpretation}(\tau_n))$ for $f \in \funcSymbols$ and $\tau_0, \dots \tau_n \in \funcTerms$.
We denote the set of all first-order formulas as $\FOLX$ and by $\QFX$ the set of all quantifier-free formulas in $\FOLX$.
Let $\varphi$ be a formula and $X = \{x_1,\ldots,x_n\} \subseteq \vars$ be a set of variables.
We write $\varphi(X)$ to denote that the free variables of $\varphi$ are a subset of $X$.  
We also denote with $\FOL{X}$ and $\QF{X}$ the set of formulas (respectively quantifier-free formulas) whose free variables belong to $X$.
For a quantifier $Q \in \{\exists, \forall\}$, we write $Q X.\varphi$ as a shortcut for $Q x_1.\ldots Q x_n.\varphi$.  
We denote with $\FOLentails: \assignments{\vars} \times \interpretations{\funcSymbols} \times \FOLX$ the entailment of first-order logic formulas.
A \emph{first-order theory} $T \subseteq \interpretations{\funcSymbols}$ with $T \neq \emptyset$ restricts the possible interpretations of function and predicate symbols.
Given a theory $T$, for a formula $\varphi(X)$ and assignment $\assignment \in \assignments{X}$  we define that  
$\assignment \FOLentailsT{T} \varphi$ if and only if $\assignment, \interpretation \FOLentails \varphi$ for all 
$\interpretation \in T$.

For exposition on first-order logic and first-order theories, see c.f.~\cite{BradleyM07}.

\subsubsection{Two-Player Graph Games.}
A \emph{game graph} is a tuple $G = (V,V_\env,V_\sys,\rho)$ where $V = V_\env \uplus V_\sys$ are the vertices, partitioned between the environment player (\emph{player $\env$}) and the system player (\emph{player $\sys$}), and $\rho \subseteq (V_\env \times V_\sys) \cup (V_\sys \times V_\env)$ is the \emph{transition relation}.
A \emph{play} in $G$ is a sequence $\pi \in V^\infty$ where $(\pi[i],\pi[i+1])\in\rho$ for all $i \in \dom(\pi)$,  and if $\pi$ is finite then $\last(\pi)$ is a dead-end.\looseness=-1

For $p = \sys$ (or $\env$) we define $1 - p := \env$ (respectively $\sys$).
A \emph{strategy for player~$p$} is a partial function 
$\sigma: V^*V_{p} \to V$ where $\sigma(\pi\cdot v) = v'$ implies $(v,v') \in \rho$
and $\sigma$ is defined for all $\pi\cdot v \in V^*V_{p}$ unless $v$ is a dead-end.
$\str{p}{G}$ denotes the set of all strategies for player $p$ in $G$.
A play $\pi$ is \emph{consistent with $\sigma$ for player $p$} if $\pi[i+1] = \sigma(\pi[0,i])$ for every $i \in \dom(\pi)$ where $\pi[i] \in V_p$.
$\plays_G(v,\sigma)$ is the set of all plays in $G$ starting in $v$ and  consistent with strategy $\sigma$.
%

An \emph{objective} in $G$ is a set $\Omega \subseteq V^\infty$. A \emph{two-player turn-based game} is a pair $(G,\Omega)$, where $G$ is a game graph and $\Omega$ is an objective for player~$\sys$.
A sequence $\pi \in V^\infty$ is \emph{winning for player~$\sys$} if and only if $\pi \in \Omega$,  and is \emph{winning for player~$\env$} otherwise.  
We define different types of common objectives in \Cref{app:prelim}.
The \emph{winning region $W_p(G,\Omega)$ of player $p$ in $(G,\Omega)$} is the set of all vertices $v$ from which player $p$ has a strategy $\strat$ such that every play in $\plays_G(v,\sigma)$ is winning for player $p$. 
A strategy $\sigma$ of player $p$ is \emph{winning} if for every $v\in W_p(G,\Omega)$, every play in $\plays_G(v,\sigma)$ is winning for player~$p$.

\subsubsection{Acceleration-Based Solving of Infinite-State Games.}

We represent infinite-state games using the same formalism as~\cite{HeimD24}, called reactive program games. 
Intuitively,  reactive program games describe symbolically, using $\FOLX$ formulas and terms,  the possible interactions between the system player and the environment player in two-player games over infinite data domains.

\begin{definition}[Reactive Program Game Structure~\cite{HeimD24}]
\label{def:reactive-program-games}
A \emph{reactive program game structure} is a tuple 
$\mathcal G = (T,\inputs, \cells, L, \Inv,\delta)$ with the following components.
$T$ is a first-order theory.
$\inputs \subseteq \vars$ is a finite set of \emph{input variables}.
$\cells \subseteq \vars$ is a finite set of \emph{program variables} where 
$\inputs \cap \cells  = \emptyset$.         
$L$ is a finite set of \emph{game locations}.
$\Inv: L \to \FOL{\cells}$ maps each location to a \emph{location invariant}.
$\delta \subseteq L \times \QF{\cells \cup \inputs} \times (\cells \to \funcTerms) \times L$ is a finite \emph{symbolic transition relation} where

\begin{itemize}
    \item[(1)]  for every  $l \in L$ 
    the set of  \emph{outgoing transition guards}  $\guards(l) := \{ g \mid \exists u, l'.~(l, g, u , l') \in \delta \}$  is such that 
    $\bigvee_{g \in \guards(l)} g \equiv_{T} \top$,  and 
    for all $g_1, g_2 \in \guards(l)$ with $g_1 \neq g_2$ it holds that  $g_1 \land g_2 \equiv_{T} \bot$,
    \item[(2)] for all $l,g, u, l_1, l_2$,  if $(l, g, u, l_1) \in \delta$ and $(l, g, u, l_2) \in \delta$, then  $l_1 = l_2$, and
    \item[(3)] for every $l \in L$ and $\assmt{x} \in \assignments{\cells}$ such that $\assmt{x} \FOLentailsT{T} \Inv(l)$, and  $\assmt{i} \in \assignments{\inputs}$, there exist a transition $(l, g, u, l') \in \delta$ such that $\assmt{x} \uplus \assmt{i}\FOLentailsT{T} g$ and  $\assmt{x'} \FOLentailsT{T} \Inv(l')$ where $\assmt{x}'(x) =  \eval{\assmt{x}\uplus\assmt{i},\interpretation}(u(x))$ for all $x \in \cells$ and $\interpretation \in T$, and
    \item[(4)] for every $(l, g, u, l') \in \delta$, $f \in \funcSymbols(u)$, $\interpretation_1, \interpretation_2 \in T$ it holds that $\interpretation_1(f) = \interpretation_2(f)$.
\end{itemize}
\end{definition}
The requirements on  $\delta$ imply for each $l\in L$ that: 
(1) the guards in $\guards(l)$ partition the set $\assignments{\cells \cup \inputs}$,
(2) each pair of $g \in \guards(l)$ and update $u$  can label at most one outgoing transition from $l$,
(3) if there is an assignment satisfying the invariant at $l$, then for every input assignment  there is a possible transition, and
(4) the theory $T$ determines the meaning of functions in updates uniquely.
Given  locations $l,l' \in L$,  we define $\Labels(l,l'):=\{(g,u) \mid (l, g, u, l') \in \delta\}$ as the set of \emph{labels on transitions from $l $ to $l'$}.
We define as $\rpgs$ the set of all reactive program game structures.
The semantics of the reactive program game structure $\mathcal G$ is a (possibly infinite) game graph defined as follows.

\begin{definition}[Semantics of Reactive Program Game Structures]
\label{def:rpgs-semantics}
Let 
$\mathcal G = (T,\inputs, \cells, L, \Inv, \delta)$ be a reactive program game structure.
The semantics of $\mathcal G$ is the game graph 
$\sema{\mathcal G} = (\states, \states_\env,\states_\sys,\rho)$ where 
$\states := \states_\env\uplus\states_\sys$ and

    \begin{itemize}
        \item   $\states_\env := \{ (l,\assmt{x}) \in L \times \assignments{\cells} \mid \assmt{x} \FOLentailsT{T} \mathit{Inv}(l)\}$;
        \item   $\states_\sys :=\states_\env\times \assignments{\inputs}$;
        \item   $\rho \subseteq (\states_\env \times \states_\sys) \cup (\states_\sys \times \states_\env)$ is the smallest relation such that 
		\begin{itemize}
		\item $(s,(s,\assmt{i})) \in \rho$ for every $s \in \states_\env$ and $\assmt{i} \in \assignments{\inputs}$, 
		\item $(((l,\assmt{x}), \assmt{i}),(l',\assmt{x}')) \in \rho$ iff $\assmt{x}' \FOLentailsT{T} \mathit{Inv}(l')$ and there exists $(g,u) \in \Labels(l,l')$ such that $\assmt{x}\uplus\assmt{i} \FOLentailsT{T} g$,  $\assmt{x}'(x) =  \eval{\assmt{x}\uplus\assmt{i}, \interpretation}(u(x))$ for every $x \in \cells$ and $\interpretation \in T$.
        \end{itemize}
    \end{itemize}
\end{definition}

Note that this semantics differs from the original one in \cite{HeimD24} where the semantic game structure is not split into environment and system states.
We do that in order to consistently use the notion of a game graph.
Both semantics are equivalent.
We refer to the vertices of $\sema{G}$ as \emph{states}.
We define the function $\loc : \states \to L$ where $\loc(s) := l$ for any $s= (l,\assmt{x}) \in \states_\env$ and any $s = ((l,\assmt{x}),\assmt{i}) \in \states_\sys$.
By abusing notation,  we extend the function $\loc$ to sequences of states, defining $\loc : \states^\infty \to L^\infty$ where $\loc(\pi) = l_0l_1l_2\ldots$ iff $\loc(\pi[i]) = l_i$ for all $i \in \dom(\pi)$.
For simplicity of the notation, we write $W_p(\rpg,\Omega)$ instead of $W_p(\sema{\rpg},\Omega)$.
We represent and manipulate possibly infinite sets of states symbolically,  using formulas in \FOL{\cells} to describe sets of assignments to the variables in $\cells$.  
Our \emph{symbolic domain}  $\symstates := L \to \FOL{\cells}$ is the set of functions mapping locations to formulas in \FOL{\cells}. 
An element $d \in \symstates$ represents the states $\sema{d} : = \{((l,\assmt{x}) \in \states \mid  \assmt{x} \FOLentailsT{T}  d(l)\}.$
With $\{ l_1 \mapsto \varphi_1, \dots, l_n \mapsto \varphi_n \}$ we denote $d \in \symstates$ s.t.\ $d(l_i) = \varphi_i$ and $d(l) = \bot$ for $l \not\in \{l_1, \dots, l_n\}$.  
For brevity, we sometimes refer to elements of $\symstates$ as sets of states.

Note that the elements of the symbolic domain $\symstates$ represent subsets of $\states_\env$,  i.e.,  sets of environment states.  Environment states are pairs of location and valuation of the program variables. 
The system states, on the other hand,  correspond to intermediate configurations that additionally store the current input from the environment. This input is not stored further on (unless assigned to program variables). Thus,  we restrict the symbolic domain to environment states.

\paragraph{Solving Reactive Program Games.}
We consider \emph{objectives defined over the locations} of a reactive program game structure $\rpg$.
That is,  we require that if $\pi',\pi'' \in \states^\infty$ are such that $\loc(\pi') = \loc(\pi'')$, then $\pi' \in \Omega$ iff $\pi'' \in \Omega$. 
We consider the problem of solving reactive program games.
Given $\rpg$ and an objective $\Omega$ for Player~$\sys$ defined over the locations of $\rpg$, we want to compute $W_{\sys}(\sema{\mathcal G},\Omega)$.

\paragraph{Attractor Computation and Acceleration.}
A core building block of many algorithms for solving two-player games is the computation of \emph{attractors}.
Intuitively,  an attractor is the set of states from which a given player~$p$ can enforce reaching a given set of target states no matter what the other player does.
Formally, for a reactive program game structure $\rpg$, and $R \subseteq \states$ the 
\emph{player-$p$ attractor for $R$} is 
\centerline{$\mathit{Attr}_{\sema{\mathcal G},p}(R) := \{s \in \states \mid \exists \sigma \in \str{p}{\sema{\mathcal G}}.\forall \pi \in \plays_{\sema{\rpg}}(s,\sigma).\exists n\in \Nat.\;\pi[n]\in R\}.$}

In this work,  we are concerned with the symbolic computation of attractors in reactive program games. 
Attractors in reactive program games are computed using the so-called \emph{enforceable predecessor operator} over the symbolic domain $\symstates$.
For $d \in \symstates$, $\cpre{\rpg,p}{d} \in \symstates$ represents the states from which player $p$ can enforce reaching $\sema{d}$ in one step in $\rpg$ (i.e. one move by each player).  More precisely, \\
\centerline{$
\begin{array}{ll}
\sema{\cpre{\rpg,\sys}{d}} &= \{ s \in \states_\env \mid \forall s'. \; ((s,s') \in \rho) \rightarrow \exists s''.\; (s',s'') \in \rho \land  s ''\in \sema{d} \},~\text{and} \\
\sema{\cpre{\rpg,\env}{d}} &= \{s \in \states_\env \mid \exists s'. \; ((s,s') \in \rho) \land \forall s''.\; ((s',s'') \in \rho) \to s'' \in \sema{d}\}.\looseness=-1
\end{array}
$}
The player-$p$ attractor for $\sema{d}$ can be computed as a fixpoint of the enforceable predecessor operator: \\
\centerline{$
\begin{array}{ll}
\mathit{Attr}_{\sema{\mathcal G},p}(\sema{d}) \cap \states_\env &= \sema{\mu X. \;d \lor \cpre{\rpg,p}{X}},
\end{array}
$}
where $\mu$ denotes the least fixpoint.
Note that since $\states$ is infinite,  an iterative computation of the attractor is not guaranteed to terminate.

In~\Cref{ex:running},  consider the computation of the player-$\sys$ attractor for $\sema{d}$ where $d = \{ \mathit{base} \mapsto \top, \mathit{move} \mapsto \top, \mathit{mine} \mapsto \bot\}$.
Applying $\cpre{\rpg,\sys}{d}$ will produce $\{ \mathit{base} \mapsto \top, \mathit{move} \mapsto \top, \mathit{mine} \mapsto \mathit{samp} \geq \mathit{req}\}$ as in one step player-$\sys$ can enforce reaching $\mathit{move}$ if $\mathit{samp} \geq \mathit{req}$ in $\mathit{mine}$. 
Since in $\mathit{mine}$ the system player can enforce to increment $\mathit{samp}$ by at least one, a second iteration of $\cpre{\rpg,\sys}{\cdot}$ gives $\{\dots, \mathit{mine} \mapsto \mathit{samp} \geq \mathit{req} - 1\}$, a third  $\{\dots, \mathit{mine} \mapsto \mathit{samp} \geq \mathit{req} - 2\}$, and so on.
Thus,  a naive iterative fixpoint computation does no terminate here.
To avoid this non-termination, \cite{HeimD24} introduced \emph{attractor acceleration}. 
It will compute that, as explained in~\cref{sec:intro}, the fixpoint is indeed $\{\dots, \mathit{mine} \mapsto \top\}$.

\subsubsection{Permissive Strategy Templates.}

The main objective of this work is to identify small and useful sub-games,  for which the results can enhance the symbolic game-solving process. 
To achieve this, we use a technique called \emph{permissive strategy templates}~\cite{AnandNS23}, designed for finite game graphs. 
These templates can represent (potentially infinite) sets of winning strategies through local edge conditions. 
This motivates our construction of sub-games based on templates in \cref{sec:template-subgame}.
\looseness=-1


These strategy templates are structured using three local edge conditions: \emph{safety}, \emph{co-live}, and \emph{live-group} templates. 
Formally, given a game $(G,\Omega)$ with $G=(V,V_\env, V_\sys,\rho)$ and $\edgeso=\rho\cap(V_p\times V_{p-1})$, a \emph{strategy template for player $p$} is a tuple $(\safegroup,\colivegroup,\livegroup)$ consisting of a set of \emph{unsafe} edges $\safegroup\subseteq \edgeso$, a set of \emph{co-live} edges $\colivegroup\subseteq \edgeso$, and a set of live-groups $\livegroup \subseteq 2^{\edgeso}$. 
A strategy template $(\safegroup,\colivegroup,\livegroup)$ represents the set of plays $\Psi = \Psi_\safegroup \cap \Psi_\colivegroup \cap \Psi_\livegroup \subseteq \mathit{Plays}(G)$, where
\begin{align*}
    \Psi_\safegroup  &:=  \{ \pi \mid \forall i.~(\pi[i], \pi[i+1]) \not\in \safegroup \}, \ \ 
    \Psi_\colivegroup  :=  \{ \pi \mid \exists k.~\forall i > k.~(\pi[i], \pi[i+1]) \not\in \colivegroup \}, \\
    \Psi_\livegroup &:=  \bigcap_{\livegroupSingleN\in\livegroup} \{ \pi \mid (\forall i.~\exists j > i.~ \pi[j] \in \src(\livegroupSingleN)) \to (\forall i.~\exists j > i.~ (\pi[j], \pi[j+1]) \in \livegroupSingleN) \},
\end{align*}
where $\src(\livegroupSingleN)$ contains the sources $\{u \mid (u,v)\in \livegroupSingleN\}$ of the edges in $\livegroupSingleN$. 
A strategy $\strat$ for player $p$ \emph{satisfies} a strategy template $\Psi$ if it is winning in the game $(G,\Psi)$ for player $p$.
Intuitively, $\strat$ satisfies a strategy template if every play $\pi$ consistent with $\sigma$ for player $p$ is contained in $\Psi$, that is, (i) $\pi$ never uses the unsafe edges in~$\safegroup$ (i.e., $\pi\in\Psi_\safegroup$), (ii)  $\pi$ stops using the co-live edges in~$\colivegroup$ eventually (i.e., $\pi\in\Psi_\colivegroup$), and (iii) for every live-group $\livegroupSingleN\in\livegroup$, if $\play$ visits $\src(\livegroupSingleN)$ infinitely often, then it also uses the edges in $\livegroupSingleN$ infinitely often (i.e., $\pi\in\Psi_\livegroup$).
Strategy templates can be used as a concise representation of winning strategies as formalized next.

\begin{definition}[Winning Strategy Template~\cite{AnandNS23}]\label{def:winningStrategyTemplate}
    A strategy template $\Psi$ for player $p$ is \emph{winning} if every strategy satisfying $\Psi$ is winning for $p$ in $(G,\Omega)$.
\end{definition}
We note that the algorithms for computing winning strategy templates in safety, Büchi, co-Büchi, and parity games, presented in~\cite{AnandNS23}, exhibit the same worst-case computation time as standard methods for solving such (finite-state) games.

\section{Attractor Computation with Caching}\label{sec:using-cache}
As outlined in~\cref{sec:intro},  the core of our method consists of the pre-computation of attractor sets for local sub-games and the utilization of the results in the attractor computations performed when solving  the complete reactive program game. 
We call the pre-computed results \emph{attractor cache}.
We use the cache during attractor computations to \emph{directly add} to the computed attractor sets of states from which, \emph{based on the pre-computed information},  the respective player can enforce reaching  the current attractor subset.
In that way,  if the local attractor computation requires acceleration,  we can avoid performing the acceleration during the attractor  computation for the overall game. 
This section presents the formal definition of an attractor cache and shows how it is used.

Intuitively, an attractor cache is a finite set of tuples or \emph{cache entries} of the form $(\rpg, p, \mathit{src}, \mathit{targ}, \cells_\independent)$. 
$\rpg$ is a reactive program game structure and $p$ the player the cache entry applies to.
The sets of states $\mathit{src}, \mathit{targ} \in \symstates$ are related via  enforceable reachability: player~$p$ can enforce reaching $\sema{\mathit{targ}}$ from $\sema{\mathit{src}}$ in $\rpg$. 
$\cells_\independent$ are the so-called \emph{independent variables} -- the enforcement relation must  hold independently of and preserve the values of $\cells_\independent$.
Independent variables are useful when a cache entry only concerns a part of the game structure where these variables are irrelevant. 
This allows the utilization of the cache entry under different conditions on those variables.
We formalize this intuition in the next definition.\looseness=-1

\begin{definition}[Attractor Cache]\label{defn:cache}
    A finite set $C \subseteq \rpgs \times \{\sys,\env\} \times \symstates \times \symstates \times \power{\cells}$ is called an \emph{attractor cache} if and only if for all $(\rpg, p, \mathit{src}, \mathit{targ}, \cells_\independent) \in C$ and all $\varphi \in \FOL{\cells_\independent}$ it holds that  
    $ \sema{\mathit{src} \land\lambda l. ~\varphi} \subseteq \mathit{Attr}_{\sema{\mathcal G},p}(\sema{\mathit{targ} \land \lambda l. ~\varphi}). $
\end{definition}

We use the \emph{lambda abstraction} $\lambda l. ~\varphi$ to denote the anonymous function that maps each location in $L$ to the formula $\varphi$.

\begin{example}\label{ex:cache}
Recall the game from~\Cref{ex:running}.
From every state with location $\mathit{mine}$,  player $\sys$ can enforce eventually reaching $\mathit{samp} \geq \mathit{req}$ by choosing at every step the update that increases variable $\mathit{samp}$.
As this argument only concerns location $\mathit{mine}$, the program variables $\mathit{done}$ and $\mathit{pos}$ are independent.
Since it is not updated,  $\mathit{req}$ is also independent (we prove this in the next section).
Hence, 
$C_\mathit{ex} = \{ (\rpg_\mathit{ex}, \mathit{Sys}, \mathit{src}, \mathit{targ}, \cells_\independent) \}$ where  $\rpg_\mathit{ex}$ is from~\Cref{fig:running},
$\mathit{src} = \{ \mathit{mine} \mapsto \top\}$, 
$\mathit{targ} = \{ \mathit{mine} \mapsto \mathit{samp} \geq \mathit{req}\}$, and $\cells_\independent = \{\mathit{done}, \mathit{pos}, \mathit{req} \}$ is an attractor cache.
\end{example}

\Cref{alg:use-cache} shows how we use an attractor cache to enhance accelerated attractor computations.
\textsc{AttractorAccCache} extends the procedure~\textsc{AttractorAcc} for accelerated symbolic attractor computation presented in~\cite{HeimD24}.
\textsc{AttractorAccCache} takes a cache  as an additional argument and at each iteration of the attractor computation checks if some cache entry is applicable.
For each such cache entry, if $\sema{\mathit{targ}}$ is a subset of $\sema{a^n}$, we can add $\mathit{src}$ to $a^n$ since we know that $\mathit{targ}$ is enforceable from $\mathit{src}$. 
However, $a^n$ may constrain the values of $\cells_\independent$ making this subset check fail unnecessarily. 
Therefore,  \textsc{StrengthenTarget} computes a formula $\varphi\in\FOL{\cells_\independent}$ such that $\mathit{targ}$ strengthened with $\varphi$ is a subset of $a^n$. 
Intuitively, $\varphi$ describes the values of the independent variables that remain unchanged in the cached attractor.
Note that $\varphi$ always exists as we could pick $\bot$, which we have to do if $\mathit{targ}$ is truly \emph{not} a subset of $a^n$.

The next lemma formalizes this intuition and the correctness of \textsc{AttractorAccCache} under the above condition on \textsc{StrengthenTarget}.
Note that since the cache is used in the context of attractor computation,  the objective $\Omega$ of the reactive program game is not relevant here. 

\SetAlgoSkip{0}
\SetKwComment{Comment}{/* }{ */}
\begin{algorithm}[t!]
\SetAlgoVlined
    \SetKwProg{Fn}{function}{}{}
    \DontPrintSemicolon
    \Fn{\textsc{AttractorAccCache}(
		 $\mathcal G$,  
    	 $\mathit{p} \in \{\sys,\env\}$,  
		 $d \in \symstates$, $C$: cache)}{
		\nl $a^0$ := $\lambda l.~\bot$;
        $a^1$ := $d$\;
        \nl \For{$n=1,2,\ldots$}{
			\nl \lIf{$a^{n} \equiv_{T} a^{n-1}$}{\Return $a^n$}
            \ForEach{$(\rpg', p', \mathit{src}, \mathit{targ}, \cells_\independent) \in C$ with $\rpg' = \rpg$ and $p'=p$ }{
            $\varphi : = \textsc{StrengthenTarget}(\mathit{targ},\cells_\independent,a^n)$\;\label{line:strengthen-cache}
            $a^n := a^n \;\lor\;(\mathit{src} \land (\lambda l.~\varphi))$\label{line:use-cache}
            }
			$a^n := a^n \;\lor\; \mathsf{Accelerate}(\mathcal{G}, p, l, a^n)$ \Comment{$ \mathsf{Accelerate}(...)$ is the result of applying attractor acceleration as in~\cite{HeimD24}}\label{line:accelerate}
			\nl $a^{n+1} := a^n \lor \cpre{\mathcal G,p}{a^n}$\label{line:cpre}\; 
			} 		
	}
\caption{Attractor computation using an attractor cache.}
\label{alg:use-cache}
\end{algorithm}

\begin{restatable}[Correctness of Cache Utilization]{lemma}{restateCacheUtilization}\label{lem:cache-utilization}
Let $\mathcal{G}$ be a reactive program game structure,  $p \in \{\sys,\env\}$, $d \in \symstates$ and $C$ be an attractor cache.
Furthermore, suppose that for every $\mathit{targ}\in\symstates$, $a \in \symstates$  and every $\cells_\independent \subseteq \cells$ it holds that if $\textsc{StrengthenTarget}(\mathit{targ}, \cells_\independent, a) = \varphi$,  then $\varphi \in \FOL{\cells_\independent}$ and $\sema{\mathit{targ} \land \lambda l. ~\varphi} \subseteq \sema{a}$.
Then,  if the procedure $\textsc{AttractorAccCache}(\mathcal G, p, d, C)$ terminates returning $\mathit{attr}\in \symstates$, then it holds that $\sema{\mathit{attr}} = \mathit{Attr}_{\sema{\mathcal G},p}(\sema{d}) \cap \states_\env$. 
\end{restatable}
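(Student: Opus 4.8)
The plan is to prove soundness (every state added to the accumulator genuinely lies in the attractor) and completeness (the fixpoint on termination captures the entire attractor restricted to $\states_\env$), by comparing the run of \textsc{AttractorAccCache} with the plain accelerated computation \textsc{AttractorAcc} from~\cite{HeimD24} and with the monotone $\cpre$-fixpoint characterization of the attractor stated in the preliminaries.

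\textbf{Soundness (the $\subseteq$ direction).} I would prove by induction on $n$ the invariant $\sema{a^n} \subseteq \mathit{Attr}_{\sema{\mathcal G},p}(\sema{d}) \cap \states_\env$, tracking every assignment to $a^n$ inside the loop body. The base case $a^1 = d$ is immediate, since $\sema{d} \cap \states_\env \subseteq \mathit{Attr}_{\sema{\mathcal G},p}(\sema{d})$ and elements of $\symstates$ denote subsets of $\states_\env$ by construction. For the inductive step there are three kinds of update. (i) Line~\ref{line:use-cache}: by the hypothesis on \textsc{StrengthenTarget} we have $\varphi \in \FOL{\cells_\independent}$ and $\sema{\mathit{targ} \land \lambda l.\,\varphi} \subseteq \sema{a^n}$; the attractor-cache property (\Cref{defn:cache}) gives $\sema{\mathit{src} \land \lambda l.\,\varphi} \subseteq \mathit{Attr}_{\sema{\mathcal G},p}(\sema{\mathit{targ} \land \lambda l.\,\varphi})$; since $\sema{a^n}$ is already inside $\mathit{Attr}_{\sema{\mathcal G},p}(\sema{d})$, and the attractor operator is idempotent and monotone ($\mathit{Attr}_p(\mathit{Attr}_p(S)) = \mathit{Attr}_p(S)$ and $S\subseteq S'\Rightarrow \mathit{Attr}_p(S)\subseteq\mathit{Attr}_p(S')$), we get $\sema{\mathit{src} \land \lambda l.\,\varphi} \subseteq \mathit{Attr}_{\sema{\mathcal G},p}(\sema{d})$, so the disjunction stays inside the attractor. (ii) Line~\ref{line:accelerate}: correctness of $\mathsf{Accelerate}$ is exactly the soundness of attractor acceleration from~\cite{HeimD24}, which guarantees $\sema{\mathsf{Accelerate}(\mathcal G,p,l,a^n)} \subseteq \mathit{Attr}_{\sema{\mathcal G},p}(\sema{a^n}) \subseteq \mathit{Attr}_{\sema{\mathcal G},p}(\sema{d})$, using idempotence again. (iii) Line~\ref{line:cpre}: by definition of $\cpre{\mathcal G,p}{\cdot}$, any state in $\sema{\cpre{\mathcal G,p}{a^n}}$ can enforce reaching $\sema{a^n}\subseteq\mathit{Attr}_{\sema{\mathcal G},p}(\sema{d})$ in one round, hence lies in $\mathit{Attr}_{\sema{\mathcal G},p}(\sema{d})$. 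Combining, every disjunct added in iteration $n$ is in the attractor, so the invariant propagates; in particular the returned value $\mathit{attr}$ satisfies $\sema{\mathit{attr}} \subseteq \mathit{Attr}_{\sema{\mathcal G},p}(\sema{d}) \cap \states_\env$.

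\textbf{Completeness (the $\supseteq$ direction).} Here I would argue that the cache and acceleration only \emph{add} states, so \textsc{AttractorAccCache} still overtakes the naive $\cpre$-iteration. Concretely, let $X^0 = \lambda l.\,\bot$ and $X^{k+1} = d \lor \cpre{\mathcal G,p}{X^k}$ be the iterates of the least-fixpoint characterization, whose supremum (over all ordinals, or over $\Nat$ in the terminating case) denotes $\mathit{Attr}_{\sema{\mathcal G},p}(\sema{d}) \cap \states_\env$. I claim $\sema{X^n} \subseteq \sema{a^n}$ for all $n$, by induction: $X^1 = d = a^1$; and assuming $\sema{X^n}\subseteq\sema{a^n}$, monotonicity of $\cpre{\mathcal G,p}{\cdot}$ plus the fact that the loop body only weakens $a^n$ (each of lines~\ref{line:use-cache}, \ref{line:accelerate}, \ref{line:cpre} is a disjunction onto $a^n$) gives $\sema{X^{n+1}} = \sema{d}\cup\sema{\cpre{\mathcal G,p}{X^n}} \subseteq \sema{a^n}\cup\sema{\cpre{\mathcal G,p}{a^n}} \subseteq \sema{a^{n+1}}$. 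Now suppose the procedure terminates at iteration $N$, returning $\mathit{attr} = a^N$ with $a^N \equiv_T a^{N-1}$. Then $a^N$ is a post-fixpoint of $d \lor \cpre{\mathcal G,p}{\cdot}$ — indeed, after line~\ref{line:cpre} we would form $a^{N} \lor \cpre{\mathcal G,p}{a^{N}}$, and the stabilization test forces this to be $T$-equivalent to $a^N$, hence $\sema{d} \cup \sema{\cpre{\mathcal G,p}{a^N}} \subseteq \sema{a^N}$. Since the least fixpoint is contained in every post-fixpoint (Knaster–Tarski), $\mathit{Attr}_{\sema{\mathcal G},p}(\sema{d}) \cap \states_\env = \sema{\bigvee_k X^k} \subseteq \sema{a^N} = \sema{\mathit{attr}}$. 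Together with soundness this yields the claimed equality.

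\textbf{The main obstacle} is getting the bookkeeping around the in-place mutation of $a^n$ exactly right: within a single iteration $a^n$ is reassigned several times (once per applicable cache entry, then by acceleration, then by $\cpre$), so both the soundness invariant and the $\sema{X^n}\subseteq\sema{a^n}$ comparison must be re-established after each reassignment rather than once per iteration. The clean way to handle this is to treat the \textbf{foreach} block and lines~\ref{line:accelerate}–\ref{line:cpre} as a finite sequence of monotone weakening steps and observe that (a) "membership in $\mathit{Attr}_{\sema{\mathcal G},p}(\sema{d})$'' is preserved by each such step (this is where cache validity, acceleration soundness, and $\cpre$-soundness each do their job, all glued by monotonicity and idempotence of $\mathit{Attr}_p$), and (b) each step only enlarges the denotation, so the lower bound $\sema{X^n}\subseteq\sema{a^n}$ is never lost. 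A minor subtlety worth a sentence: one must note that $\varphi \in \FOL{\cells_\independent}$ ensures $\lambda l.\,\varphi$ is a well-defined element of $\symstates$ so that the conjunction in line~\ref{line:use-cache} and the application of \Cref{defn:cache} are legitimate, and that when $\mathit{targ}$ is genuinely not below $a^n$ the fallback $\varphi = \bot$ makes the added disjunct $\sema{\mathit{src}\land\lambda l.\,\bot} = \emptyset$, so correctness is trivially preserved in that case.
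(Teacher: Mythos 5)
Your soundness argument is the same induction the paper uses: the invariant $\sema{a^n} \subseteq \mathit{Attr}_{\sema{\mathcal G},p}(\sema{d}) \cap \states_\env$ is re-established after each kind of update, with the cache step handled exactly as you do --- via the hypothesis on $\textsc{StrengthenTarget}$, \Cref{defn:cache}, and monotonicity/idempotence of the attractor, and the other two steps discharged by the soundness of acceleration from~\cite{HeimD24} and the definition of $\cpre{\mathcal G,p}{\cdot}$. Where you go beyond the paper is the $\supseteq$ direction: the paper's proof stops after the soundness induction and declares the lemma established, leaving the completeness half implicit, whereas you make it explicit by observing that every update only enlarges $\sema{a^n}$, that the stabilization test forces the returned value to be a post-fixpoint of $X \mapsto d \lor \cpre{\mathcal G,p}{X}$, and that Knaster--Tarski then places the least fixpoint (which the preliminaries identify with $\mathit{Attr}_{\sema{\mathcal G},p}(\sema{d}) \cap \states_\env$) inside the returned set. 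That addition is correct and arguably needed for the stated equality. One small imprecision worth fixing: the test compares $a^{N}$ with $a^{N-1}$, not $a^{N}$ with $a^{N} \lor \cpre{\mathcal G,p}{a^{N}}$; but since $a^{N} = a^{N-1} \lor \cpre{\mathcal G,p}{a^{N-1}}$ and $a^{N} \equiv_{T} a^{N-1}$, one gets $\sema{\cpre{\mathcal G,p}{a^{N}}} = \sema{\cpre{\mathcal G,p}{a^{N-1}}} \subseteq \sema{a^{N}}$ together with $\sema{d}\subseteq\sema{a^{N}}$, which is exactly the post-fixpoint property you need, so the conclusion stands.
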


We realize $\textsc{StrengthenTarget}(\mathit{targ},\cells_\independent,a)$ 
such that it returns the formula  $\bigwedge_{l \in L} \big(\forall (\cells \backslash \cells_\independent).~\mathit{targ}(l) \to  a(l)\big)$
which satisfies the condition in \Cref{lem:cache-utilization}.

\begin{example}
Recall the game from~\Cref{ex:running} and the cache $C_\mathit{ex}$ from~\Cref{ex:cache}.
Suppose that we are computing the attractor for player $\sys$ to $d = \{ \mathit{base} \mapsto \top \}$, i.e. $\textsc{AttractorAccCache}(\rpg_\mathit{ex}, \sys, d, C_\mathit{ex})$ without acceleration,  i.e.,  $\mathsf{Accelerate}$ returns $\bot$ in line~\ref{line:accelerate} in \Cref{alg:use-cache}.
Initially,  $a^1 = \{  \mathit{base} \mapsto \top\}$.
After one iteration of applying $\mathit{Cpre}$, we get $a^2 = \{\mathit{base} \mapsto \top, \mathit{move} \mapsto \mathit{pos} = 12 \land \mathit{done} = 1\}$. 
Then we get $a^3 = \{ \dots, \mathit{mine} \mapsto \mathit{pos} = 12 \land \mathit{samp} \geq \mathit{req}\}$. 
In the only entry of $C_\mathit{ex}$,  the target set $\mathit{targ} = \{ \mathit{mine} \mapsto \mathit{samp} \geq \mathit{req} \}$ contains more states in $\mathit{mine}$ (i.e., all possible positions of the robot) then $a^3$ (which asserts $\mathit{pos} = 12$). However, $\textsc{StrengthenTarget}(\mathit{targ}, \cells_\independent, a^3)$ as implemented above,  will return the strengthening $\mathit{pos} = 12$ (after simplifying the formula), which makes the cache entry  with $\mathit{targ}$ applicable.
Since $\mathit{src} = \{ \mathit{mine} \mapsto \top \}$, we  update $a^3$ to $\{ \dots, \mathit{mine} \mapsto \mathit{pos} = 12\}$ in line~\ref{line:use-cache} of the algorithm.
\end{example}

\section{Abstract Template-Based Cache Generation}\label{sec:computing-cache}
\Cref{sec:using-cache} defined attractor caches and showed their utilization for attractor computations via \Cref{alg:use-cache}. 
We motivated this approach by the observation that there often exist \emph{small local sub-games} that entail essential attractors,  and pre-computing these attractors within the sub-games, caching them and then using them via \Cref{alg:use-cache} is more efficient then only applying acceleration over the entire game (as in \cite{HeimD24}). To formalize this workflow, \Cref{sec:subgame-cache} explains the generation of cache entries from sub-game structures of the given reactive program game,  and \Cref{sec:template-subgame} discusses the identification of helpful sub-game structures via permissive strategy templates in finite-state  abstractions of the given game.\looseness=-1

\subsection{Generating Attractor Caches from Sub-Games}\label{sec:subgame-cache}
%
Within this subsection, we consider a sub-game structure $\rpg'$ which is induced by a subset of locations $L_\mathit{sub} \subseteq L$ of the original reactive program game structure $\rpg$, as formalized next. Intuitively, we remove all locations from $\rpg$ not in $L_\mathit{sub}$ and redirect their incoming transitions to a new sink location $\sink_\mathit{sub}$.

\begin{definition}[Induced Sub-Game Structure]\label{def:subgame}
   Let $\rpg = (T,\inputs, \cells, L, \Inv,\delta)$ be a reactive program game structure and let
   $L_\mathit{sub} \subseteq L$ be a set of locations.  
   The \emph{sub-game structure induced  by $L_\mathit{sub}$} is the reactive program game structure\\
   $\subgame(\rpg,L_\mathit{sub}) := (T,\inputs,\cells',L',\Inv',\delta')$ where 
   $L' := L_\mathit{sub} \cup \{\sink_{\mathit{sub}}\}$, \\
   $\cells' : = \{ x \in \cells \mid x~\text{appears in transitions from or invariants of } L_{\mathit{sub}} \text{ in } \rpg' \}$,\\
   $\Inv'(l) := \Inv(l)$ for all $l \in L' \setminus  \{\sink_{\mathit{sub}}\} $ and $\Inv'(\sink_{\mathit{sub}}) := \top$, and\newline
    $\begin{array}{lll}
   \delta' & := & \{(l,g,u,l') \in \delta \mid l,l' \in L' \} \cup 
                           \{(\sink_{\mathit{sub}},\top,\lambda x.\;x,\sink_{\mathit{sub}})\} \cup \\&&
                           \{(l,g,\lambda x.x,\sink_{\mathit{sub}}) \mid \exists l' \in L.\; (l,g,u,l') \in \delta \land l\in L' \land l' \not\in L'\}.
   \end{array}$
\end{definition}
Recall that $\symstates= L \to \FOL{\cells}$.  
Let $\symstates' := L' \to \FOL{\cells'}$ be the symbolic domain for a sub-game structure with locations $L'$.  
As $\cells' \subseteq \cells$, $\FOL{\cells'} \subseteq  \FOL{\cells}$ which allows us to extend each element of $\symstates'$ to an element of $\symstates$ that agrees on $L'$.
Formally,  we define $\extend_{L} : \symstates'  \to \symstates$ such that for $d'\in\symstates'$ and $l \in L$ we have 
$\extend_{L} (d')(l) := \texttt{if } l \in L' \texttt{ then }d'(l) \texttt{ else } \bot$.

\begin{algorithm}[b!]
\SetAlgoVlined
    \SetKwProg{Fn}{function}{}{}
    \DontPrintSemicolon
    \Fn{\textsc{SubgameCache}($\rpg$,  $p$, $L_\mathit{sub}$, $d \in \symstates$)}{
        $\rpg' = (T,\inputs,\cells',L',\Inv',\delta') := \subgame(\rpg,L_\mathit{sub})$\;\label{line:subgame-line}
        $d' := \lambda l.~\texttt{if } l \in L_\mathit{sub}\texttt{ then } \mathsf{QElim} (\exists (\cells \setminus \cells'). d(l)) \texttt{ else } \bot$\label{line:qelim-sgc}\;
        $a:= \textsc{AttractorAcc}(\rpg', p,d')$ \label{line:subgame-attractor}\Comment{attractor computation from~\cite{HeimD24}}
	    $\cells_\independent:=\indvars(\rpg,\rpg')$\;
        \Return $\{(\rpg,p,\extend_L(a),\extend_L(d'), \cells_\independent)\}$\;\label{line:subgame-output}
    }
\caption{Cache generation based on an induced sub-game.}
\label{alg:subgame-cache}
\end{algorithm}

\smallskip
The computation of an attractor cache from an induced sub-game is detailed in \Cref{alg:subgame-cache}. Given a reactive program game structure $\rpg$, a player $p$, and a subset of locations $L_\mathit{sub}$, \Cref{alg:subgame-cache} first computes the induced sub-game (line \ref{line:subgame-line}). 
The quantifier elimination (\cite[Ch.\ 7]{BradleyM07}) $\mathsf{QElim}$ in line~\ref{line:qelim-sgc}  projects the given $d \in \symstates$ to an element $d'$ of the symbolic domain $\symstates'$ of the sub-game structure.
Then, in line~\ref{line:subgame-attractor}, we perform the accelerated attractor computation from \cite{HeimD24} with target set $d'$   to obtain the set of states $a$ from which player $p$ can \emph{enforce} reaching $d'$ in $\rpg'$.
The independent variables are those variables in $\cells$ that are not updated in any of the transitions in $\rpg'$. Formally,  we define those as $\indvars(\rpg,\rpg'):=\{x \in \cells \mid \forall (l, g, u, l') \in \delta'.~u(x)=x\}$.
In order to output an attractor cache for the original game $\rpg$, we extend the computed source and target sets $a$ and $d'$ via the previously defined function $\extend_{L}$ (line \ref{line:subgame-output}). 
Intuitively, the attractor computed over a sub-game $\rpg'$ is also an attractor for the overall game $\rpg$ as  sub-games are only restricted by location (not by variables). Hence, player $p$ can also enforce reaching the target set in the original game $\rpg$, if he can do so in $\rpg'$. 
This is formalized by the next lemma. 

\begin{restatable}{lemma}{restateSubgameCache}\label{lemma:subgame-cache}
 Let $\rpg = (T,\inputs, \cells, L, \Inv,\delta)$ be a reactive program game structure, 
and let $\rpg'= (T,\inputs,\cells',L',\Inv',\delta')$ be an induced sub-game structure with sink location $\sink_{\mathit{sub}}$ constructed as above. 
Let $\sorc',\targ' \in \symstates'$ be such that $\targ'(\sink_{\mathit{sub}}) = \bot$ and $\sema{\sorc'} \subseteq \mathit{Attr}_{\sema{\mathcal G'},p}(\sema{\targ'})$ for some player $p\in\{\sys,\env\}$. 
Furthermore, let $\mathbb{Y} \subseteq \indvars(\rpg,\rpg')$.
Then,  for every $\varphi \in \FOL{\mathbb{Y} }$ it holds that\newline
\centerline{$\sema{\extend_L(\sorc') \land  \lambda l.\varphi } \subseteq 
\mathit{Attr}_{\sema{\mathcal G},p}(\sema{\extend_L(\targ') \land  \lambda l.\varphi} ).$}
\end{restatable}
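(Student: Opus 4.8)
The plan is to show that a winning strategy for player $p$ in the sub-game $\rpg'$ witnessing $\sema{\sorc'} \subseteq \mathit{Attr}_{\sema{\mathcal G'},p}(\sema{\targ'})$ can be transferred to a strategy in the full game $\rpg$ that enforces reaching $\sema{\extend_L(\targ') \land \lambda l.\varphi}$ from $\sema{\extend_L(\sorc') \land \lambda l.\varphi}$. The key structural observations are: (i) the sub-game $\rpg'$ is obtained from $\rpg$ purely by deleting locations outside $L_\mathit{sub}$ and rerouting their incoming edges to the absorbing sink $\sink_\mathit{sub}$, so on the locations in $L_\mathit{sub}$ the transition relations of $\rpg$ and $\rpg'$ agree; (ii) since $\targ'(\sink_\mathit{sub}) = \bot$, any play of $\rpg'$ that reaches $\sema{\targ'}$ has stayed inside $L_\mathit{sub}$ up to that point, hence is also a legal play prefix of $\rpg$; and (iii) the variables in $\mathbb{Y} \subseteq \indvars(\rpg,\rpg')$ are never updated by any transition of $\delta'$, so their values are invariant along any play that stays within $L_\mathit{sub}$.

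First I would fix $\varphi \in \FOL{\mathbb{Y}}$ and a state $s \in \sema{\extend_L(\sorc') \land \lambda l.\varphi}$. By definition of $\extend_L$, this forces $\loc(s) \in L_\mathit{sub}$ and the program-variable assignment of $s$, restricted to $\cells'$, satisfies $\sorc'(\loc(s))$; it also satisfies $\varphi$. Since $\cells' \subseteq \cells$, $s$ (restricted appropriately) is a state of $\sema{\rpg'}$ lying in $\sema{\sorc'}$. Next I would take the strategy $\sigma'$ for player $p$ in $\rpg'$ that enforces reaching $\sema{\targ'}$ from $\sema{\sorc'}$, and lift it to a strategy $\sigma$ in $\rpg$: as long as the play has remained inside $L_\mathit{sub}$, the current history is also a valid $\rpg'$-history (using observation (i) for the transition steps and noting the input choices are available in both games since $\inputs$ is unchanged), so $\sigma$ copies the move of $\sigma'$; this is well-defined precisely because on $L_\mathit{sub}$-states the two game graphs have identical outgoing moves. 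The core claim is then that every play $\pi \in \plays_{\sema{\rpg}}(s,\sigma)$ reaches $\sema{\extend_L(\targ') \land \lambda l.\varphi}$. I would argue this by relating $\pi$ to the corresponding play $\pi'$ of $\rpg'$ under $\sigma'$ from the same start: as long as $\pi$ has not left $L_\mathit{sub}$, it coincides with $\pi'$; since $\sigma'$ is winning, $\pi'$ reaches $\sema{\targ'}$ at some finite index $n$, and because $\targ'(\sink_\mathit{sub}) = \bot$, $\pi'[0,n]$ (hence $\pi[0,n]$) stays inside $L_\mathit{sub}$ — the play cannot have branched off via a rerouted-to-sink edge before $n$. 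Therefore $\pi[n] = \pi'[n] \in \sema{\targ'}$, and $\loc(\pi[n]) \in L_\mathit{sub}$, so by definition of $\extend_L$ we get $\pi[n] \in \sema{\extend_L(\targ')}$. Finally, since every transition taken along $\pi[0,n]$ belongs to $\delta'$ and fixes each variable in $\mathbb{Y} \subseteq \indvars(\rpg,\rpg')$, the $\mathbb{Y}$-values in $\pi[n]$ equal those in $s$, which satisfied $\varphi$; as $\varphi \in \FOL{\mathbb{Y}}$ depends only on those variables, $\pi[n] \models \varphi$, giving $\pi[n] \in \sema{\extend_L(\targ') \land \lambda l.\varphi}$ as required.

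The main obstacle I expect is the careful bookkeeping in observation (i)–(ii): one must verify that copying $\sigma'$ is legitimate at \emph{every} prefix that stays in $L_\mathit{sub}$, including at system states where the environment has just supplied an input — here one uses that $\inputs$ is the same in both structures and that Definition \ref{def:subgame} keeps every $\delta$-transition with both endpoints in $L'$ untouched, so the enforceable moves of player $p$ coincide. The second delicate point is arguing that the play cannot reach the sink "too early": this is exactly where $\targ'(\sink_\mathit{sub}) = \bot$ is used — without it, $\pi'$ could satisfy $\targ'$ at a sink state while $\pi$ in the real game has gone somewhere entirely different. Once these two points are nailed down, the invariance of $\mathbb{Y}$ and the final conjunction with $\varphi$ are routine from the definition of $\indvars$ and the semantics of $\land$ and $\extend_L$.
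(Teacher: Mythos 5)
Your proof is correct, but it is organized differently enough from the paper's that a comparison is worthwhile. The paper factors the claim through an intermediate game structure $\rpg''$ (the locations of $\rpg'$ but all variables of $\rpg$) and proves three separate properties: (1) attractors transfer from $\rpg'$ to $\rpg''$; (2) for $\varphi \in \FOL{\mathbb{Y}}$, $\mathit{Attr}_{\sema{\rpg''},p}(\sema{d}) \cap \sema{\lambda l.\varphi} \subseteq \mathit{Attr}_{\sema{\rpg''},p}(\sema{d \land \lambda l.\varphi})$, proved by \emph{transfinite induction on the attractor's fixpoint approximants}, using that the enforceable-predecessor operator commutes with intersection by $\sema{\lambda l.\varphi}$; and (3) attractors with sink-free targets transfer from $\rpg''$ to $\rpg$ via a mimicking strategy. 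You collapse all three into a single direct strategy-transfer argument and, in particular, replace the transfinite induction of step (2) by the play-level observation that the $\mathbb{Y}$-values are constant along $\pi[0,n]$ because every transition up to the first target hit fixes $\indvars(\rpg,\rpg') \supseteq \mathbb{Y}$. This is valid and arguably more elementary; what the paper's decomposition buys is a statement about attractor \emph{sets} (Property~2) that is independent of any particular witnessing strategy, plus a cleaner handling of the $\cells \setminus \cells'$ bookkeeping via $\rpg''$. Two details you should make explicit: states of $\sema{\rpg'}$ carry assignments over $\cells'$ only, so ``$\pi$ coincides with $\pi'$'' must mean ``projects to $\pi'$ on $\cells'$'', and the legality of this projection rests on the fact that guards, invariants, and update terms of $L_\mathit{sub}$-transitions mention only $\cells' \cup \inputs$ by the definition of $\cells'$; and $\sigma'$ may prescribe a move into $\sink_{\mathit{sub}}$ on some histories (e.g.\ after the target has already been visited), a move with no canonical counterpart in $\rpg$, so $\sigma$ must fall back to an arbitrary legal move there --- exactly as the paper does when defining its mimicking strategy. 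Neither point affects the soundness of your argument.
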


This results in  the following correctness statement.

\begin{restatable}{lemma}{restateCorrectnessSubgameCache}\label{lem:correctness-subgame-cache}
   $\textsc{SubgameCache}(\rpg, p, L_\mathit{sub}, d)$ returns an attractor cache over $\rpg$.
\end{restatable}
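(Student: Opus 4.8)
\textbf{Proof plan for \Cref{lem:correctness-subgame-cache}.}
The plan is to unfold the definition of \textsc{SubgameCache} (\Cref{alg:subgame-cache}) and verify that its single output tuple $(\rpg, p, \extend_L(a), \extend_L(d'), \cells_\independent)$ satisfies the defining condition of an attractor cache from \Cref{defn:cache}. By that definition, since the returned set is a singleton, it suffices to show that for every $\varphi \in \FOL{\cells_\independent}$ we have $\sema{\extend_L(a) \land \lambda l.\,\varphi} \subseteq \mathit{Attr}_{\sema{\mathcal G},p}(\sema{\extend_L(d') \land \lambda l.\,\varphi})$. This is almost verbatim the conclusion of \Cref{lemma:subgame-cache}, so the proof is essentially a matter of checking that the hypotheses of \Cref{lemma:subgame-cache} are met by the objects constructed in the algorithm.

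The key steps, in order: First, $\rpg' = \subgame(\rpg, L_\mathit{sub})$ is by construction an induced sub-game structure with sink location $\sink_\mathit{sub}$, exactly as required. Second, I set $\sorc' := a$ and $\targ' := d'$, where $d'$ is the element of $\symstates'$ built in line~\ref{line:qelim-sgc} and $a = \textsc{AttractorAcc}(\rpg', p, d')$ is computed in line~\ref{line:subgame-attractor}. I need to check two things about these. (i) $\targ'(\sink_\mathit{sub}) = \bot$: this holds because $d'$ is defined by the \texttt{if } $l \in L_\mathit{sub}$ \texttt{ then }$\ldots$\texttt{ else }$\bot$ branch, and $\sink_\mathit{sub} \notin L_\mathit{sub}$ by construction of the induced sub-game. (ii) $\sema{\sorc'} \subseteq \mathit{Attr}_{\sema{\mathcal G'},p}(\sema{\targ'})$: this is the correctness of the accelerated attractor computation \textsc{AttractorAcc} from~\cite{HeimD24}, namely that when it terminates it returns (a symbolic representation of) the player-$p$ attractor to its target set in $\rpg'$; in particular the attractor set contains the target. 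Third, $\cells_\independent = \indvars(\rpg,\rpg')$ by line~4, so taking $\mathbb{Y} := \cells_\independent = \indvars(\rpg,\rpg')$ satisfies the hypothesis $\mathbb{Y} \subseteq \indvars(\rpg,\rpg')$ of \Cref{lemma:subgame-cache}. Finally, applying \Cref{lemma:subgame-cache} with these choices yields exactly the inclusion required by \Cref{defn:cache} for every $\varphi \in \FOL{\cells_\independent}$, and since $C = \{(\rpg, p, \extend_L(a), \extend_L(d'), \cells_\independent)\}$ is finite, it is an attractor cache over $\rpg$.

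I expect no substantial obstacle here, as the lemma is essentially a packaging of \Cref{lemma:subgame-cache} into the form of \Cref{defn:cache}. The one point deserving a sentence of care is the reliance on termination: \textsc{AttractorAcc} is not guaranteed to terminate in general, so the statement is implicitly conditional on line~\ref{line:subgame-attractor} returning a value; when it does, its output is the genuine attractor, which is what \Cref{lemma:subgame-cache} needs. A secondary, purely bookkeeping point is to note that $d'$ indeed lies in $\symstates' = L' \to \FOL{\cells'}$: the \texttt{else }$\bot$ branch handles $\sink_\mathit{sub}$, and on $L_\mathit{sub}$ the formula $\mathsf{QElim}(\exists (\cells \setminus \cells').\, d(l))$ has free variables contained in $\cells'$ by construction, so the extension $\extend_L(d')$ and the application of \Cref{lemma:subgame-cache} are well-typed. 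With these observations in place the proof is complete.
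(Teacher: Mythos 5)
Your proposal is correct and follows essentially the same route as the paper's proof: both reduce the claim to checking that the single returned tuple meets \Cref{defn:cache}, verify the hypotheses of \Cref{lemma:subgame-cache} (soundness of \textsc{AttractorAcc}, $d'(\sink_\mathit{sub})=\bot$, and $\cells_\independent=\indvars(\rpg,\rpg')$), and then apply that lemma. Your added remarks on termination and well-typedness are sound but not needed beyond what the paper records.
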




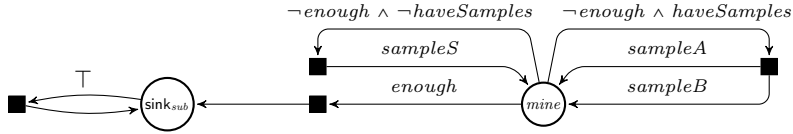
\begin{figure}[b!]
\begin{center}
\caption{Induced sub-game structure  $\subgame(\rpg_\mathit{ex},\{\mathit{mine}\})$ of  the reactive program game structure $\rpg_\mathit{ex}$ from \Cref{fig:running},  with the same abbreviations as in~\Cref{fig:running}.}\label{fig:subgame}
\begin{tikzpicture}[->,>=stealth',shorten >=1pt,auto,node distance=2.5cm]
  \tikzstyle{every state}=[fill=none,draw=black,text=black,inner sep=1.5pt, minimum size=16pt,thick,scale=0.65]

    \node[state] (mov) at (3,0) {$\sink_{\mathit{sub}}$};
    \node[state] (col) at (8,0) {$\mathit{mine}$};


    \node[fill=black,draw=black,minimum size=0.5pt] (sink) at (1,0) {};
    \draw (mov) edge[bend right = 10] node[above] {$\top$} (sink);
    \draw (sink) edge[bend right = 10] node {} (mov);

    \node[fill=black,draw=black,minimum size=0.5pt] (eCC0) at (11,0.5) {};
    \node[fill=black,draw=black,minimum size=0.5pt] (eCC1) at (5, 0.5) {};
    \node[fill=black,draw=black,minimum size=0.5pt] (eCM) at ( 5, 0) {};

    \draw[rounded corners] (col) -- node[above] {\scriptsize $\mathit{enough}$} (eCM);
    \draw[rounded corners] (eCM) -- node[above] {} (mov);

    \draw[rounded corners] (col) -- ++(-0.2,1) -- node[above, align=left,xshift=-5pt] {\scriptsize $\lnot \mathit{enough} \land \lnot\mathit{haveSamples}$} ++ (-2.8,0) -- (eCC1);
    \draw[rounded corners] (eCC1) -- node[above] {\scriptsize $\mathit{sampleS}$} ++(+2.6,+0.0) -- (col);

    \draw[rounded corners] (col) -- ++(0.2,1) -- node[above, align=left,xshift=5pt] {\scriptsize $\lnot \mathit{enough} \land \mathit{haveSamples}$} ++ (2.8,0) -- (eCC0);
    \draw[rounded corners] (eCC0) -- node[above] {\scriptsize $\mathit{sampleA}$} ++(-2.6,+0.0) -- (col);
    \draw[rounded corners] (eCC0) -- ++(0,-0.5) -- node[above] {\scriptsize $\mathit{sampleB}$} (col);

\end{tikzpicture}
\end{center}
\end{figure}

\begin{example}\label{ex:subgame}
Consider the reactive program game structure $\rpg_\mathit{ex}$ from  \Cref{ex:running}. 
We apply $\textsc{SubgameCache}(\rpg_\mathit{ex}, \sys, \{\mathit{mine}\}, d)$ with $d =\{\mathit{mine} \mapsto \mathit{samp} \geq \mathit{req} \land \mathit{pos} = 12 \land \mathit{done} \neq 1 \}$.
First, we construct the induced sub-game structure in~\Cref{fig:subgame}.
Quantifier elimination  produces the target set $d' = \{ \mathit{mine} \mapsto \mathit{samp} \geq \mathit{req} \}$.
If we compute the attractor in this sub-game to set $d'$, we get $\{ \mathit{mine} \mapsto \top \}$.
Note that since the number of steps needed to reach $d'$ depends on the initial value of $\mathit{samp}$ and is hence unbounded,  a technique like acceleration~\cite{HeimD24} is necessary to compute this attractor.
As in this sub-game structure only the variable $\mathit{samp}$ is updated, the independent variables are $\cells_\independent = \{\mathit{done}, \mathit{pos}, \mathit{req} \}$.
With this we get the cache entry from~\Cref{ex:cache}.
\end{example}

\subsection{Constructing Sub-Games from Abstract Strategy Templates}\label{sec:template-subgame}
The procedure from the previous subsection yields attractor caches regardless of how the sub-games are chosen.
In this section we describe our approach to identifying ``useful'' sub-game structures.
These sub-game structures are induced by so-called \emph{helpful edges} determined by permissive strategy templates. 
Since the game graph described by a reactive program game structure is in general infinite, we first construct finite abstract games in which we compute permissive strategy templates for the two players. 
We start by describing the abstract games.

\subsubsection{Finite Abstractions of Reactive Program Games.}
Here we describe the construction of a game graph 
$\widehat G=(V, V_\env, V_\sys, \widehat\rho)$ from a reactive program game structure 
$\rpg = (T,\inputs, \cells, L, \Inv,\delta)$ with semantics 
$\sema{\rpg} = (\states, \states_\env,\states_\sys, \rho)$. 
While $\sema{\rpg}$ is also a game graph, its vertex set is typically infinite. 
The game graph $\widehat G$, which is an abstraction of $\sema{\rpg}$, has a finite vertex set instead. 

We construct the game graph $\widehat G$ from $\rpg$ by performing abstraction with respect to a given abstract domain. 
The abstract domain consists of two finite sets of quantifier-free first-order formulas which are used to define the vertex sets of the game graph $\widehat G$.
The conditions that we impose in the definition of abstraction domain given below ensure that it can partition the state space of $\rpg$.\looseness=-1

\begin{definition}[Game Abstraction Domain]
\label{def:abstraction-domain}
A \emph{game abstraction domain} for a reactive program game structure $\rpg =  (T,\inputs, \cells, L, \Inv,\delta)$ is a pair of finite sets of quantifier-free first-order formulas $(\predss,\predsi) \in \QF{\cells} \times \QF{\cells \cup \inputs}$ such that for $\mathcal{P} = \predss$ (resp.~$\mathcal{P}= \predsi$) and $V = \cells$ (resp.~$V = \cells \cup \inputs$), $\mathcal{P}$ partitions $\assignments{V}$, i.e. 
$\assignments{V} = \bigcup_{\varphi \in \mathcal{P}}\{ \assmt{v}  \mid \assmt{v} \FOLentailsT{T} \varphi \}$ 
and for every $\varphi_1,\varphi_2 \in \mathcal{P}$ with $\varphi_1\land \varphi_2$ satisfiable it holds that $\varphi_1 = \varphi_2$.
\end{definition}

The abstraction domain we use consists of all conjunctions of atomic predicates (and their negations) that appear in the guards of the reactive program game structure $\rpg$. 
Let $\mathit{GA}$ be the set of atomic formulas appearing in the guards of $\rpg$. 
We use the abstraction domain $\mathsf{AbstractDomain}(\rpg):=(\predss^{\mathit{GA}},\predsi^{\mathit{GA}})$ where 
\centerline{$
\begin{array}{ll}
\predss^{\mathit{GA}} & := \{ \bigwedge_{\varphi \in J}\varphi  \wedge 
\bigwedge_{\varphi \not\in J}  \neg\varphi \mid J \subseteq \mathit{GA} \cap \FOL{\cells}\}, \\
\predsi^{\mathit{GA}} & := \{ \bigwedge_{\varphi \in J}\varphi  \wedge 
\bigwedge_{\varphi \not\in J}  \neg\varphi \mid J \subseteq \mathit{GA} \cap (\FOL{\cells\cup\inputs}\setminus \FOL{\cells})\}. 
\end{array}
$
}

\begin{example}\label{ex:abstractiondomain}
In the game structure $\rpg_\mathit{ex}$ from \Cref{ex:running}, we get for 
$\predss^{\mathit{GA}}$ all combinations of $\varphi_1 \land \varphi_2 \land \varphi_3$, where $\varphi_1 \in \{\mathit{req} < \mathit{samp}, \mathit{req} \geq \mathit{samp} \}$, $\varphi_2 \in \{ \mathit{pos} = 12, \mathit{pos} = 23, \mathit{pos} \neq 12 \land \mathit{pos} \neq 23\}$, and $\varphi_3 \in \{ \mathit{task} = 1, \mathit{task} \neq 1 \}$.
For $\predsi^{\mathit{GA}}$ we get all combinations of $\psi_1 \land \psi_2 \land \psi_3$, where $\psi_1 \in \{a \leq 0, a > 0\}$, $\psi_2 \in \{ b \leq 0, b > 0\}$, and $\psi_3 \in \{\mathit{inpReq} \leq 0 , \mathit{inpReq} > 0 \}$.
\end{example}

We choose this abstraction domain as a baseline since the predicates appearing in the guards are natural delimiters in the program variable state space.
However, the abstraction we define now is independent of this specific domain.

Given a game abstraction domain $(\predss,\predsi)$,  we construct two abstract game graphs,   
$\widehat{G}^\uparrow$ and $\widehat{G}^\downarrow$.
They have the same sets of vertices but differ in the transition relations. 
The transition relation in $\widehat{G}^\uparrow$ overapproximates the transitions originating from states of player $\sys$  and underapproximates the transitions from states of player $\env$.  
In $\widehat{G}^\downarrow$ the approximation of the two players is reversed.

\begin{definition}[Abstract Game Graphs]\label{def:abstractgame}
Let $\rpg =  (T,\inputs, \cells, L, \Inv,\delta)$ be a reactive program game structure and 
 $(\predss,\predsi)$ be an abstraction domain. 
The game graphs $\widehat{G}^\circ=(V, V_\env, V_\sys, \widehat\rho^\circ)$ with $\circ\in\{\uparrow,\downarrow\}$ are the $(\predss,\predsi)$-induced abstractions of $\rpg$ if 
$V := V_\sys \cup V_\env$, $V_\env := L \times \predss$, and $V_\sys := L \times \predss \times \predsi  $; and 
$\widehat\rho^\circ \subseteq (V_\env \times  V_\sys) \cup (V_\sys  \times V_\env)$ is the smallest relation such that\\
\begin{inparaitem}
\item $((l, \varphi), (l, \varphi, \varphi_I)) \in \widehat\rho^\circ \cap (V_\env \times  V_\sys)$   iff the following formula is valid\\
\centerline{$\mathit{move}^\bullet_\cells(\Inv(l) \land \varphi(\cells), \exists \inputs.~ \varphi_I(\cells, \inputs))$}
\hspace{0.4cm}for $\bullet = \downarrow$ if $\circ = \uparrow$ and  $\bullet = \uparrow$ if $\circ = \downarrow$,\\
\item  $((l, \varphi, \varphi_I), (l', \varphi')) \in\widehat \rho^\circ \cap (V_\sys \times V_\env)$ iff the following formula is valid\\
\centerline{$\mathit{move}^\circ_{\cells \cup \inputs}\left(\Inv(l) \land \varphi(\cells) \land \varphi_I(\cells, \inputs), \exists (g , u) \in \mathit{Labels}(l, l').~ \mathit{trans}(g,u,l',\varphi')\right)$}
\hspace{0.4cm}for $\mathit{trans}(g,u,l',\varphi'):=g(\cells, \inputs) \land \big(\varphi'\land \Inv(l')\big)(u(\cells, \inputs))$,\\
\end{inparaitem}
where
$\mathit{move}^\uparrow_{V}(\varphi, \varphi')  :=  \exists V. \varphi(V) \land \varphi'(V)$ and
$\mathit{move}^\downarrow_{V}(\varphi, \varphi') :=  \forall V. \varphi(V) \to \varphi'(V)$.
\end{definition}

\Cref{def:abstractgame} provides us with a procedure \textsc{AbstractRPG} for constructing the pair of abstractions $(\widehat G^\uparrow,\widehat G^\downarrow) := \textsc{AbstractRPG}(\rpg,(\predss,\predsi))$.

We refer to the vertices in the abstract game graphs as abstract states.
By slightly overloading notation, we define the projection from abstract states $v\in V$ to the respective location by $\loc: V \to L$ s.t. $\loc((l,\varphi)) =l$ and $\loc((l,\varphi,\varphi_I)) =l$.
This definition naturally extends to sequences of abstract states 
$\pi\in V^\infty$ s.t.\ $\loc(\widehat \pi)[i] = \loc(\widehat \pi[i])$ for all $i \in dom(\pi)$,  and to sets of vertices: $\loc: 2^V \to 2^L$. 
%
Given  $\rpg $ with semantics $\sema{\rpg} = (\states, \states_\env,\states_\sys, \rho)$ and an abstract game graph $\widehat G^\circ=(V, V_\env, V_\sys, \widehat\rho^\circ)$, we define the following functions between their respective state spaces.
The \emph{concretization function} $\concretize: V \to 2^\states$  is defined s.t.\ \\
\centerline{$
\begin{array}{ll}
\concretize((l,\varphi)) &:= \{ (l, \assmt{x}) \in \states_\env \mid \assmt{x}  \FOLentailsT{T} \varphi \}~\text{and}\\
\concretize((l,\varphi,\varphi_I)) &:= \{ ((l, \assmt{x}),\assmt{i}) \in \states_\sys \mid \assmt{x} \uplus \assmt{i} \FOLentailsT{T} \varphi \land \varphi_I \}.
\end{array}
$}
The \emph{abstraction function} $\abstracts: \states \to 2^V$ is defined s.t.\ $v\in\abstracts(s)$ iff $s\in\concretize(v)$. 
We extend both function from states to (finite or infinite) state sequences 
$\pi\in \states^\infty$ and $\widehat\pi\in  V^\infty$ s.t. \\
\centerline{$
\begin{array}{ll}
\concretize(\widehat \pi) &:=\{ \pi \in \states^\infty  \mid |\pi|=|\widehat\pi| \land\forall i\in dom(\pi).~\pi[i] \in \concretize(\widehat \pi[ i])\},~\text{and}\\
\abstracts( \pi) &:=\{ \widehat\pi \in V^\infty  \mid |\pi|=|\widehat\pi| \land \forall i\in dom(\pi).~\widehat\pi[i] \in \abstracts(\pi[ i])\}.
\end{array}
$}
%
Both functions naturally extend to \emph{sets} of states or infinite sequences of states 
by letting 
$\concretize(A) : =  \bigcup_{a\in A} \concretize(a)$ 
for $A \subseteq V$ and $A\subseteq V^\omega$
and
$\abstracts(C) : =  \bigcup_{c\in C} \abstracts(c)$ 
for $C \subseteq \states$ and $C\subseteq \states^\omega$. 
Note that it follows from the partitioning conditions imposed on $(\predss,\predsi)$ in \Cref{def:abstraction-domain} that $\alpha$ is a total function and always maps states and state sequences to a singleton set. We abuse notation and write $\abstracts(s) = v$ (resp. $\abstracts(\pi) = \widehat\pi$) instead of $\abstracts(s) = \{v\}$ (resp. $\abstracts(\pi) = \{\widehat\pi\}$).

Let $\Omega \subseteq \states^\infty$ be an objective for the semantic game $\sema{\rpg}$.
With the relational functions $\langle\alpha,\gamma\rangle$ defined before,  $\Omega$ naturally induces an abstract objective $\widehat{\Omega} := \abstracts(\Omega)\subseteq V^\infty$ over the abstract state space $V$. 

Recall that we consider winning conditions $\Omega \subseteq \states^\infty$ for $\sema{\rpg}$ defined over the set $L$ of locations of $\rpg$.
As $\alpha$ preserves the location part of the states, $\widehat \pi \in \widehat\Omega$ iff $\concretize(\widehat \pi) \subseteq \Omega$.
That is, a sequence of abstract states is winning according to $\widehat \Omega$ iff all the corresponding concrete state sequences are winning according to $\Omega$.

The next lemma states the correctness property that the abstraction satisfies. 
More concretely,  $\widehat{G}^\uparrow$ overapproximates the winning region of player~$\sys$ in the concrete game, and $\widehat{G}^\downarrow$ underapproximates it.

\begin{restatable}[Correctness of the Abstraction]{lemma}{restateCorrectnessAbstraction}\label{lemma:abstraction}
Given a reactive program game structure $\rpg $ with semantics $\sema{\rpg} = (\states, \states_\env,\states_\sys, \rho)$ and location-based objective $\Omega$, let $\widehat G^\circ=(V, V_\env, V_\sys, \widehat\rho^\circ)$ with $\circ\in\{\uparrow, \downarrow\}$ be its $(\predss,\predsi)$-induced abstractions with relational functions $\langle\abstracts,\concretize\rangle$. Then it holds that 
(1) $W_{\sys}(\sema{\rpg},\Omega) \subseteq \concretize(W_{\sys}(\widehat{G}^\uparrow,\widehat\Omega))$, and
(2)   $\concretize(W_{\sys}(\widehat{G}^\downarrow,\widehat\Omega)) \subseteq W_{\sys}(\sema{\rpg},\Omega)$. 
\end{restatable}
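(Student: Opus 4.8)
The plan is to prove both inclusions by establishing a simulation relation between the concrete game $\sema{\rpg}$ and the abstract games $\widehat G^\circ$, and then transferring winning strategies across the abstraction via the relational functions $\langle\abstracts,\concretize\rangle$. The key structural fact to isolate first is the following pair of simulation properties, which are exactly what the $\mathit{move}^\uparrow/\mathit{move}^\downarrow$ choices in \Cref{def:abstractgame} encode. For $\widehat G^\uparrow$: (a) for every concrete system state $s \in \states_\sys$ and every concrete successor $s' \in \states_\env$ with $(s,s')\in\rho$, the unique abstract state $\abstracts(s)$ has an abstract successor $v'$ with $s'\in\concretize(v')$ (system transitions are overapproximated); and (b) for every concrete environment state $s\in\states_\env$ and every \emph{abstract} successor $v'$ of $\abstracts(s)$, there is a concrete successor $s'$ of $s$ with $s'\in\concretize(v')$ (environment transitions are underapproximated, so every abstract move is realizable concretely). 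For $\widehat G^\downarrow$ the roles of the two players are swapped. I would verify these four statements by unfolding \Cref{def:rpgs-semantics} and the validity conditions on $\widehat\rho^\circ$, using \Cref{def:abstraction-domain} to guarantee that $\abstracts$ is total and single-valued so that "the unique abstract state" makes sense.

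For inclusion (1), I would take $s \in W_{\sys}(\sema{\rpg},\Omega)$ and show $\abstracts(s) \in W_{\sys}(\widehat G^\uparrow,\widehat\Omega)$, which gives $s \in \concretize(W_{\sys}(\widehat G^\uparrow,\widehat\Omega))$ since $s\in\concretize(\abstracts(s))$. Let $\sigma$ be a concrete winning strategy for $\sys$ from $s$. I would build an abstract strategy $\widehat\sigma$ for $\sys$ in $\widehat G^\uparrow$ by maintaining, along each abstract play prefix $\widehat\pi$, a concrete play prefix $\pi$ consistent with $\sigma$ such that $\widehat\pi = \abstracts(\pi)$; when it is $\sys$'s turn in the abstract game, $\sigma$ picks a concrete successor, and by simulation property (a) this concrete successor lies in some abstract successor, which $\widehat\sigma$ plays. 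When it is $\env$'s turn in the abstract game and $\env$ picks an abstract successor $v'$, by simulation property (b) we can pick a matching concrete successor and extend $\pi$. This maintains the invariant, so every abstract play consistent with $\widehat\sigma$ is the abstraction of a concrete play consistent with $\sigma$, hence winning for $\sys$ under $\Omega$; since $\widehat\Omega = \abstracts(\Omega)$ and $\widehat\pi\in\widehat\Omega$ iff $\concretize(\widehat\pi)\subseteq\Omega$, and the concrete play lies in $\concretize(\widehat\pi)\cap\Omega$, I additionally need that \emph{all} concrete refinements of $\widehat\pi$ are winning — but this follows because $\Omega$ is location-based and $\abstracts$ preserves locations, so membership in $\Omega$ depends only on $\loc(\widehat\pi)$. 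Thus $\widehat\pi\in\widehat\Omega$.

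For inclusion (2), I would argue symmetrically: take an abstract winning strategy $\widehat\sigma$ for $\sys$ in $\widehat G^\downarrow$ from $v$, and for each $s\in\concretize(v)$ construct a concrete winning strategy $\sigma$ from $s$ by the same prefix-tracking construction, now using the simulation properties for $\widehat G^\downarrow$ (system moves underapproximated on the abstract side means every concrete system move the abstract strategy suggests is available; environment moves overapproximated means every concrete environment move is covered by an abstract one, so the tracked abstract play stays consistent with $\widehat\sigma$). Every concrete play consistent with $\sigma$ then abstracts to an abstract play consistent with $\widehat\sigma$, hence in $\widehat\Omega$, hence (again using the location-based, $\concretize(\widehat\pi)\subseteq\Omega$ characterization) the concrete play is in $\Omega$.

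The main obstacle I anticipate is bookkeeping the correspondence between $\sys$- and $\env$-turns in the abstract game versus the concrete game and making sure the direction of approximation ($\mathit{move}^\uparrow$ vs.\ $\mathit{move}^\downarrow$, and which is applied to $V_\env$-outgoing vs.\ $V_\sys$-outgoing edges) lines up with the player whose strategy we are transferring. In particular, one must be careful that in \Cref{def:abstractgame} the $\env$-to-$\sys$ abstract edges use $\bullet$ which is the \emph{opposite} arrow of $\circ$, whereas the $\sys$-to-$\env$ edges use $\circ$ itself; getting these four cases right (two game graphs, two edge types) is the delicate part, but each case is a routine unfolding of the $\mathit{move}$ definitions once the simulation lemma is stated precisely. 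The rest — the strategy-transfer induction and the location-based objective argument — is standard.
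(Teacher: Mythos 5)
Your proposal is correct and follows essentially the same route as the paper: the same four simulation properties (over-/under-approximation per player and per abstract game graph), the same prefix-tracking strategy transfer in both directions, and the same use of the location-based objective to conclude $\widehat\pi\in\widehat\Omega$ from a single winning concretization. The only detail the paper handles that you omit is the case of finite abstract plays ending in environment dead-ends of $\widehat G^\uparrow$ (which can arise because environment moves are underapproximated there), but this only requires a short extra argument for safety objectives.
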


By duality,  $\widehat{G}^\downarrow$
results in an overapproximation of the winning region of player~$\env$ in the concrete game.
Given an abstraction $\widehat{G}^\circ$, we denote with $\overapproxp(\widehat{G}^\circ)$ the player whose winning region is overapproximated in $\widehat{G}^\circ$:
$\overapproxp(\widehat{G}^\circ):= \sys$ if $\circ=\uparrow$ and  
$\overapproxp(\widehat{G}^\circ):= \env$ if $\circ=\downarrow$.

\subsubsection{Abstract Strategy Templates and Their Induced Sub-Games.}

We now describe how we use a permissive strategy template for a player $p$ in an abstract game to identify sub-game structures of the given reactive program game from which to  generate attractor caches for player $p$.

We determine the sub-game structures and local target sets based on so-called \emph{helpful edges} for player $p$ in the abstract game where $p$ is over-approximated.
A helpful edge is a live-edge or an alternative choice to a co-live edge of a permissive strategy template.
Intuitively,  a helpful edge is an edge that player $p$ might have to take eventually in order to win the abstract game.
As our chosen abstraction domain is based on the guards,  a helpful edge often corresponds to the change  of conditions necessary to enable a guard in the reactive program game.
Since reaching this change might require an unbounded number of steps,   our method attempts a local attractor computation and potentially acceleration.
Identifying helpful edges based on permissive strategy templates rather than on winning strategies has the following advantages.
First,  templates reflect multiple abstract winning strategies for player $p$, capturing multiple possibilities to make progress towards the objective.
Moreover,  they describe local conditions,  facilitating the localization our method aims for.
Helpful edges are defined as follows.\looseness=-1

\begin{definition}[Helpful Edge]
Given a strategy template $(\safegroup,\colivegroup,\livegroup)$ for player $p$
in a game $(G,\Omega)$ with $G=(V,V_\env, V_\sys,\rho)$,   we call an edge $e \in \rho$ \emph{helpful for player $p$ w.r.t.~the template $(\safegroup,\colivegroup,\livegroup)$} if and only if the following holds:
There exists a live-group $H \in \livegroup$ such that $e \in H$, or
$e \not \in  \safegroup \cup \colivegroup$ and there exists a co-live edge $(v_s,v_t) \in \colivegroup$ with $v_s = \src(e)$.
We define $\helpful_{G,p}(\safegroup,\colivegroup,\livegroup)$ to be the set of helpful edges for player $p$ in $G$ w.r.t.~$(\safegroup,\colivegroup,\livegroup)$.
\end{definition}

For each helpful edge, we define \emph{pre-} and \emph{post-sets} which are the abstract environment states before and after that edge. This is formalized as follows.

\begin{definition}[Pre- and Post-Sets]
Let
$\widehat{G}^\circ=(V, V_\env, V_\sys, \widehat\rho^\circ)$ for some $\circ\in\{\uparrow,\downarrow\}$ be a 
$(\predss,\predsi)$-induced abstraction of  $\rpg$, 
let $p_{\mathit{over}}:= \overapproxp(\widehat{G}^\circ)$, and  
$e = (v_s,v_t) \in \helpful_{\widehat{G}^\circ,p_{\mathit{over}}}(\safegroup,\colivegroup,\livegroup)$ 
for some 
template $(\safegroup,\colivegroup,\livegroup)$.
If $p_{\mathit{over}} = \env$, we have $e \in V_\env \times V_\sys$ and define
$\pre(e,p_{\mathit{over}}) := \{ v_s \}$ and 
$\post(e,p_{\mathit{over}}) := \{ v \in V \mid (v_t,v) \in \widehat{\rho}^\circ \}$.
If $p_{\mathit{over}} = \sys$ we have that $e  \in V_\sys \times V_\env$ and define
$\pre(e,p_{\mathit{over}}) := \{ v \in V \mid (v,v_s) \in \widehat{\rho}^\circ \}$ and
$\post(e,p_{\mathit{over}}) := \{ v_t \}$.
Note that in  both cases it holds that $\pre(e,p_{\mathit{over}}), \post(e,p_{\mathit{over}}) \subseteq V_\env \subseteq L \times \predss$.
\end{definition}

\begin{algorithm}[t]
\SetAlgoVlined
    \SetKwProg{Fn}{function}{}{}
    \DontPrintSemicolon
    \Fn{\textsc{GenerateCache}($\rpg$,  $\widehat{G}^\circ$, 
    	 $p_{\mathit{over}}$,   $(\safegroup,\colivegroup,\livegroup)$,  $b \in \Nat$)}{
    	$\mathit{SubgameLocs} := \emptyset$,  $\mathit{PostSet} := \emptyset$,\;
    	\ForEach{$e \in \helpful_{\widehat{G}^\circ,p_{\mathit{over}}}(\safegroup,\colivegroup,\livegroup)$}{
			$L_S := \loc(\pre(e, p_{\mathit{over}}))$;
			$L_T := \loc(\post(e, p_{\mathit{over}}))$\;
            $L_\mathit{sub} := \{ l ~\mid~ \exists w \in \spaths(\rpg, L_S, L_T).~|w| \leq b \land \exists i. \;w[i] = l\}$\;
			$\mathit{SubgameLocs} := \mathit{SubgameLocs} \cup \{L_\mathit{sub}\}$\;
			$\mathit{PostSet}:=\mathit{PostSet} \cup \{(L_\mathit{sub},\post(e,p_{\mathit{over}}))\}$\;
		}
	   $C := \emptyset$\;
		\ForEach{$L_\mathit{sub} \in \mathit{SubgameLocs}$}{
			$\mathit{TargetSet} := \textsc{ConstructTargets}(L_\mathit{sub},\mathit{PostSet})$\Comment{see~\cref{eqn:constructtargets}}
			\ForEach{$\targ  \in \mathit{TargetSet}$}{			
		    	$C := C \cup \textsc{SubgameCache}(\rpg, p_{\mathit{over}}, L_\mathit{sub}, \targ)$\label{line:add-to-cache}\;
		    	}
		}
		\Return $C$
	}
\caption{Generation of a cache based on a strategy template.}
\label{alg:generate-cache}
\end{algorithm}

As a helpful edge represents potential ``progress'' for player $p$,  we consider the question of whether player $p$ has a strategy in the concrete game to reach the \emph{post-set} from the  \emph{pre-set}. 
This motivates the construction of sub-game structures induced by the locations connecting those two sets in the reactive program game.

Procedure \textsc{GenerateCache} in \Cref{alg:generate-cache} formalizes this idea. 
It takes an abstract game and a strategy template for the over-approximated player $p_\mathit{over}$ in this game.
For each helpful edge $e$,   it constructs  the sub-game structure induced by the set of locations that lie on a simple path in the location graph from the locations of the \emph{pre-set} to the \emph{post-set} of $e$.
The optional parameter $b$ allows for heuristically tuning the locality of the sub-games by bounding the paths' length.

For each sub-game structure, the target sets for the local attractor computations are determined by the post-sets of the helpful edges that induced this sub-game structure (it might be more than one). 
They are computed by 
\begin{equation}
\label{eqn:constructtargets}
\textsc{ConstructTargets}(L_\mathit{sub}, \mathit{PostSet}) = T_1 \cup T_2 \cup T_3
\end{equation}
where the sets $T_1,T_2 \text{ and }T_3$ of elements of $\symstates$ are defined as follows.
\begin{itemize}
\item $T_1:=\{d\in\symstates \mid \exists P. \;(L_\mathit{sub},P) \in \mathit{PostSet} \land \forall l \in L.\;d(l) = \bigvee_{(l,\varphi) \in P} \varphi\}$ consists of targets that are determined by a single post-set.
\item $T_2:=\{d_\cup\}$, where for every $l \in L$, $d_\cup(l) = \bigvee_{P \text{ s.t. } (L_\mathit{sub},P) \in \mathit{PostSet}} \bigvee_{(l,\varphi) \in P} \varphi$ is the singleton containing the union of the targets  of all post-sets.
\item $T_3:=\{d_\top\}$,  where for $l \in L$ ,  $d_\top(l)= \exists P, \varphi. (L_\mathit{sub},P) \in \mathit{PostSet} \land (\varphi, l) \in P$  contains the target that is $\top$ iff the location appears in some post-set.
\end{itemize}

Once the targets are  constructed, \textsc{GenerateCache} uses \textsc{SubgameCache} from \Cref{alg:subgame-cache} to compute the attractor caches for those targets and respective sub-game structures.
By \Cref{lem:correctness-subgame-cache}, \textsc{SubgameCache} returns attractor caches. 
As attractor caches are closed under set union,  we conclude the following.
\begin{corollary}\label{cor:cache-generation}
The set $C$ returned by  $\textsc{GenerateCache}$ is an attractor cache.
\end{corollary}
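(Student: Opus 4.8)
\textbf{Proof plan for Corollary~\ref{cor:cache-generation}.} The plan is to argue that the set $C$ returned by \textsc{GenerateCache} is a finite union of attractor caches, and then appeal to closure of attractor caches under union. First I would observe that the entire body of \textsc{GenerateCache} only adds elements to $C$ in line~\ref{line:add-to-cache}, and each such addition is the output of a call $\textsc{SubgameCache}(\rpg, p_{\mathit{over}}, L_\mathit{sub}, \targ)$ for some set of locations $L_\mathit{sub}$, some target $\targ \in \symstates$, and the fixed player $p_{\mathit{over}} \in \{\sys,\env\}$. By \Cref{lem:correctness-subgame-cache}, each such call returns an attractor cache over $\rpg$ (a set satisfying \Cref{defn:cache}). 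Since $\helpful_{\widehat{G}^\circ,p_{\mathit{over}}}(\safegroup,\colivegroup,\livegroup)$, $\mathit{SubgameLocs}$, and each $\mathit{TargetSet}$ are finite, $C$ is obtained as a finite union of attractor caches.

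Next I would record the closure property: if $C_1, C_2 \subseteq \rpgs \times \{\sys,\env\} \times \symstates \times \symstates \times \power{\cells}$ are both attractor caches, then $C_1 \cup C_2$ is an attractor cache. This is immediate from \Cref{defn:cache}, since the defining condition is a universally quantified statement over all tuples $(\rpg, p, \mathit{src}, \mathit{targ}, \cells_\independent)$ in the set and all $\varphi \in \FOL{\cells_\independent}$; a tuple in $C_1 \cup C_2$ lies in $C_1$ or in $C_2$, so the required inclusion $\sema{\mathit{src} \land \lambda l.~\varphi} \subseteq \mathit{Attr}_{\sema{\mathcal G},p}(\sema{\mathit{targ} \land \lambda l.~\varphi})$ holds by assumption in either case. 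Finiteness of $C_1 \cup C_2$ is clear. An easy induction then extends this to finite unions, and the empty set $C = \emptyset$ is trivially an attractor cache, covering the base case.

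Combining these two observations: $C$ starts as $\emptyset$ (an attractor cache) and is updated only by taking unions with attractor caches returned by \textsc{SubgameCache}; by the closure property, $C$ remains an attractor cache throughout, hence the returned value is an attractor cache. I do not anticipate a genuine obstacle here — the statement is essentially a bookkeeping corollary of \Cref{lem:correctness-subgame-cache} plus union-closure. The only point requiring minor care is to confirm that \textsc{GenerateCache} never mutates cache entries (e.g.\ \textsc{ConstructTargets} only builds targets, and the sub-game locations $L_\mathit{sub}$ passed to \textsc{SubgameCache} are legitimate subsets of $L$), so that \Cref{lem:correctness-subgame-cache} genuinely applies to each call; this is a straightforward inspection of the pseudocode.
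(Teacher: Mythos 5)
Your proposal is correct and follows essentially the same route as the paper: the paper also derives the corollary directly from \Cref{lem:correctness-subgame-cache} (each call to \textsc{SubgameCache} returns an attractor cache) together with the observation that attractor caches are closed under set union, since the defining condition in \Cref{defn:cache} is a per-tuple universally quantified property. Your additional remarks on finiteness and the empty-set base case are fine but not substantively different from the paper's argument.
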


\begin{example}\label{ex:abstact-template}
    The abstractions of  $\rpg_\mathit{ex}$ from  \Cref{ex:running} and respective  templates are too large to depict.
    One helpful edge for $\sys$ is $e = ((\mathit{mine}, \varphi, \varphi_I), (\mathit{mine}, \varphi'))$ with
    $\varphi = \changed{\mathit{samp} < \mathit{req}} \land \mathit{pos} = 12 \land \mathit{done} \neq 1$, $\varphi' = \changed{\mathit{samp} \geq \mathit{req}} \land \mathit{pos} = 12 \land \mathit{done} \neq 1$, and $\varphi_I = a > 0 \land b \leq 0 \land \mathit{inpReq} \leq 0$.
    This edge $e$ is in a live group where the other edges are similar with different $\varphi_I$.
    They correspond to the situation where the value of $\mathit{samp}$ finally becomes greater or equal to $\mathit{req}$.
    For $e$,  $\pre(e, \sys) = \{(\mathit{mine}, \varphi)\}$ and $\post(e, \sys) = \{(\mathit{mine}, \varphi')\}$  result in $L_\mathit{sub} = \{\mathit{mine}\}$ and the  target $\{ \mathit{mine} \mapsto \varphi' \}$.
    With this, we generate a cache as in \Cref{ex:subgame}.\looseness=-1
\end{example}

\section{Game Solving with Abstract Template-Based Caching}
\label{sec:full-solving}
\begin{algorithm}[t!]
\SetAlgoVlined
    \SetKwProg{Fn}{function}{}{}
    \DontPrintSemicolon
    \Fn{\textsc{RPGCacheSolve}(
		 $\mathcal G = (T,\inputs, \cells, L, \Inv,\delta)$,  
    	 $\Omega$, $b \in \Nat$)}{
	     $(\predss,\predsi):= \mathsf{AbstractDomain}(\rpg)$\;
	     $(\widehat G^\uparrow,\widehat G^\downarrow):=\textsc{AbstractRPG}(\rpg,(\predss,\predsi))$ \Comment{see ~\cref{def:abstractgame}}
         $C := \emptyset$\;
         \ForEach{$(p, \circ) \in \{(\sys,\uparrow), (\env, \downarrow)\}$}{
            $(\safegroup,\colivegroup,\livegroup) :=\textsc{SolveAbstract}(\widehat G^\circ,\Omega)$\;
            $C := C \cup \textsc{GenerateCache}(\rpg, \widehat{G}^\circ, p, (\safegroup,\colivegroup,\livegroup), b)$\;
         }
        \Return \textsc{RPGSolveWithCache}($\rpg, C$) \Comment{solves $\rpg$ using  \textsc{AttractorAccCache} in \cref{alg:use-cache} for attractor computation}
	}
\caption{Game solving with abstract template-based caching.}
\label{alg:rpg-enhanced}
\end{algorithm}

This section summarizes our approach for reactive progam game solving via \Cref{alg:rpg-enhanced}, which combines the procedures introduced in \Cref{sec:using-cache} and \Cref{sec:computing-cache} as schematically illustrated in \Cref{fig:overview} of \Cref{sec:intro}.
\Cref{alg:rpg-enhanced} starts by computing the abstract domain and both abstractions.
For each abstract game, $\textsc{SolveAbstract}$ computes a strategy template~\cite{AnandNS23}. Then,  \textsc{GenerateCache} is invoked to construct the respective attractor  cache.   
\textsc{RPGSolveWithCache} solves reactive program games in direct analogy to \textsc{RPGSolve} from \cite{HeimD24}, but instead of using \textsc{AttractorAcc}, it uses the new algorithm \textsc{AttractorAccCache} which utilizes the attractor cache $C$.
The overall correctness of \textsc{RPGCacheSolve} follows from \Cref{lem:cache-utilization}, \Cref{cor:cache-generation}, and the correctness of \cite{HeimD24}.

\begin{theorem}[Correctness] \label{thm:method-correctness}
Given a reactive program game structure $\rpg$ and a location-based objective $\Omega$,  for any $b \in \Nat$, if  \textsc{RPGCacheSolve} terminates, then it returns $W_{\sys}(\sema{\rpg},\Omega).$
\end{theorem}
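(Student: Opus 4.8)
The plan is to decompose the correctness of \textsc{RPGCacheSolve} into the already-established facts of the excerpt and then argue that assembling them is sound. First I would observe that the only way the abstraction machinery (\textsc{AbstractRPG}, \textsc{SolveAbstract}, \textsc{GenerateCache}) affects the returned value is through the attractor cache $C$: the winning region actually returned comes entirely from the call \textsc{RPGSolveWithCache}$(\rpg, C)$. So it suffices to show (a) the accumulated $C$ is a valid attractor cache, and (b) \textsc{RPGSolveWithCache}$(\rpg, C)$, whenever it terminates, returns $W_{\sys}(\sema{\rpg},\Omega)$.

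For (a): by \Cref{cor:cache-generation}, each individual call $\textsc{GenerateCache}(\rpg, \widehat{G}^\circ, p, \cdot, b)$ in \Cref{alg:rpg-enhanced} returns an attractor cache, and since the condition in \Cref{defn:cache} is quantified entry-by-entry, a finite union of attractor caches is again an attractor cache; hence the $C$ accumulated over the two loop iterations is an attractor cache. I would stress that soundness here does \emph{not} rely on \Cref{lemma:abstraction}: the correctness of the abstraction only governs whether the produced cache is \emph{useful}, whereas its \emph{validity} follows already from \Cref{lem:correctness-subgame-cache}, i.e.\ from the fact that attractors computed over induced sub-games are attractors over the full game regardless of how the sub-games were chosen. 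A minor bookkeeping point to check is that for each pair $(p,\circ)\in\{(\sys,\uparrow),(\env,\downarrow)\}$ the argument $p$ passed to \textsc{GenerateCache} is indeed $\overapproxp(\widehat{G}^\circ)$, so that \Cref{cor:cache-generation} applies; this is immediate from the definition of $\overapproxp$.

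For (b): \textsc{RPGSolveWithCache} is by construction the algorithm \textsc{RPGSolve} of \cite{HeimD24} with every invocation of \textsc{AttractorAcc} replaced by a call to \textsc{AttractorAccCache}$(\cdot,\cdot,\cdot,C)$. The realization of \textsc{StrengthenTarget} given just after \Cref{lem:cache-utilization} satisfies the hypothesis of that lemma, so by \Cref{lem:cache-utilization}, whenever a call \textsc{AttractorAccCache}$(\rpg, p, d, C)$ terminates it returns exactly $\mathit{Attr}_{\sema{\mathcal G},p}(\sema{d}) \cap \states_\env$ — precisely the value a terminating call \textsc{AttractorAcc}$(\rpg, p, d)$ would return (recall that $\symstates$ represents only environment states, so this restriction is the relevant quantity). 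Hence along any terminating run of \textsc{RPGSolveWithCache} every attractor subroutine returns exactly the value its counterpart in \textsc{RPGSolve} would, so the soundness proof of \textsc{RPGSolve} from \cite{HeimD24} transfers verbatim, and a terminating run of \textsc{RPGSolveWithCache}$(\rpg, C)$ returns $W_{\sys}(\sema{\rpg},\Omega)$; therefore so does a terminating run of \textsc{RPGCacheSolve}$(\rpg,\Omega,b)$.

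The main obstacle is the last step: invoking the correctness of \cite{HeimD24} in a genuinely modular way. One must confirm that the soundness argument for \textsc{RPGSolve} is parametric in its attractor subroutine — that it never exploits a particular iteration schedule, an under-approximating intermediate value on non-terminating branches, or side effects — so that substituting any subroutine with the same terminating input/output specification is harmless. Because the theorem is conditional on termination of \textsc{RPGCacheSolve}, we never have to compare the termination behaviour of \textsc{AttractorAccCache} and \textsc{AttractorAcc}; we only need that whenever \textsc{AttractorAccCache} halts it is correct, which is exactly \Cref{lem:cache-utilization}. Everything else is routine composition of the cited results.
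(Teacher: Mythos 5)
Your proposal is correct and follows exactly the route the paper takes: the paper justifies \Cref{thm:method-correctness} in one sentence by citing \Cref{lem:cache-utilization}, \Cref{cor:cache-generation}, and the correctness of \textsc{RPGSolve} from~\cite{HeimD24}, which is precisely the decomposition you spell out. Your additional observations --- that cache validity does not depend on \Cref{lemma:abstraction}, and that one must check the soundness argument of~\cite{HeimD24} is parametric in the attractor subroutine --- are accurate elaborations of details the paper leaves implicit.
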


\begin{remark}
In addition to using the  strategy templates from the abstract games for caching,  we can  make use of the winning regions in the abstract games, which are computed together with the templates. 
Thanks to \Cref{lemma:abstraction}, we know that 
outside of its winning region in the abstract game the over-approximated player loses for sure.
Thus, we can \emph{prune} parts of the reactive program game that correspond to the abstract states where the over-approximated player loses. As our experiments show that the main performance advantage is gained by caching rather than pruning, we give the formal details for pruning in \Cref{app:pruning}.
%
%
\end{remark}

\paragraph{Discussion.}
The procedure \textsc{RPGCacheSolve} depends on the choice of game abstraction domain $(\predss,\predsi)$ and on the construction of the local games performed in \textsc{GenerateCache}.
The abstraction based on guards is natural,  as it is obtained from the predicates appearing in the game.  Acceleration~\cite{HeimD24} is often needed to establish that some guards can eventually be enabled.  Therefore,  we choose an abstraction domain that represents precisely the guards in the game.
 
Helpful edges capture transitions that a player might need to take, hence the game solving procedure has to establish that the player can eventually enable their guards. 
This might require acceleration, and hence motivates our use of helpful edges to construct the local games.
Investigating alternatives to these design choices and their further refinement is a subject of future work.

\section{Experimental Evaluation}\label{sec:evaluation}
\begin{table}[b!]
\centering
\caption{%
Evaluation Results. 
ST is the variable domain type (additional to $\Bool$).
$|L|$, $|\cells|$, $|\inputs|$ are the number of respective game elements.
We show the wall-clock running time in seconds for our prototype \toolname in three settings (one with normal caching, one with additional pruning, one that only prunes), \rpgsolve, and \muval (with clause exchange).
TO means timeout after 30 minutes, MO means out of memory (8GB).
We highlight in bold the fastest solving runtime result.
The evaluation was performed on a computer equipped with an Intel(R) Core(TM) i5-10600T CPU @ 2.40GHz.
}\label{tab:all}
\scalebox{0.73}{
\begin{tabular}[t]{|l|crcc||rrr|r|r|}
\hline
    \multirow{2}{*}{Name} & \multirow{2}{*}{ST} & \multirow{2}{*}{$|L|$} & \multirow{2}{*}{$|\cells|$} & \multirow{2}{*}{$|\inputs|$} & \multicolumn{3}{c|}{\toolname} & \multirow{2}{*}{~\rpgsolve} & \multirow{2}{*}{~\muval} \\
    & & & & & \multicolumn{1}{c}{normal} & ~pruning & ~~prune-only & & \\
\hline
\hline                                                                    
    scheduler                   &  \Int  &  7 & 3 & 3 &            110.3 &   73.43 &  202.23 &   99.57 & \textbf{52.66} \\
    item processing             &  \Int  &  7 & 4 & 2 &  \textbf{473.85} &  479.34 &      TO &      TO & TO \\
\hline
\hline                                                                    
    chain 4                     &  \Int  &  7 & 6 & 1 &  \textbf{128.02} &  128.48 &      TO &      TO & TO \\
    chain 5                     &  \Int  &  8 & 7 & 1 &  \textbf{410.90} &  413.75 &      TO &      TO & TO \\
    chain 6                     &  \Int  &  9 & 8 & 1 & \textbf{1464.86} & 1470.13 &      TO &      TO & TO \\
    chain 7                     &  \Int  & 10 & 9 & 1 &               TO &      TO &      TO &      TO & TO \\
\hline                                                                    
    chain simple 5              &  \Int  &  8 & 3 & 1 &   \textbf{27.54} &   29.10 & 1364.91 & 1362.38 & TO \\
    chain simple 10             &  \Int  & 13 & 3 & 1 &   \textbf{76.41} &   80.01 &      TO &      TO & TO \\
    chain simple 20             &  \Int  & 23 & 3 & 1 &  \textbf{236.74} &  244.53 &      TO &      TO & TO \\
    chain simple 30             &  \Int  & 33 & 3 & 1 &  \textbf{485.73} &  503.89 &      TO &      TO & TO \\
    chain simple 40             &  \Int  & 43 & 3 & 1 &  \textbf{813.05} &  826.67 &      TO &      TO & TO \\
    chain simple 50             &  \Int  & 53 & 3 & 1 & \textbf{1212.90} & 1240.36 &      TO &      TO & TO \\
    chain simple 60             &  \Int  & 63 & 3 & 1 & \textbf{1704.02} & 1718.39 &      TO &      TO & TO \\
    chain simple 70             &  \Int  & 73 & 3 & 1 &               TO &      TO &      TO &      TO & TO \\
\hline                                                                    
\hline                                                                    
    robot running (\Cref{ex:running}) 
                                &  \Int  &  3 & 4 & 3 &  \textbf{470.69} & 471.59 &      TO &        TO & TO \\

    robot repair                &  \Int  &  6 & 4 & 2 &          TO & 91.66 & \textbf{51.40} &      TO & TO \\
    robot analyze samples       &  \Int  &  6 & 3 & 1 &  \textbf{104.02} &  113.06 &  684.67 &  632.39 & TO \\
    robot collect samples v1    &  \Int  &  4 & 3 & 1 &   \textbf{22.89} &   26.89 &      TO &      TO & TO \\
    robot collect samples v2    &  \Int  &  3 & 4 & 1 &  \textbf{478.33} &  483.50 &      TO &      TO & TO \\
    robot collect samples v3    &  \Int  &  4 & 3 & 3 &   \textbf{60.55} &   65.76 &      TO &      TO & TO \\
    robot deliver products 1    &  \Int  &  6 & 5 & 1 &   \textbf{95.08} &  101.75 &      TO &      TO & TO \\
    robot deliver products 2    &  \Int  &  7 & 6 & 2 &  \textbf{724.20} &  741.01 &      TO &      TO & TO \\
    robot deliver products 3    &  \Int  &  7 & 6 & 3 & \textbf{1116.31} & 1133.57 &      TO &      TO & TO \\
    robot deliver products 4    &  \Int  &  7 & 6 & 4 & \textbf{1580.03} & 1615.72 &      TO &      TO & TO \\
    robot deliver products 5    &  \Int  &  7 & 6 & 5 &               TO &      TO &      TO &      TO & TO \\
\hline                                                                    
    smart home day not empty    & \Real  &  5 & 5 & 2 &   \textbf{84.17} &  100.99 &      TO &      TO & TO \\
    smart home day warm         & \Real  &  6 & 5 & 3 &  \textbf{162.06} &  187.82 &      TO &      TO & TO \\
    smart home day cold         & \Real  &  6 & 5 & 3 &  \textbf{162.06} &  193.81 &      TO &      TO & TO \\
    smart home day warm or cold & \Real  &  6 & 5 & 4 &  \textbf{320.27} &  380.88 &      TO &      TO & TO \\
    smart home day empty        & \Real  &  5 & 5 & 2 &               TO &      TO &      TO &      TO & MO \\
    smart home night sleeping   & \Real  &  6 & 5 & 2 &   \textbf{80.69} &   99.38 &      TO &      TO & MO \\
    smart home night empty      & \Real  &  6 & 5 & 2 &               TO &      TO &      TO &      TO & TO \\
    smart home nightmode        & \Real  &  6 & 6 & 3 &               TO &      TO &      TO &      TO & MO \\
\hline
\end{tabular}}
\end{table}

We implemented \Cref{alg:rpg-enhanced} for solving reactive program games in a prototype tool\footnote{Available at \url{https://doi.org/10.5281/zenodo.10939871}} \toolname (Strategy Template-based Localized Acceleration).
Our implementation is based on the open-source reactive program game solver \rpgsolve from \cite{HeimD24}. 
Specifically, we use \rpgsolve for the \textsc{AttractorAcc} and \textsc{RPGSolveWithCache} methods to compute attractors via acceleration and to solve reactive program games utilizing the precomputed cache, respectively. 
We realize \textsc{SolveAbstract} by using \pestel~\cite{AnandNS23}, which computes strategy templates in finite games.
We do not use the bound $b$ in \Cref{alg:rpg-enhanced}.

We compare our tool \toolname to the solver \rpgsolve and the $\mu$CLP solver \muval~\cite{UnnoSTK20}. 
Those are the only available techniques that can handle unbounded strategy loops, as stated in \cite{HeimD24}.
Other tools from \cite{NeiderT16,FarzanK18,MarkgrafHLNN20,BeyeneCPR14,FaellaP23,SamuelDK21,SamuelDK23,MaderbacherB22,ChoiFPS22} cannot handle those, are outperformed by \rpgsolve, or are not available.
For \muval, we encoded the games into $\mu$CLP as outlined in \cite{UnnoSTK20} and done in \cite{HeimD24}.

\paragraph*{Benchmarks.}

We performed the evaluation on three newly introduced sets of benchmarks (described in detail in \cref{app:benchmarks}).
They all have unbounded variable ranges, contain unbounded strategy loops, and have Büchi winning conditions.
On the literature benchmarks from~\cite{cinderella,BodlaenderHKSWZ12,NeiderT16,MaderbacherB22,HeimD24} \rpgsolve performs well as \cite{HeimD24} shows.
Hence, we did not use them as local attractor caches are unnecessary, and they are smaller than our new ones.
Our new benchmark categories are:\looseness=-1

\noindent\textit{(1) Complex Global Strategy (Scheduler and Item Processing).}
These benchmarks consist of a scheduler and an item processing unit. 
The core feature of these benchmarks is that the system needs to perform tasks that require complex global strategic decisions and local strategic decisions requiring acceleration.

\noindent\textit{(2) Parametric Benchmarks (Chains).}
These benchmarks each consist of two parametric chains of local sub-tasks requiring acceleration and local strategic reasoning and more lightweight global strategic reasoning.
The number of variables scales differently in both chains, showcasing differences in scalability.

\noindent\textit{(3) Simple Global Strategy (Robot and Smart Home).} 
These benchmarks represent different tasks for a robot and a smart home.
The robot moves along tracks (with one-dimensional discrete position) and must perform tasks like collecting several products.
The smart home must, e.g., maintain temperature levels and adjust blinds depending on whether the house is empty or on the current time of day.
These benchmarks need acceleration and local strategic reasoning, but their global reasoning is usually simpler and more deterministic.

\paragraph*{Analysis.}
The experimental results in Table~\ref{tab:all} demonstrate that local attractor pre-computation and caching have a significant impact on solving complex games.  This is evidenced by the performance of \toolname that is superior to the other two tools.
We further see that pruning (without caching) is not sufficient, which underscores the need to use more elaborate local strategic information in the form of an attractor cache. 
This necessitates the computation of strategy templates, and simply solving an abstract game is insufficient.
However, as pruning does not cause significant overhead, it offers an additional optimization.

\section{Related Work}\label{sec:related}
A body of methods for solving infinite-state games and synthesizing reactive systems operating over unbounded data domains exists.
Abstraction-based approaches reduce the synthesis problem to the finite-state case. 
Those include abstraction of two-player games~\cite{WalkerR14,VechevYY13, GrumbergLLS07,HenzingerJM03,FinkbeinerMPSS22}, which extends ideas from verification, such as abstract interpretation and counterexample-guided abstraction refinement, to games. 
The temporal logic LTL has recently been extended with data properties,  resulting in TSL~\cite{FinkbeinerKPS19} and its extension with logical theories~\cite{FinkbeinerHP22}. 
Synthesis techniques for those~\cite{FinkbeinerKPS19,ChoiFPS22,MaderbacherB22} are based on propositional abstraction of the temporal specification and iterative refinement by introducing assumptions.
The synthesis task's main burden in abstraction-based methods falls on the finite-state synthesis procedure.
In contrast,  we use abstraction not as the core solving mechanism but as a means to derive helpful sub-games.
Another class of techniques reason directly over the infinite-state space.  Several constraint-based approaches~\cite{FaellaP23, FarzanK18,KatisFGGBGW18} have been proposed for specific types of objectives.
\cite{SamuelDK21, SamuelDK23} lift fixpoint-based methods for finite-state game solving to a symbolic representation of infinite state sets.
However, a naive iterative fixpoint computation can be successful on a relatively limited class of games.  
Recently, \cite{HeimD24} proposed a technique that addresses this limitation by accelerating symbolic attractor computations.
However, as we demonstrate, their approach has limited scalability when the size of the game structure grows.
Our method mitigates this by identifying small helpful sub-games and composing their solutions to solve the game.

There are many approaches for compositional synthesis from LTL specifications~\cite{FiliotJR11,FinkbeinerGP23,FinkbeinerP20}. 
To the best of our knowledge, no techniques for decomposing infinite-state games exist prior to our work.
\looseness=-1

In verification, acceleration \cite{FinkelL02,  BardinFLP03,BardinFLS05} and  loop summarization~\cite{KroeningSTTW13} are applied to the loops in \emph{given program} and can thus be easily combined with subsequent analysis.
In contrast, in the setting of games, acceleration relies on \emph{establishing the existence of a strategy} which needs more guidance.

Permissive strategy templates were introduced in~\cite{AnandMNS23} and used in~\cite{AnandNS23} to represent sets of winning strategies for the system player in two-player games. 
They were used to synthesize hybrid controllers for non-linear dynamical systems~\cite{NayakEDS23}. Similar to our work, \cite{NayakEDS23} uses templates over abstractions to localize the compuation of continuous feedback controllers. While this inspired the solution methodology for infinite-state systems developed in this paper, the abstraction methodology and the semantics of the underlying system and its controllers are very different in \cite{NayakEDS23}. Our work is the first which uses permissive strategy templates as a guide for localizing the computation of fixpoints in infinite-state games.

\section{Conclusion}\label{sec:conclusion}
We presented a method that extends the applicability of synthesis over infinite-state games towards realistic applications. 
The key idea is to reduce the  game solving problem to smaller and simpler sub-problems by utilizing winning strategy templates computed in finite abstractions of the infinite-state game.
The resulting sub-problems are solved using a symbolic method based on attractor acceleration.
Thus,  in our approach abstraction and symbolic game solving work in concert, using strategy templates as the interface between them. 
This opens up multiple avenues for future work,  such as exploring different abstraction techniques, as well as developing data-flow analysis techniques for reactive program games that can be employed in the context of symbolic game-solving procedures.

\paragraph{Data Availability Statement}
The software generated during and/or analysed during the current study is available in the Zenodo repository~\cite{heim_2024_zenodo}.

\bibliographystyle{splncs04}
\bibliography{main.bib}

\newpage
\appendix

\section{Additional Definitions}\label{app:prelim}
\subsection{Concrete Types of Objectives}

A \emph{reachability objective}, denoted by $\reach(R)$, is defined via a set of vertices $R \subseteq V$ that player~$\sys$ is required to reach unless it ends in a dead-end vertex that belongs to the environment.  Formally, 
$\reach(R):= \{\pi\in V^\infty \mid \exists i\in \dom(\pi).\; \pi[i]  \in R \} \cup \{\pi\in V^* \mid \last(\pi) \in V_\env\}$.

A \emph{safety objective}, denoted by $\safe(S)$, is defined via a set of safe vertices $S \subseteq V$ that player~$\sys$ is required to stay within (until it ends in an environment dead-end if the play is finite).  Formally, 
$\safe(S) := \{\pi\in V^\omega \mid \forall i\in \Nat.\; \pi[i] \in S \} \cup  \{\pi\in V^* \mid \forall i\in \dom(\pi).\; \pi[i] \in S, \last(\pi) \in V_\env \}$.

Except for reachability and safety, all other objectives we consider are \emph{prefix-independent}.
Formally, an objective $\Omega \subseteq V^\infty$ is prefix-independent if for all $\pi \in V^\omega$ and $\tau \in V^*$ it holds that $\pi \in \Omega$ if and only if $\tau \cdot \pi \in \Omega$.
For such objectives, player~$\sys$ is required to ensure certain liveness properties unless the play ends in an environment dead-end.
Consequently, any (finite) sequence in $\Omegafin = \{\pi\in V^*\mid \last(\pi) \in V_\env\}$ is considered winning for player~$\sys$ irrespective of the liveness part.
Furthermore, the liveness part is defined using the set of vertices infinitely often in the sequences, i.e., 
$\Inf(\pi) = \{v \mid \forall i\in \Nat.\exists j > i.\; \pi[j] = v\}$.

A \emph{Büchi objective} $\buchi(B)$ 
for a set of accepting vertices $B \subseteq V$,  
requires that $B$ is visited infinitely often:
$\buchi(B) := \{\pi\in V^\omega \mid B \cap \Inf(\pi) \neq \emptyset\}\cup \Omegafin$. 

Its dual, \emph{co-Büchi objective} $\cobuchi(C)$ for rejecting vertices $C \subseteq V$ requires that $C$ is visited only finitely many times:
$\cobuchi(C) := \{\pi\in V^\omega \mid C \cap \Inf(\pi) = \emptyset \} \cup \Omegafin$.

A \emph{parity objective}, denoted by $\parity(\col)$, is defined via a function $\col : V \to \{0, 1, \dots k \}$ that associates for each vertex with a \emph{color} from $\{0, 1, \dots k\}$ .
It requires that the maximal color a play visits infinitely often is even. 
Formally, 
$\parity(\col) := \{\pi \in V^\omega \mid \max \{\col(v) \mid v\in \Inf(\pi)\} \text{ is even}\}\cup\Omegafin$.

In this work we consider \emph{location-based objectives} for reactive-program games. 
For instance,  we consider safety, reachability,  Büchi,  co-Büchi objectives defined via a set of locations in the reactive program game.

\newpage

\section{Proofs}\label{app:proofs}
\restateCacheUtilization*
\begin{proof}
We prove by induction that $\sema{a^n} \subseteq \mathit{Attr}_{\sema{\mathcal G},p}(\sema{d}) \cap \states_\env$ at every iteration $n$.  
Since $a^0 = \lambda l.~\bot$ and $a^1 = d$,  the statement holds for $n=0$ and $n=1$.

Suppose that the statement is satisfied for some $n \geq 1$. 
To show that it holds for $n+1$,  we will establish that all the (symbolically represented) sets of states  added to $a^n$ in order to obtain $a^{n+1}$ are subsets of $\mathit{Attr}_{\sema{\mathcal G},p}(\sema{d}) \cap \states_\env$.

For $\mathsf{Accelerate}(\mathcal{G}, p, l, a^n)$ in  line~\ref{line:accelerate} in \Cref{alg:use-cache},  the desired property is implied by the soundness of attractor acceleration established in~\cite{HeimD24} and the induction hypothesis.  
For $\cpre{\mathcal G,p}{a^n}$ in line~\ref{line:cpre}, the property follows from the definition of $\cpre{\cdot}{\cdot}$ and the induction hypothesis.
Thus,  it remains to show that for each $\sorc \land \lambda l. \;\varphi$ at line~\ref{line:use-cache}  it holds that 
$\sema{\sorc \land \lambda l. \;\varphi} \subseteq \mathit{Attr}_{\sema{\mathcal G},p}(\sema{d}) \cap \states_\env$.

According to line~\ref{line:strengthen-cache} in $\textsc{AttractorAccCache}$ and the assumption on the function $\textsc{StrengthenTarget}$,  we have that $\varphi \in \FOL{\cells_\independent}$ and 
$\sema{\mathit{targ} \land \lambda l. ~\varphi} \subseteq \sema{a^n}$.  
By induction hypothesis, 
$\sema{a^n}\subseteq \mathit{Attr}_{\sema{\mathcal G},p}(\sema{d})$.
Thus, we can conclude that  
$\mathit{Attr}_{\sema{\mathcal G},p}(\sema{\mathit{targ} \land \lambda l. ~\varphi}) \subseteq \mathit{Attr}_{\sema{\mathcal G},p}(\sema{d})$. 
Since $(\rpg, p, \mathit{src}, \mathit{targ}, \cells_\independent) \in C$, 
$C$ is an attractor cache,  and  $\varphi \in \FOL{\cells_\independent}$,
by \Cref{defn:cache} we have that $\sema{\sorc \land \lambda l. \;\varphi} \subseteq 
\mathit{Attr}_{\sema{\mathcal G},p}(\sema{\mathit{targ} \land \lambda l. ~\varphi})$.
Together with the previous inclusion, this implies that 
$\sema{\sorc \land \lambda l. \;\varphi} \subseteq \mathit{Attr}_{\sema{\mathcal G},p}(\sema{d})$.
As $\sema{\sorc \land \lambda l. \;\varphi} \subseteq \states_\env$ by the definition of $\sema{\cdot}$,  we can conclude that $\sema{\sorc \land \lambda l. \;\varphi} \subseteq \mathit{Attr}_{\sema{\mathcal G},p}(\sema{d}) \cap \states_\env$.

This completes the proof by induction and establishes the claim of the lemma.\qed
\end{proof}

\restateSubgameCache*
\begin{proof}
To prove the statement of the lemma, we define a reactive program game structure $\rpg''$ obtained from $\rpg'$ by extending it with the variables $\cells\setminus\cells'$,  that is, the variables in $\rpg$ that do not appear in $\rpg'$.  Formally,  we define
$$\rpg'' := (T,\inputs,\cells,L',\Inv',\delta''),$$ where 
$(l,g,u,l') \in \delta''$ if and only if there exists $(l,g,u',l') \in \delta'$  such that
\begin{itemize}
\item $u(x) = u'(x)$ for every $x \in \cells'$ and
\item $u(x) = x$ for every $x \in \cells\setminus\cells'$.
\end{itemize}

Note that by the definition of $\rpg''$ we have that 
$\indvars(\rpg,\rpg'') = \indvars(\rpg,\rpg')$.

Recall that $\symstates= L \to \FOL{\cells}$ and $\symstates' = L' \to \FOL{\cells'}$ are the symbolic domains associated with $\rpg$ and $\rpg'$ respectively. 
The symbolic domain for $\rpg''$ is $\symstates'' := L' \to \FOL{\cells}$.  
Clearly, since $\FOL{\cells'} \subseteq \FOL{\cells}$, we have that $\symstates' \subseteq \symstates''$.
We lift the function $\extend_L$ to $\symstates''$ defining it in the same way as for $\symstates'$.

We will use $\rpg''$ as an intermediate reactive program game structure in order to establish the desired relationship between $\rpg'$ and $\rpg$.
To this end, we establish the following properties that relate the three game structures.

\paragraph{Property 1.} 
For every $\sorc,\targ \in \symstates'$ we have that  if 
$\sema{\sorc} \subseteq \mathit{Attr}_{\sema{\rpg'},p}(\sema{\targ})$,  
then it also holds that 
$\sema{\sorc} \subseteq \mathit{Attr}_{\sema{\rpg''},p}(\sema{\targ})$.

\paragraph{Property 2.}
Let $\mathbb{Y} \subseteq \indvars(\rpg,\rpg'')$ and $\varphi \in \FOL{\mathbb{Y}}$.
Then,  for every $d \in \symstates''$ it holds that 
$$\mathit{Attr}_{\sema{\mathcal G''},p}(\sema{d}) \cap \sema{\lambda l.\varphi}\subseteq 
  \mathit{Attr}_{\sema{\mathcal G''},p}(\sema{d \land \lambda l.\varphi}).$$

\paragraph{Property 3.}
For every $\sorc,\targ \in \symstates''$  with $\targ(\sink_{\mathit{sub}}) = \bot$ we have that  if 
$\sema{\sorc} \subseteq \mathit{Attr}_{\sema{\rpg''},p}(\sema{\targ})$,  
then it also holds that 
$$\sema{\extend_L(\sorc)} \subseteq \mathit{Attr}_{\sema{\rpg},p}(\sema{\extend_L(\targ)}).$$

Before we prove these three properties, we will show that together they imply the statement of the lemma.
Since $\sema{\sorc'} \subseteq \mathit{Attr}_{\sema{\mathcal G'},p}(\sema{\targ'})$, 
\emph{Property 1} entails
$$\sema{\sorc'} \subseteq \mathit{Attr}_{\sema{\mathcal G''},p}(\sema{\targ'}).$$
Since  $\mathbb{Y} \subseteq \indvars(\rpg,\rpg')= \indvars(\rpg,\rpg'')$,  \emph{Property 2} yields
$$\mathit{Attr}_{\sema{\mathcal G''},p}(\sema{\targ'}) \cap \sema{\lambda l.\varphi}\subseteq 
  \mathit{Attr}_{\sema{\mathcal G''},p}(\sema{\targ' \land \lambda l.\varphi}).$$

The above inclusions imply that
$$\begin{array}{lll}
\sema{\sorc' \land \lambda l.\varphi} 
& = & 
\sema{\sorc'} \cap \sema{\lambda l.\varphi}\\
& \subseteq & 
\mathit{Attr}_{\sema{\mathcal G''},p}(\sema{\targ'}) \cap \sema{\lambda l.\varphi}\\
& \subseteq &
\mathit{Attr}_{\sema{\mathcal G''},p}(\sema{\targ' \land \lambda l.\varphi}).
\end{array}
$$

As $\targ'(\sink_{\mathit{sub}}) = \bot$, we also have that $(\targ'\land  \lambda l.\varphi)(\sink_{\mathit{sub}}) = \bot$.
Hence,  we can use \emph{Property 3} and conclude that
$$\sema{\extend_L(\sorc' \land \lambda l.\varphi)} \subseteq 
\mathit{Attr}_{\sema{\rpg},p}(\sema{\extend_L(\targ' \land \lambda l.\varphi)}).$$

Finally, since the function $\extend_L$ maps to elements of $\symstates$ which assign $\bot$ to each location in $L\setminus L'$,  the last inclusion yields
$$\sema{\extend_L(\sorc') \land \lambda l\in L.\varphi} \subseteq 
\mathit{Attr}_{\sema{\rpg},p}(\sema{\extend_L(\targ') \land \lambda l\in L.\varphi}).$$

This is precisely the statement of the lemma.

We now proceed with establishing the three properties.

\paragraph{Proof of Property 1.}
The proof follows directly from the definition of $\rpg''$ and the fact that the variables in $\cells\setminus\cells'$ that are absent in $\rpg'$ do not appear in the guards or invariants of $\rpg''$ and remain unchanged by the updates in $\rpg''$. 

\paragraph{Proof of Property 2.}
For convenience, let us define $d_\mathbb{Y} := \lambda l.\varphi$.
We prove the claim by transfinite induction,  following the fixpoint definition of attractor. 
More precisely,   for every $d \in \symstates''$ we have that 
$\mathit{Attr}_{\sema{\mathcal G''},p}(\sema{d}) \cap \states''_\env= \mathit{Attr}^\gamma_{\sema{\mathcal G''},p}(\sema{d})$ where $\gamma$ is the smallest ordinal such that
$\mathit{Attr}^\gamma_{\sema{\mathcal G''},p}(\sema{d}) = \mathit{Attr}^{\gamma+1}_{\sema{\mathcal G''},p}(\sema{d})$, where 
\begin{itemize}
\item $\mathit{Attr}^0_{\sema{\mathcal G''},p}(\sema{d}) = \sema{d}$,
\item $\mathit{Attr}^{\beta+1}_{\sema{\mathcal G''},p}(\sema{d}) = \mathit{Attr}^{\beta}_{\sema{\mathcal G''},p}(\sema{d}) \cup \cpre{\sema{\mathcal G''},p}{\mathit{Attr}^{\beta}_{\sema{\mathcal G''},p}(\sema{d})}$,
\item $\mathit{Attr}^{\alpha}_{\sema{\mathcal G''},p}(\sema{d}) = \bigcup_{\beta < \alpha} \mathit{Attr}^\beta_{\sema{\mathcal G''},p}(\sema{d})$ for limit ordinals $\alpha$.
\end{itemize}

We prove that for every $d \in \symstates''$,  
for every ordinal $\alpha$ it holds that
$\mathit{Attr}^\alpha_{\sema{\mathcal G''},p}(\sema{d}) \cap \sema{d_\mathbb{Y}}\subseteq 
\mathit{Attr}^\alpha_{\sema{\mathcal G''},p}(\sema{d \land d_\mathbb{Y}})$, from which the claim follows.

\smallskip
\noindent
\textbf{Case $\alpha=0$.}  
 We have that 
 $\mathit{Attr}^0_{\sema{\mathcal G''},p}(\sema{d}) = \sema{d}$ and 
 $\mathit{Attr}^0_{\sema{\mathcal G''},p}(\sema{d \land d_\mathbb{Y}}) = \sema{d \land d_\mathbb{Y}} = \sema{d} \cap \sema{d_\mathbb{Y}}$,  
 which proves the claim for $\alpha=0$.
 
\smallskip
\noindent
\textbf{Case $\alpha=\beta+1$ is a successor ordinal.}
By the definition of attractor we have that 
$\mathit{Attr}^{\beta+1}_{\sema{\mathcal G''},p}(\sema{d}) = 
\mathit{Attr}^{\beta}_{\sema{\mathcal G''},p}(\sema{d})\cup 
\cpre{\sema{\mathcal G''},p}{\mathit{Attr}^{\beta}_{\sema{\mathcal G''},p}(\sema{d})}$ and 
$\mathit{Attr}^{\beta+1}_{\sema{\mathcal G''},p}(\sema{d \land d_\mathbb{Y}}) = 
\mathit{Attr}^{\beta}_{\sema{\mathcal G''},p}(\sema{d \land d_\mathbb{Y}})\cup 
\cpre{\sema{\mathcal G''},p}{\mathit{Attr}^{\beta}_{\sema{\mathcal G''},p}(\sema{d\land d_\mathbb{Y}})}$.

By induction hypothesis we have
$\mathit{Attr}^\beta_{\sema{\mathcal G''},p}(\sema{d}) \cap \sema{d_\mathbb{Y}}\subseteq 
\mathit{Attr}^\beta_{\sema{\mathcal G''},p}(\sema{d \land d_\mathbb{Y}})$.  
This, together with the monotonicity of the function $\cpre{\sema{\mathcal G''},p}{\cdot}$ implies that 
$\mathit{Attr}^{\beta+1}_{\sema{\mathcal G''},p}(\sema{d \land d_\mathbb{Y}}) \supseteq
\mathit{Attr}^{\beta}_{\sema{\mathcal G''},p}(\sema{d \land d_\mathbb{Y}})\cup 
\cpre{\sema{\mathcal G''},p}{\mathit{Attr}^{\beta}_{\sema{\mathcal G''},p}(\sema{d})\cap\sema{d_\mathbb{Y}}}$.

Therefore,  to establish the claim it suffices to prove for every $A \subseteq \states''_\env$ that
$$ \cpre{\sema{\mathcal G''},p}{A} \cap \sema{d_\mathbb{Y}} \subseteq
\cpre{\sema{\mathcal G''},p}{A\cap\sema{d_\mathbb{Y}}}.
$$
To do this, we use the fact that the variables in $\mathbb{Y}$ are not updated in $\rpg''$.  
Let $\rho''$ be the transition relation of $\sema{\rpg''}$.  
Since for all $l \in L'$ it holds that $d_\mathbb{Y}(l) = \varphi$ and $\varphi \in \FOL{\mathbb{Y}}$ contains only variables in $\mathbb{Y}$, we have that  
\begin{itemize}
\item for every $((l,\assmt{x}),((l,\assmt{x}),\assmt{i})) \in \rho''$,  $(l,\assmt{x}) \in \sema{d_\mathbb{Y}}$ iff $((l,\assmt{x}),\assmt{i}) \in \sema{d_\mathbb{Y}}$;
\item for every $(((l,\assmt{x}), \assmt{i}),(l',\assmt{x}')) \in \rho''$,   $((l,\assmt{x}),\assmt{i}) \in \sema{d_\mathbb{Y}}$ iff $(l',\assmt{x}') \in \sema{d_\mathbb{Y}}$.
\end{itemize}

Let $A \subseteq \states''_\env$, where $\states''$ is the set of states of $\rpg''$. \\
Let $s = (l,\assmt{x})  \in \cpre{\sema{\mathcal G''},p}{A} \cap \sema{d_\mathbb{Y}}$.  We consider two cases.
\begin{itemize}
\item Case $p = \sys$.  \\
For every transition 
$((l,\assmt{x}),((l,\assmt{x}),\assmt{i})) \in \rho''$ it holds that 
$((l,\assmt{x}),\assmt{i}) \in \sema{d_\mathbb{Y}}$  and 
there exists a transition 
$(((l,\assmt{x}), \assmt{i}),(l',\assmt{x}')) \in \rho''$ such  that 
$(l',\assmt{x}') \in A$.  
We also have $(l',\assmt{x}')) \in \sema{d_\mathbb{Y}}$.\\
By the definition of $\cpre{\sema{\mathcal G''},\sys}{\cdot}$, 
we  have  $s \in \cpre{\sema{\mathcal G''},\sys}{A\cap\sema{d_\mathbb{Y}}}$.
\item Case $p = \env$.\\
There exists a transition 
$((l,\assmt{x}),((l,\assmt{x}),\assmt{i})) \in \rho''$ such that 
$((l,\assmt{x}),\assmt{i}) \in \sema{d_\mathbb{Y}}$ and for every
transition 
$(((l,\assmt{x}), \assmt{i}),(l',\assmt{x}')) \in \rho''$ it holds that 
$(l',\assmt{x}') \in A$ and $(l',\assmt{x}')) \in \sema{d_\mathbb{Y}}$.\\
By the definition of $\cpre{\sema{\mathcal G''},\env}{\cdot}$, 
we  have  $s \in \cpre{\sema{\mathcal G''},\env}{A\cap\sema{d_\mathbb{Y}}}$.
\end{itemize}

\smallskip
\noindent
\textbf{Case $\alpha$ is a limit ordinal.}
By induction hypothesis we have that the statement holds for all ordinals $\beta < \alpha$.  Thus, 
$\bigcup_{\beta < \alpha} \Big(\mathit{Attr}^\beta_{\sema{\mathcal G''},p}(\sema{d}) \cap \sema{d_\mathbb{Y}}\Big)\subseteq 
\bigcup_{\beta < \alpha}\mathit{Attr}^\beta_{\sema{\mathcal G''},p}(\sema{d \land d_\mathbb{Y}}).$
By definition we have 
$\Big(\bigcup_{\beta < \alpha} \mathit{Attr}^\beta_{\sema{\mathcal G''},p}(\sema{d})\Big) \cap \sema{d_\mathbb{Y}}=\mathit{Attr}^\alpha_{\sema{\mathcal G''},p}(\sema{d}) \cap \sema{d_\mathbb{Y}}$ and 
$\bigcup_{\beta < \alpha}\mathit{Attr}^\beta_{\sema{\mathcal G''},p}(\sema{d \land d_\mathbb{Y}}) = \mathit{Attr}^\alpha_{\sema{\mathcal G''},p}(\sema{d \land d_\mathbb{Y}}$.
Therefore, the claim for $\alpha$ directly follows from the inclusion above.

\paragraph{Proof of Property 3.}
Let $s = (l,\assmt{x}) \in \sema{\extend_L(\sorc)}$.
By the definition of $\extend_L$  we have that if $l \not\in L'$ then $\extend_L(\sorc)(l) = \bot$, which is not possible since $s \in \sema{\extend_L(\sorc)} $.   
Thus $l \in L\cap L'$, and therefore $s \in \states''$, where $\states''$ is the set of states of $\rpg''$.
Furthermore,  we have that $\assmt{x} \FOLentailsT{T} \sorc(l)$.

Thus, since $\sema{\sorc} \subseteq \mathit{Attr}_{\sema{\rpg''},p}(\sema{\targ})$, 
we have that $s \in \mathit{Attr}_{\sema{\rpg''},p}(\sema{\targ})$.
By the properties of attractor this means that there exists strategy $\sigma''$ of player $p$ in $\sema{\rpg''}$, such that 
every play $\pi \in \plays_{\sema{\rpg''}}(s,\sigma'')$ has a prefix that reaches $\sema{targ}$.
Since $\targ(\sink_{\mathit{sub}}) = \bot$ and $\sink_\mathit{sub}$ is a sink location,  we have that
every play $\pi \in \plays_{\sema{\rpg''}}(s,\sigma'')$ has a prefix that reaches $\sema{targ}$ before visiting $\sink_{\mathit{sub}}$.  
Thus,  we can define a strategy $\sigma$ in $\sema{\rpg}$ that mimics $\sigma''$ as follows:
\begin{itemize}
\item If $\pi \in \states^* \cdot \states_p$ contains a state in $\states\setminus\states''$, then $\sigma(\pi)$ is fixed arbitrarily.
\item If $\pi \in (\states'')^* \cdot \states''_p$ and $\sigma''(\pi) \in \states$, then $\sigma(\pi):=\sigma''(\pi)$.
\item If $\pi \in (\states'')^*  \cdot \states''_p$ and $\sigma''(\pi) \not\in \states$, then $\sigma(\pi)$ is fixed arbitrarily.
\end{itemize}
By the choice of $\sigma''$ and the definition of $\sigma$, we have that every play $\pi \in \plays_{\sema{\rpg}}(s,\sigma)$ has a prefix that reaches $\extend_L(\sema{targ})$.
This implies that  $s \in \mathit{Attr}_{\sema{\rpg},p}(\sema{\extend_L(\targ)})$,  which concludes the proof of \emph{Property 3}.  \qed
\end{proof}

\restateCorrectnessSubgameCache*
\begin{proof}
\textsc{SubgameCache}($\rpg$,  $p$, $L_\mathit{sub}$, $d$) returns a singleton set for the form\\
$\{(\rpg,p,\extend_L(a),\extend_L(d'), \cells_\independent)\}$.
Thus,  we have to show that the tuple $(\rpg,p,\extend_L(a),\extend_L(d'), \cells_\independent)$ satisfies the condition in \Cref{defn:cache}. 

 By line~\ref{line:subgame-attractor} of \cref{alg:subgame-cache}, we have that 
$a:= \textsc{AttractorAcc}(\rpg', p,d')$,  where 
\begin{itemize}
\item $\rpg'$ is a sub-game structure of $\rpg$ induced by 
$L_\mathit{sub}$ and
\item $d' = \lambda l.~\texttt{if } l \in L_\mathit{sub}\texttt{ then } \mathsf{QElim} (\exists (\cells \setminus \cells'). d(l)) \texttt{ else } \bot$.
\end{itemize}
By the soundness of \textsc{AttractorAcc}~\cite{HeimD24},  we have 
$\sema{a} \subseteq \mathit{Attr}_{\sema{\mathcal G'},p}(\sema{d'})$.
As $d'(\sink_\mathit{sub}) = \bot$ for the sink location of $\rpg'$,  and 
$\cells_\independent = \indvars(\rpg,\rpg')$,  then $a$,  $d'$ and $\cells_\independent$ satisfy the preconditions of \Cref{lemma:subgame-cache}, meaning that we can apply it.
Hence, by  \Cref{lemma:subgame-cache} we have that $(\rpg,p,\extend_L(a),\extend_L(d'), \cells_\independent)$ satisfies the condition in \Cref{defn:cache}.  This concludes the proof.
\qed
\end{proof}


\restateCorrectnessAbstraction*
\begin{proof}We prove each of the statements separately.

\smallskip
\noindent
\textit{Proof of  (1).}
Let $\widehat W = W_{\sys}(\widehat G^\uparrow,\widehat \Omega)$, let $s_0 \in W_{\sys}(\sema{\rpg},\Omega)$, and let $v_0 = \abstracts (s_0)$.
To show that $s_0 \in \concretize(\widehat W)$, we will show that $v_0 \in \widehat W$.
To this end, we will define a strategy $\widehat \sigma_\sys$ for player~$\sys$ in $\widehat G^\uparrow$ such that  every play $\widehat \pi \in \plays_{\widehat G^\uparrow}(v_0,\widehat\sigma_\sys)$ is winning for player~$\sys$.

First observe that by definition of $\rho^\uparrow$, the following hold for every states $s\in\states$ and $v\in V$ with $\abstracts(s) = v$: 
(i) for every $(s,s')\in \rho \cap (\states_\sys\times\states_\env)$, we have $(v,\abstracts(s'))\in \widehat\rho^\uparrow \cap (V_\sys \times  V_\env)$;
(ii) for every $(v,v')\in \widehat\rho^\uparrow \cap (V_\env \times  V_\sys)$, there exists $s'\in \states_\sys$ with $\abstracts(s') = v'$ and $(s,s')\in \rho \cap (\states_\env\times\states_\sys)$.
Furthermore, note that there is no dead-ends in $\sema{\rpg}$ by definition. Hence, by (i), there no dead-end in $V_\sys$ in $\widehat G^\uparrow$.
Moreover, as $s_0 \in W_{\sys}(\sema{\rpg},\Omega)$, there exists a strategy $\sigma_\sys$ for player $\sys$ in $\sema{\rpg}$ which is a total function s.t.\ $\plays_{\sema{\rpg}}(s_0,\sigma_\sys)\subseteq\Omega$. 

With this, we define $\widehat \sigma_\sys: V^*\cdot V_\sys \to V$ as follows.
Let $\widehat\tau \in V^*\cdot V_\sys$.  
If there exists no play in $\plays_{\sema{\rpg}}(s_0,\sigma_\sys)$ with a prefix $\tau \in \concretize (\widehat \tau)$, then we set $\widehat \sigma_\sys(\widehat \tau)$ to an arbitrary successor of $\last(\widehat\tau)$.  
Otherwise,  fix one such $\tau$,  and let  $s' := \sigma_\sys(\tau)$.  
We define $\widehat\sigma_\sys(\widehat \tau) = \abstracts(s')$, which is a successor of $\last(\widehat\tau)$ by (i).

Now, let $\widehat \pi\in\plays_{\widehat G^\uparrow}(v_0,\widehat\sigma_\sys)$.
If $\widehat\pi$ is an infinte play,
using (ii) and the definition of $\widehat\sigma_\sys$, we can inductively show that there exists a play $\pi\in\plays_{\sema{\rpg}}(s_0,\sigma_\sys)$ s.t.\ $\abstracts(\pi) = \widehat \pi$.
Then, by definition, $\loc(\widehat\pi) = \loc(\pi)$. 
As $\pi\in\Omega$, we have $\widehat\pi\in \abstracts(\Omega) = \widehat\Omega$.
Alternatively, if $\widehat\pi$ is finite, $\last(\widehat\pi)\in V_\env$ as there is no dead-ends in $V_\sys$.
Furthermore, for safety objectives, we can show that there exists a play in $\plays_{\sema{\rpg}}(s_0,\sigma_\sys)$ with prefix $\pi$ s.t.\ $\abstracts(\pi) = \widehat \pi$. 
As $\pi$ is a prefix of a play in $\Omega$, $\widehat\pi$ is also a prefix of a play in $\widehat\Omega$.
Hence, in any case, $\widehat\pi$ is winning for player $\sys$.\\

\smallskip
\noindent
\textit{Proof of  (2).}
Let $\widehat W = W_{\sys}(\widehat G^\downarrow,\widehat \Omega)$, $v_0 \in \widehat W$, and let $s_0\in\concretize(v_0)$.
We will show that $s_0 \in W_{\sys}(\sema{\rpg},\Omega)$.
To this end, we will define a strategy $\sigma_\sys$ for player~$\sys$ in $\sema{\rpg}$ such that every $\pi \in \plays_{\sema{\rpg}}(s_0,\sigma_\sys)$ is winning for player $\sys$. 

First observe that by definition of $\rho^\downarrow$, the following hold for every states $s\in\states$ and $v\in V$ with $\abstracts(s) = v$: 
(i) for every $(s,s')\in \rho \cap (\states_\env\times\states_\sys)$, we have $(v,\abstracts(s'))\in \widehat\rho^\downarrow \cap (V_\env \times  V_\sys)$;
(ii) for every $(v,v')\in \widehat\rho^\downarrow \cap (V_\sys \times  V_\env)$, there exists $s'\in \states_\env$ with $\abstracts(s') = v'$ and $(s,s')\in \rho \cap (\states_\sys\times\states_\env)$.
Furthermore, as $v_0 \in \widehat W$, there exists a strategy $\widehat \sigma_\sys$ for player~$\sys$ in $\widehat G^\downarrow$ such that every $\widehat \pi \in \plays_{\widehat G^\downarrow}(v_0,\widehat\sigma_\sys)$ is winning for player $\sys$.

With this, we define $\sigma_\sys: \states^*\cdot \states_\sys \to \states$ as follows.
Let $\tau\in \states^*\cdot \states_\sys$.
If $\widehat \sigma_\sys$ is not defined for $\abstracts(\tau)$, then we set $\sigma_\sys(\tau)$ to an arbitrary successor of $\last(\tau)$. 
Otherwise, for $\widehat\sigma_\sys (\abstracts(\tau)) = v$, by (ii), there exists $s$ with $\abstracts(s) = v$ and $(\last(\tau),s)\in \rho \cap (\states_\sys\times\states_\env)$.
For such cases, fix one such $s$ and set $\sigma_\sys(\tau) = s$.

Now, let $\pi\in \plays_{\sema{\rpg}}(s_0,\sigma_\sys)$.
As there is no dead-end in $\sema{\rpg}$, $\pi$ is an infinite play.
Then, using (i) and the definition of $\sigma_\sys$, we can inductively show $\widehat\pi = \abstracts(\pi)$ is an infinite play in $\plays_{\widehat G^\downarrow}(v_0,\widehat\sigma_\sys)$.
Then, by definition, $\loc(\widehat\pi) = \loc(\pi)$. 
As $\widehat\pi\in\widehat\Omega$, we have $\pi\in\Omega$, and hence, $\pi$ is winning for player $\sys$.\qed
\end{proof}

\newpage

\section{Pruning of the Winning Regions in the Abstract Game}\label{app:pruning}
Let $p \in \{\sys,\env\}$ be a player and $d \in \symstates$.
We now define a reactive program game $\prune(\rpg,\Omega, d, p)$ obtained from $(\rpg,\Omega)$ by redirecting all transitions from the states in $\sema{d}$ to a sink location for player $p$.
Intuitively, we \emph{prune} the existing transitions originating in states in $d$.  We do this in two steps.

First, we augment the reactive program game with two \emph{sink locations} $\sink_\sys$ and $\sink_\env$, one for each player,  and modify the objective $\Omega$ such that, intuitively,  states with location $\sink_p$ will be losing for player $p$. 
We define $(\rpg_\sink,\Omega_\sink)$ as follows. 
Let $\rpg_\sink := (T, \inputs, \cells, L\uplus\{\sink_\sys,\sink_\env\}, \Inv', \delta')$ where
\begin{itemize}
\item $\Inv'(l) := \Inv(l)$ if $l \in L$ and $\Inv'(l) = \top$ otherwise, and
\item $\delta' := \delta \uplus \{(\sink_\sys,\top,\lambda x.\;x,\sink_\sys),(\sink_\env,\top,\lambda x.\;x,\sink_\env)\}$.
\end{itemize}
Let $\states'$ be the set of states of $\rpg_\sink$.  We extend the objective $\Omega$ to $\Omega_\sink$ for player $\sys$ based on the different possible types of objectives $\Omega$.

\smallskip
\noindent
\textbf{Case} $\Omega = \mathit{Safety}(S)$ for some $S \subseteq L$.
$$\Omega_\sink := \Omega \cup \{\pi\in {\states'}^\omega \mid \exists n \in \Nat. ~\loc(\pi[n])=\sink_\env\land \forall m <n. ~\loc(\pi[m])\in S\}.$$
Here, the new objective also includes the plays that reach the location $\sink_\env$ while staying within the safe set $S$.

\smallskip
\noindent
\textbf{Case} $\Omega = \mathit{Reach}(R)$ for some $R \subseteq L$.
$$\Omega_\sink := \Omega \cup \{\pi\in {\states'}^\omega \mid \exists n \in \Nat. ~\loc(\pi[n])\in\{\sink_\env\} \cup R \land \forall m <n. ~\loc(\pi[m])\in L\}.$$
Here, the new objective includes all the plays that reach $\sink_\env$ or $R$ without visiting $\sink_\sys$.

\smallskip
\noindent
\textbf{Case} $\Omega$ is prefix-independent.
$$\Omega_\sink := \Omega \cup \{\pi\in {\states'}^\omega \mid \exists n \in \Nat. ~\loc(\pi[n])=\sink_\env \land \forall m <n. ~\loc(\pi[m])\in L\}.$$
Here, the new objective also includes the plays that reach location $\sink_\env$ without visiting $\sink_\sys$.

Now,  we construct the reactive program game $\prune(\rpg, \Omega,d, p)$ from $(\rpg_{\sink},\Omega_\sink)$ by redirecting all transitions from $d$ to the location $\sink_p$. Formally,  we let
$\prune(\rpg, \Omega,d, p) := ((T, \inputs, \cells, L\uplus\{\sink_\sys,\sink_\env\}, \Inv', \delta''),\Omega_\sink)$ where 
    \[ \delta'' := \{ (l, g \land \lnot d(l), u, l')\mid (l, g, u, l') \in \delta \} \cup  \{ (l,  d(l),\lambda x.\;x, \sink_{p})\mid l \in L, d(l) \not\equiv_{T}\bot\} . 
    \]

The following lemma formalizes the desired property of the pruning: Pruning states that are losing for player $p$ does not 
change the winning region.

\begin{lemma}[Correctness of Pruning]\label{lemma:pruning}
If $\sema{d} \subseteq W_{1-p}(\rpg, \Omega)$,  then
    \[ W_p(\rpg, \Omega) = W_p(\prune(\rpg,\Omega, d, p))\cap \states.\]
\end{lemma}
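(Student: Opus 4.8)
The plan is to prove the two inclusions of $W_p(\rpg,\Omega) = W_p(\prune(\rpg,\Omega,d,p))\cap\states$ separately, using that both games have Borel objectives and are therefore determined (so $W_\sys$ and $W_\env$ partition the state space of either game), together with the standard fact that player $q$ has a uniform winning strategy from $W_q(\rpg,\Omega)$ all of whose consistent plays stay inside $W_q(\rpg,\Omega)$. Before that I would note that $\prune(\rpg,\Omega,d,p)$ is again a reactive program game structure (the guards at each $l$ still partition $\assignments{\cells\cup\inputs}$, since $\bigvee_g(g\land\lnot d(l))\lor d(l)\equiv\top$, etc.), and record three observations. (a) On $\states\setminus\sema{d}$ the transitions of $\prune(\rpg,\Omega,d,p)$ coincide with those of $\rpg$ (for a system state $((l,\assmt{x}),\assmt{i})$ whose $\assmt{x}$ does not satisfy $d(l)$ the guard $g\land\lnot d(l)$ holds iff $g$ does and $d(l)$ does not; input moves are untouched). (b) From an environment state $s=(l,\assmt{x})\in\sema{d}$, after any input choice the only available system move is $(l,d(l),\lambda x.x,\sink_p)$, so every play from $s$ reaches $(\sink_p,\assmt{x})$ in two steps and then loops in $\sink_p$; in particular the sink $\sink_{1-p}$ is unreachable and $\sink_p$ is reachable only through $\sema{d}$. (c) For plays $\pi$ staying in $\states$, $\pi\in\Omega_\sink$ iff $\pi\in\Omega$, because in each of the three cases the set added to $\Omega$ either requires visiting $\sink_\env$, or (for reachability) is already contained in $\Omega$.

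For $W_p(\rpg,\Omega)\subseteq W_p(\prune(\rpg,\Omega,d,p))$, I take $s\in W_p(\rpg,\Omega)$ with a uniform winning strategy $\sigma_p$. Its consistent plays from $s$ stay in $W_p(\rpg,\Omega)$, which by the hypothesis $\sema{d}\subseteq W_{1-p}(\rpg,\Omega)$ and determinacy is disjoint from $\sema{d}$; hence they never enter $\sema{d}$. By (a) these are also plays of $\prune(\rpg,\Omega,d,p)$, so $\sigma_p$ induces a strategy there with the same consistent plays from $s$; these stay in $\states$ and lie in $\Omega$, hence in $\Omega_\sink$ by (c). Thus $s\in W_p(\prune(\rpg,\Omega,d,p))$.

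For $W_p(\prune(\rpg,\Omega,d,p))\cap\states\subseteq W_p(\rpg,\Omega)$, I argue by contradiction: let $s\in W_p(\prune(\rpg,\Omega,d,p))\cap\states$ and suppose $s\notin W_p(\rpg,\Omega)$; by determinacy $s\in W_{1-p}(\rpg,\Omega)$ with a winning strategy $\sigma_{1-p}$. I transfer it to $\tilde\sigma_{1-p}$ on $\prune(\rpg,\Omega,d,p)$, following $\sigma_{1-p}$ while the prefix stays in $\states\setminus\sema{d}$ and acting arbitrarily once $\sema{d}$ (and hence, by (b), a sink) has been reached. Any play $\pi$ from $s$ consistent with $\tilde\sigma_{1-p}$ either stays forever in $\states\setminus\sema{d}$ --- then by (a) it equals a play consistent with $\sigma_{1-p}$ in $\rpg$, winning for $1-p$ in $\Omega$ and, being over $\states$, also in $\Omega_\sink$ by (c) --- or it reaches $\sema{d}$ at some first step $n$ and, by (b), reaches $\sink_p$ at step $n+2$ and loops there. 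In the latter case: if $p=\sys$, then $\pi$ visits $\sink_\sys$ and never $\sink_\env$, so for safety or a prefix-independent objective $\pi\notin\Omega_\sink$ at once, and for reachability the prefix $\pi[0,n]$, being a prefix of a play consistent with $\sigma_\env$, never visits $R$, so again $\pi\notin\Omega_\sink$; either way $\pi$ is winning for $1-p=\env$. If $p=\env$, then $\pi$ reaches $\sink_\env$ with all earlier locations in $L$, which puts it in $\Omega_\sink$ for reachability and prefix-independent objectives, and for safety the prefix $\pi[0,n+1]$ has all its locations in $S$ (since $\pi[0,n]$ is a prefix of a play consistent with $\sigma_\sys$, which stays in $S$, and $\loc(\pi[n+1])=\loc(\pi[n])$), so $\pi\in\Omega_\sink$ again; either way $\pi$ is winning for $1-p=\sys$. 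Hence $\tilde\sigma_{1-p}$ is winning for $1-p$ from $s$, so $s\in W_{1-p}(\prune(\rpg,\Omega,d,p))$, contradicting $s\in W_p(\prune(\rpg,\Omega,d,p))$. Therefore $s\in W_p(\rpg,\Omega)$, completing the proof.

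The main obstacle is the case analysis in the second inclusion: I must ensure that when a transferred play of player $1-p$ is forced into $\sink_p$, the finite prefix it has produced meets exactly the side condition under which reaching $\sink_p$ is winning for $1-p$ in $\Omega_\sink$ (staying within $S$ for safety, and not having visited $R$ before $\sink_\sys$ for reachability). This is precisely where consistency of that prefix with a \emph{concrete} winning strategy of $1-p$ is used, and hence where Borel determinacy enters; the bookkeeping needed to justify observations (a) and (b) from the definition of $\delta''$ is routine but must be done carefully.
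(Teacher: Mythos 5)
Your overall architecture matches the paper's proof: both inclusions are established by transferring strategies between $\rpg$ and the pruned game, using that the transition relations agree outside $\sema{d}$, that $\sink_{1-p}$ is unreachable, and a per-objective analysis of $\Omega_\sink$. Your second inclusion is argued contrapositively (pushing the opponent's winning strategy \emph{into} the pruned game, where the paper instead pulls $p$'s winning strategy \emph{out} of it), and its case analysis is carried out correctly. However, the first inclusion rests on a false general fact. You claim that player $p$ has a winning strategy from $s$ all of whose consistent plays stay inside $W_p(\rpg,\Omega)$, and deduce that they never enter $\sema{d}$. This is true for prefix-independent objectives and for $\sys$ in a safety game, but it fails for reachability-type objectives: once a play consistent with a winning strategy for $\reach(R)$ has visited $R$, nothing prevents it from subsequently wandering into $W_{1-p}$, and in particular into $\sema{d}$ (the same problem arises for $p=\env$ in a safety game, where $\env$'s objective is of reachability type). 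Since the lemma and the definition of $\prune$ explicitly cover reachability, your argument has a hole there: the transferred play can in fact be diverted to $\sink_p$.

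The hole is local and easily repaired, and the paper's proof shows exactly how: in its analysis of a consistent play $\pi$ that exits $W$ at the first position $k$, it splits on whether $\pi[j]$ visits $R$ for some $j\le k$. If not, the usual ``switch to the opponent's winning strategy at $\pi[k]$'' argument contradicts $\sigma_p$ being winning; if so, the prefix up to the diversion has already visited $R$ while remaining in $\states$, so the pruned-game play lies in $\Omega_\sink$ despite ending in $\sink_p$. You need to add this case split (and the symmetric one for $p=\env$ against a safety objective) to make your first inclusion cover all objective types admitted by the lemma; the rest of your proof stands.
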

\begin{proof}
	Let $\prune(\rpg,\Omega, d, p) = (\rpg',\Omega')$ with 
	semantics 
	$\sema{\rpg'} = (\states',\states_\env',\states_\sys',\rho')$.
	Furthermore, let $W = W_p(\rpg, \Omega)$ and $W' = W_p(\rpg', \Omega')$.
	As $\sema{\rpg}$ (resp. $\sema{\rpg'}$) have no dead-end, a play $\pi$ is winning for Player $\sys$ iff $\pi\in\Omega$ (resp. $\pi\in\Omega'$).
	Hence, let us denote the objectives for $\sys$ and $\env$ as $\Omega_\sys = \Omega$ and $\Omega_\env = \states^\omega\setminus\Omega$, respectively in $\rpg$. $\Omega_\sys'$ and $\Omega_\env'$ are defined analogously.
	Now, we will show both direction of $W = W'\cap \states$ separately.

	\smallskip
	\noindent $(\subseteq)$ 
	Let $s_0\in W$, then there exists a strategy $\sigma_p$ for Player $p$ in $\sema{\rpg}$ such that $\plays_{\sema{\rpg}}(s_0,\sigma_p)\subseteq \Omega_p$.
	To show that $s_0\in W'$, we will define a strategy $\sigma_p'$ for Player $p$ in $\sema{\rpg'}$ such that $\plays_{\sema{\rpg'}}(s_0,\sigma_p')\subseteq \Omega_p'$.

	First observe that for every $s = ((l, \assmt{x}),\assmt{i})\in W\cap \states_\sys$, we have $s\not\in\sema{d}$ and hence, $\assmt{x}\uplus\assmt{i}\FOLentailsT{T} \neg d(l)$. 
	So, it holds that
	\begin{align*}
		&(((l,\assmt{x}), \assmt{i}),(l',\assmt{x}')) \in \rho \\
		&\iff \exists (g,u) \in \Labels(l,l').~ \assmt{x}\uplus\assmt{i} \FOLentailsT{T} g, \assmt{x}'(x) =  \eval{\assmt{x}\uplus\assmt{i}}(u(x))\\
		&\iff \exists (g,u) \in \Labels(l,l').~ \assmt{x}\uplus\assmt{i} \FOLentailsT{T} g\wedge\neg d(l), \assmt{x}'(x) =  \eval{\assmt{x}\uplus\assmt{i}}(u(x))\\
		&\iff (((l,\assmt{x}), \assmt{i}),(l',\assmt{x}')) \in \rho'.
	\end{align*}
	Furthermore, by definition, for every $s\in W\cap\states_\env$, $(s,(s,\assmt{i})) \in \rho$ iff $(s,(s,\assmt{i})) \in \rho'$.
	Therefore, $\rho \cap (W\times \states) = \rho' \cap (W\times \states)$.

	With this, we define $\sigma_p': (\states')^*\cdot \states_p' \to \states$ as follows.
	Let $\tau\in (\states')^*\cdot \states_p'$.
	If $\tau\in W^*$, then we define $\sigma_p'(\tau) = \sigma_p(\tau)$, else we set $\sigma_p'(\tau)$ to an arbitrary successor of $\last(\tau)$.

	Now, let $\pi\in\plays_{\sema{\rpg'}}(s_0,\sigma_p')$.
	By definition, $\pi$ is an infinite play.
	If $\pi\in W^\omega$, then, by definition, $\pi\in\plays_{\sema{\rpg}}(s_0,\sigma_p)$ and hence, $\pi\in\Omega_p\subseteq\Omega_p'$.

	Now, assume $\pi\not\in W^\omega$, then there exists $k\in\mathbb{N}$ s.t.\ $\pi[k]\not\in W$ and $\pi[j]\in W$ for all $j < k$.
	As $\rho \cap (W\times \states) = \rho' \cap (W\times \states)$, $\pi[0,k]$ is also a prefix of a play in $\plays_{\sema{\rpg}}(s_0,\sigma_p)$.
	Since $\pi[k]\not\in W$, Player $1-p$ has a strategy $\sigma_{1-p}$ such that $\plays_{\sema{\rpg}}(\pi[k],\sigma_{1-p})\subseteq \Omega_{1-p}$. 
	For safety objective or any prefix-independent objective $\Omega_p$,
	if Player $p$ uses a strategy that is consistent with $\pi$ until $\pi[k]$ and then switches to $\sigma_{1-p}$, then this gives us an infinite play $\pi'\in \plays_{\sema{\rpg}}(s_0,\sigma_p)\cap \Omega_{1-p}$. This is a contradiction to the assumption that $\plays_{\sema{\rpg}}(s_0,\sigma_p)\subseteq\Omega_p$.
	Similarly, for reachability objective $\Omega_p = \reach(R)$, if $\pi[j]\not\in R$ for all $j\leq k$, then the same argument as above gives us a contradiction.
	If there exists $j\leq k$ s.t.\ $\pi[j]\in R$, then $\pi\in\Omega_p'$ by definition.

	\smallskip
	\noindent $(\supseteq)$ 
	Let $s_0\in W'\cap\states$, then there exists a strategy $\sigma_p'$ for Player $p$ in $\sema{\rpg'}$ such that $\plays_{\sema{\rpg'}}(s_0,\sigma_p')\subseteq \Omega_p'$.
	To show that $s_0\in W$, we will define a strategy $\sigma_p$ for Player $p$ in $\sema{\rpg}$ such that $\plays_{\sema{\rpg}}(s_0,\sigma_p)\subseteq \Omega_p$.

	We define $\sigma_p: \states^*\cdot \states_p \to \states$ as follows.
	Let $\tau\in \states^*\cdot \states_p$.
	If $\tau\in \states^*$ and $\sigma_p'(\tau)\in\states$, then we define $\sigma_p(\tau) = \sigma_p'(\tau)$, else we set $\sigma_p(\tau)$ to an arbitrary successor of $\last(\tau)$.

	Now, let $\pi\in\plays_{\sema{\rpg}}(s_0,\sigma_p)$.
	If $\pi\in\plays_{\sema{\rpg'}}(s_0,\sigma_p')$, then $\pi\in\Omega_p'\cap \states^\omega = \Omega_p$.
	Otherwise, there exists prefix $\tau$ of $\pi$ s.t.\ $\sigma_p'(\tau) \in \{\sink_p,\sink_{1-p}\}$ and $\tau\sigma_p'(\tau)$ is a prefix of a play in $\plays_{\sema{\rpg'}}(s_0,\sigma_p')$.
	By construction, there is no transition from $\states\cup\{\sink_p\}$ to $\sink_{1-p}$, hence, $\sigma_p'(\tau) = \sink_p$.
	Furthermore, by construction, there is no play in $\Omega_p'$ that visits state $\sink_p$.
	This leads to a contradiction to the assumption that $\plays_{\sema{\rpg'}}(s_0,\sigma_p')\subseteq\Omega_p'$.\qed
\end{proof}

The next statement follows directly from \Cref{lemma:pruning} and \Cref{lemma:abstraction} and allows us to soundly prune the states determined to be losing for the over-approximated player $\overapproxp(\widehat{G}^\circ)$  in an abstract game $\widehat{G}^\circ$.
 
\begin{corollary}\label{corollary:pruning}
Let $\widehat{G}^\circ$ for some $\circ\in\{\uparrow,\downarrow\}$ be a 
$(\predss,\predsi)$-induced abstraction of  $\rpg$ and let
$p = \overapproxp(\widehat{G}^\circ)$.
If $d \in \symstates$ is such that 
$\sema{d} \subseteq \concretize(W_{1-p}(\widehat{G}^\circ,\widehat \Omega))$, 
then 
$W_p(\rpg, \Omega) = W_p(\prune(\rpg,\Omega, d, p))\cap \states$.
\end{corollary}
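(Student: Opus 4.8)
The plan is to deduce the corollary from \Cref{lemma:pruning}, whose only hypothesis is $\sema{d} \subseteq W_{1-p}(\rpg,\Omega)$ (recall that $W_{1-p}(\rpg,\Omega)$ abbreviates $W_{1-p}(\sema{\rpg},\Omega)$). Since we are given $\sema{d} \subseteq \concretize(W_{1-p}(\widehat{G}^\circ,\widehat\Omega))$, it suffices to establish the inclusion
\[
  \concretize\big(W_{1-p}(\widehat{G}^\circ,\widehat\Omega)\big) \;\subseteq\; W_{1-p}(\sema{\rpg},\Omega),
\]
i.e. that $\widehat{G}^\circ$ \emph{under}-approximates the winning region of the player $1-p$ whose region it does \emph{not} over-approximate. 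I would prove this inclusion by a case split on $\circ$.

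If $\circ = \downarrow$, then $p = \overapproxp(\widehat{G}^\downarrow) = \env$ and $1-p = \sys$, so the desired inclusion is exactly statement~(2) of \Cref{lemma:abstraction}. If $\circ = \uparrow$, then $p = \sys$ and $1-p = \env$, and I would obtain the inclusion by complementing statement~(1) of \Cref{lemma:abstraction}. Starting from $W_\sys(\sema{\rpg},\Omega) \subseteq \concretize(W_\sys(\widehat{G}^\uparrow,\widehat\Omega))$ and taking complements inside $\states$ yields $\states \setminus \concretize(W_\sys(\widehat{G}^\uparrow,\widehat\Omega)) \subseteq \states \setminus W_\sys(\sema{\rpg},\Omega)$. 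On the right, $\states \setminus W_\sys(\sema{\rpg},\Omega) = W_\env(\sema{\rpg},\Omega)$ by determinacy of the concrete game (location-based objectives of the considered types are $\omega$-regular, hence Borel, so Martin's theorem makes the two winning regions partition $\states$). On the left, the conditions of \Cref{def:abstraction-domain} force $\abstracts$ to be total and to map each state to a singleton, so $\{\concretize(v)\}_{v \in V}$ partitions $\states$; hence $\states \setminus \concretize(A) = \concretize(V \setminus A)$ for every $A \subseteq V$, and applying this with $A = W_\sys(\widehat{G}^\uparrow,\widehat\Omega)$ together with determinacy of the \emph{finite} game $\widehat{G}^\uparrow$ gives $\states \setminus \concretize(W_\sys(\widehat{G}^\uparrow,\widehat\Omega)) = \concretize(W_\env(\widehat{G}^\uparrow,\widehat\Omega))$. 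Combining both sides, $\concretize(W_\env(\widehat{G}^\uparrow,\widehat\Omega)) \subseteq W_\env(\sema{\rpg},\Omega)$, as required.

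In either case we conclude $\sema{d} \subseteq \concretize(W_{1-p}(\widehat{G}^\circ,\widehat\Omega)) \subseteq W_{1-p}(\sema{\rpg},\Omega) = W_{1-p}(\rpg,\Omega)$, so \Cref{lemma:pruning} applies directly and gives $W_p(\rpg,\Omega) = W_p(\prune(\rpg,\Omega,d,p)) \cap \states$, which is the claim. I expect the only non-routine point to be the $\circ = \uparrow$ case: it is not literally one of the two statements of \Cref{lemma:abstraction}, and obtaining it requires the dualization via complements, hence an appeal to determinacy of both the abstract and the concrete game, plus the observation that concretization commutes with complementation thanks to the partitioning property of $\langle\abstracts,\concretize\rangle$. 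One could instead invoke the dual of statement~(2) of \Cref{lemma:abstraction} — that $\widehat{G}^\uparrow$ under-approximates $W_\env$, which holds by symmetry of the abstraction construction under swapping the two players — but this rests on the same considerations. The $\circ = \downarrow$ case and the final assembly are immediate.
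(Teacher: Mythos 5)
Your proposal is correct and follows exactly the route the paper intends: the paper gives no explicit proof, stating only that the corollary ``follows directly from'' \Cref{lemma:pruning} and \Cref{lemma:abstraction}, and your argument is precisely that derivation, with the only non-trivial step (the $\circ=\uparrow$ case, obtained by dualizing statement~(1) of \Cref{lemma:abstraction} via determinacy of both games and the fact that $\concretize$ commutes with complementation because the abstract domain partitions $\states$) spelled out carefully --- this matches the paper's own remark that the dual approximation holds ``by duality.''
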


\Cref{alg:rpg-enhanced-prune} shows the version \textsc{RPGPruneCacheSolve} of \Cref{alg:rpg-enhanced} extended with pruning.
\textsc{SolveAbstractWR} works in the same way as \textsc{SolveAbstract}, but additionally returns the winning region for the opponent.
The correctness of \Cref{alg:rpg-enhanced-prune} follows from \Cref{corollary:pruning} and \Cref{thm:method-correctness}.

\begin{theorem}[Correctness of \Cref{alg:rpg-enhanced-prune}] \label{thm:prune-cache-method-correctness}
Given a reactive program game structure $\rpg$ and a location-based objective $\Omega$,  for any $b \in \Nat$, if  \textsc{RPGPruneCacheSolve} terminates, then it returns $W_{\sys}(\sema{\rpg},\Omega).$
\end{theorem}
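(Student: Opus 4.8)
The plan is to deduce the statement from two results already established: the soundness of pruning (\Cref{corollary:pruning}, itself \Cref{lemma:pruning} combined with the abstraction soundness of \Cref{lemma:abstraction}) and the correctness of the cache-based solver on an \emph{arbitrary} reactive program game with a location-based objective (\Cref{thm:method-correctness}). First I would isolate how \textsc{RPGPruneCacheSolve} differs from \textsc{RPGCacheSolve}: it replaces \textsc{SolveAbstract} by \textsc{SolveAbstractWR}, and uses the additionally returned opponent winning regions to prune the concrete game before generating the cache and calling \textsc{RPGSolveWithCache}. For $\circ=\uparrow$ (over-approximated player $\sys$) it prunes with some $d_\uparrow\in\symstates$ with $\sema{d_\uparrow}=\concretize(W_\env(\widehat G^\uparrow,\widehat\Omega))$, and for $\circ=\downarrow$ (over-approximated player $\env$) with some $d_\downarrow\in\symstates$ with $\sema{d_\downarrow}=\concretize(W_\sys(\widehat G^\downarrow,\widehat\Omega))$; both sets are finite unions of concretizations of abstract states, hence representable in $\symstates$. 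Let $(\rpg',\Omega')$ be the game after both prunings; since the pruning construction keeps the objective location-based (over the augmented set of locations), \Cref{thm:method-correctness} still applies to $(\rpg',\Omega')$.

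Second, I would show $W_\sys(\sema{\rpg},\Omega)=W_\sys(\sema{\rpg'},\Omega')\cap\states$ by chaining the two pruning steps. Write $\rpg=\rpg^{(0)},\rpg^{(1)},\rpg^{(2)}=\rpg'$ for the successive games with objectives $\Omega^{(0)},\Omega^{(1)},\Omega^{(2)}$ and state spaces $\states^{(0)}\subseteq\states^{(1)}\subseteq\states^{(2)}$. The $\uparrow$-step is \Cref{corollary:pruning} with $p=\sys$ (its hypothesis holds by the choice of $d_\uparrow$), giving $W_\sys(\rpg,\Omega)=W_\sys(\rpg^{(1)},\Omega^{(1)})\cap\states$. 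For the $\downarrow$-step, \Cref{lemma:abstraction}(2) yields $\sema{d_\downarrow}=\concretize(W_\sys(\widehat G^\downarrow,\widehat\Omega))\subseteq W_\sys(\rpg,\Omega)$, and since $\sema{d_\downarrow}\subseteq\states$, combining with the previous equality gives $\sema{d_\downarrow}\subseteq W_\sys(\rpg^{(1)},\Omega^{(1)})$; hence \Cref{lemma:pruning} with $p=\env$ applies to $(\rpg^{(1)},\Omega^{(1)})$ and gives $W_\env(\rpg^{(1)},\Omega^{(1)})=W_\env(\rpg^{(2)},\Omega^{(2)})\cap\states^{(1)}$. Because reactive program games are determined (their semantic graphs have no dead-ends and carry $\omega$-regular, hence Borel, objectives — determinacy is already used in the proof of \Cref{lemma:pruning}), complementing within $\states^{(1)}$ resp. $\states^{(2)}$ transfers this to $W_\sys(\rpg^{(1)},\Omega^{(1)})=W_\sys(\rpg^{(2)},\Omega^{(2)})\cap\states^{(1)}$. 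Intersecting with $\states\subseteq\states^{(1)}$ and composing with the $\uparrow$-equality yields $W_\sys(\sema{\rpg},\Omega)=W_\sys(\sema{\rpg'},\Omega')\cap\states$.

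Finally, the part of \textsc{RPGPruneCacheSolve} executed after pruning is (an execution of) \textsc{RPGCacheSolve} on $(\rpg',\Omega')$: it builds an abstraction, computes strategy templates, generates an attractor cache \emph{over $\rpg'$} — valid by \Cref{cor:cache-generation} regardless of which abstraction and templates are used to select the sub-games — and runs \textsc{RPGSolveWithCache} on $\rpg'$, which by \Cref{lem:cache-utilization} stays correct for any attractor cache. Since \textsc{RPGPruneCacheSolve} terminates by assumption, so does this post-pruning part, and \Cref{thm:method-correctness} (equivalently, the reasoning that establishes it) gives that it returns $W_\sys(\sema{\rpg'},\Omega')$; the algorithm reports exactly the original-game states in this set, so by the displayed equality it returns $W_\sys(\sema{\rpg},\Omega)$. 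The main obstacle I expect is the bookkeeping around the \emph{chained} prune: justifying that the inclusion needed for the second prune survives the first one (handled above via $\states\subseteq\states^{(1)}$ together with the $\uparrow$-equality), and bridging from the $W_\env$-preservation that \Cref{lemma:pruning} directly delivers for the $\env$-prune to the $W_\sys$-preservation the theorem requires, which is where determinacy is invoked. Once $(\rpg',\Omega')$ is viewed as just another reactive program game, the rest is a direct appeal to \Cref{thm:method-correctness}.
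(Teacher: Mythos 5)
Your proposal is correct and follows essentially the same route as the paper, which justifies \Cref{thm:prune-cache-method-correctness} in a single sentence by appeal to \Cref{corollary:pruning} and \Cref{thm:method-correctness}. In fact you supply more detail than the paper does — in particular, verifying that the hypothesis of \Cref{lemma:pruning} for the second ($\env$-side) prune holds relative to the \emph{already-pruned} game rather than the original one, and making explicit the determinacy step needed to pass from the $W_\env$-preservation that \Cref{lemma:pruning} delivers to the $W_\sys$-preservation the theorem asserts — both of which are genuine bookkeeping obligations the paper leaves implicit.
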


\begin{algorithm}[t!]
\SetAlgoVlined
    \SetKwProg{Fn}{function}{}{}
    \DontPrintSemicolon
    \Fn{\textsc{RPGPruneCacheSolve}(
		 $\mathcal G = (T,\inputs, \cells, L, \Inv,\delta)$,  
    	 $\Omega$, $b \in \Nat$)}{
	     $(\predss,\predsi):= \mathsf{AbstractDomain}(\rpg)$\;
		 $(\widehat G^\uparrow,\widehat G^\downarrow) := \mathsf{AbstractRPG}(\rpg,(\predss,\predsi))$\;
		 $(\safegroup_\sys,\colivegroup_\sys,\livegroup_\sys,\widehat W_\env) :=\textsc{SolveAbstractWR}(\widehat G^\uparrow,\Omega)$\;
		 $(\safegroup_\env,\colivegroup_\env,\livegroup_\env,\widehat W_\sys) :=\textsc{SolveAbstractWR}(\widehat G^\downarrow,\Omega)$\;		
		 $(\rpg',\Omega') := \prune(\rpg,\Omega,\widehat W_\env,\sys)$\;
		 $(\rpg'',\Omega'') := \prune(\rpg',\Omega',\widehat W_\sys,\env)$\;
		 $C_\sys := \textsc{GenerateCache}(\rpg'', \widehat{G}^\uparrow, \sys, (\safegroup_\sys,\colivegroup_\sys,\livegroup_\sys), b )$\;
		 $C_\env := \textsc{GenerateCache}(\rpg'', \widehat{G}^\downarrow, \env, (\safegroup_\env,\colivegroup_\env,\livegroup_\env), b)$\;				
\Return \textsc{RPGSolveWithCache}($\rpg'', C_\sys \cup C_\env$)
	}
\caption{Procedure for solving reactive program games enhanced with abstraction-based pruning and abstract template-based caching.}
\label{alg:rpg-enhanced-prune}
\end{algorithm}

\newpage

\section{Benchmarks}\label{app:benchmarks}
All our benchmarks are modeled as reactive program games with Büchi objectives for the system player. We describe all our benchmarks in detail below.

\paragraph*{Scheduler.}
This benchmark outlines two primary tasks for the system: a global task and one repeated task. Initially, three program variables, namely $taskG$, $taskR$, and $boundG$, are set to specific values determined by the environment, where the system can choose to set $taskR$ as the negation of the value specified by the environment.
The system undergoes four distinct stages. At each stage, if $taskR > 0$, it proceeds to the next stage; otherwise, it has the option to increment either of the variables $taskG$ or $taskR$.
For the repeated task,  the system must attain the state $taskR > 0$ in every stage, whereas for the global task, the system must eventually ensure $taskG > boundG$. 
Crucially, the system must strategically ensure that there is at least one stage where $taskR \leq 0$. This specific condition allows the system to increment the variable $taskG$, thereby progressing toward fulfilling the global task. The entire process resets after the completion of the four stages.

\paragraph*{Item Processing.}
In this benchmark, the system processes a tray of items, manipulating variables such as $trayItems$ and $numItems\_s$ through different locations. Similar to the scheduler benchmark, these variables are initially set to the values determined by the environment. At each location, the system is presented with various operations it can perform on these variables. The primary goal of the system is to guarantee the condition $numItems\_s \geq trayItems$ in the $done$ location.
To meet this objective, the system must strategically select the appropriate options at each step.

\paragraph*{Chain Benchmarks.}
We have two categories of chain benchmarks, both designed with the objective $\buchi(\{goal\})$.
\begin{itemize}
    \item In \textbf{chain $k$}, we parameterize the number of locations and program variables based on the value $k$.
    Specifically, the benchmark has $k+3$ locations $\{int, goal, sink, l_1,l_2,\ldots,l_k\}$, $k+2$ program variables $\{y,c,x_1,x_2,\ldots,x_k\}$, and one input variable $i$. Starting from initial location $int$, it sets $c=0$ and goes to location $goal$. 
    From $goal$, if $c>0$, it goes to the $sink$ which only allows a self-loop. If $c\leq 0$, it sets $x_1 = i$, $y = i$ and goes to location $l_1$.
    From every location $l_j$ with $j < k$, if $x_j = 0$, then it can go to next location $l_{j+1}$ by setting $x_{j+1} = y$. 
    Alternatively, if $x_j\neq 0$, it can either loop on $l_j$ while incrementing/decrementing $x_j$ by $1$, or go to $goal$ by setting $c=i$. 
    The transitions from $l_k$ are similar with the exception that it directly goes to $goal$ if $x_k=0$.
    \item In \textbf{chain simple $k$}, we parametrize only the number of locations based on the value $k$.
    Similar to \textbf{chain k}, this benchmark has $k+3$ locations $\{int, goal, sink, l_1,l_2,\ldots,l_k\}$ and one input varibale $i$. However, it only has $3$ program variables $\{y,c,x\}$. The transition relation is similar to that of \textbf{chain k}, but the program variable $x$ is used instead of $x_1, x_2, \ldots, x_k$.
\end{itemize}

\paragraph*{Robot Benchmarks.}
These benchmarks describe a robot's movement along a one-dimensional discrete grid. The robot's position is determined by a program variable, denoted as $\mathit{trackPos}$ (for clarity, we use $\mathit{pos}$ in \Cref{fig:running}).
The robot can move either one step forward, incrementing the variable $\mathit{trackPos}$ by $1$, or one step backward, decrementing $\mathit{trackPos}$ by $1$.
\begin{itemize}
    \item In \textbf{robot analyze samples}, the robot starts from a designated location $\mathit{base}$.
    It then collects a number of samples (as specified by the environment) from another location. 
    Subsequently, it moves to another location $\mathit{lab}$, conducts tests on all collected samples, and finally returns to the $\mathit{base}$.
    \item In \textbf{robot repair}, the robot starts from a designated location $\mathit{base}$.
    If the device is deemed faulty (as determined by the environment) the robot gets a number of spare parts (as specified by the environment) from the stock. It proceeds to repair the device using all the spare parts and eventually returns to the $\mathit{base}$.
    \item In \textbf{robot collect samples}, the robot starts from a designated location $\mathit{base}$, collects a number of samples (as specified by the environment) from another location, and then returns to the $\mathit{base}$.
    \item In \textbf{robot deliver products $k$}, the robot is tasked with purchasing $k$ distinct products for the $\mathit{office}$. It follows an alternating pattern: choosing a product, storing the required quantity (as specified by the environment), proceeding to the $\mathit{bank}$ to withdraw the corresponding amount of money, purchasing the specified products from the $\mathit{store}$, and finally returning to the $\mathit{office}$.
\end{itemize}

\paragraph*{Smart Home Benchmarks.}
These benchmarks describe the dynamic adjustments made by a smart home in response to environmental inputs, involving temperature regulation, blinds adjustment, and lighting control. The smart home turns on/off lights by toggling a boolean program variable $\mathit{light}$ to true/false. The blinds' position is adjusted by incrementing or decrementing the program variable $\mathit{blinds}$ by $0.1$. Additionally, the temperature level is adjusted by incrementing or decrementing the program variable $\mathit{temperature}$ by $1.0$, influenced by an input variable $\mathit{disturbance}$.
\begin{itemize}
    \item In \textbf{smart home day not empty}, during the day and when the home is occupied (as determined by the environment), the smart home activates $\mathit{daymode}$, turns on the lights, raises the blinds, and raises the temperature to a range specified by the program variable $\mathit{minimum}$.
    \item In \textbf{smart home day warm}, the smart home also performs the tasks outlined in \textbf{smart home day not empty}. Additionally, if the environment indicates the home being too warm, it decrements the variable $\mathit{minimum}$ by $2.0$ and continues its routine.
    \item In \textbf{smart home day cold}, similar to \textbf{smart home day warm}, if the environment indicates the home being too cold, it increments the variable $\mathit{minimum}$ by $2.0$ and continues its routine.
    \item In \textbf{smart home day warm or cold}, the smart home combines the tasks from both \textbf{smart home day warm} and \textbf{smart home day cold}.
    \item In \textbf{smart home empty}, during the day and when the home is empty (as determined by the environment), the smart home activates $\mathit{daymode}$, turns off the lights, lowers the blinds, and lowers the temperature to a range specified by the program variable $\mathit{maximum}$.
    \item In \textbf{smart home night sleeping}, during the night, the smart home deactivates $\mathit{daymode}$ and lowers blinds. If the owner is sleeping (as determined by the environment), it turns off lights; otherwise, it turns them on. Subsequently, it raises the temperature to a range specified by the variable $\mathit{minimum}$.
    \item In \textbf{smart home night empty}, during the night, the smart home deactivates $\mathit{daymode}$ and lowers blinds. If the home is empty (as determined by the environment), it turns off the lights and lowers the temperature to a range specified by the program variable $\mathit{maximum}$.
    \item In \textbf{smart home nightmode}, the smart home combines the tasks from both \textbf{smart home night sleeping} and \textbf{smart home night empty}.
\end{itemize}

\end{document}